\documentclass[format=acmsmall]{acmart}

\usepackage{booktabs} 

\usepackage{hyperref}
\usepackage{ltl}
\usepackage{mathtools}
\usepackage{graphicx}

\newenvironment{proof-sketch}{\noindent{\scshape{Sketch of Proof.}}\hspace*{1em}}{\qed\bigskip}

\usepackage{multirow}
\usepackage[utf8]{inputenc}
\usepackage{graphicx}   
\usepackage{listings}
\usepackage{makecell}
\usepackage{url}    
\urldef{\mailsa}\path|{name.surname}@polimi.it| 

\usepackage{pdflscape}

\usepackage{tikz}
\usetikzlibrary{arrows,decorations.pathmorphing,backgrounds,positioning,fit,petri}
\usetikzlibrary{automata}
\usetikzlibrary{graphs}
\usetikzlibrary{shadows}
\usetikzlibrary{shapes,arrows}

\usepackage{xcolor}

\usepackage[inline]{enumitem}

\usepackage{todonotes}

\usepackage[framemethod=TikZ]{mdframed}
\makeatletter

\newcommand*{\intNum}{\ensuremath{\mathbb{Z}}}
\newcommand*{\naturalNum}{\ensuremath{\mathbb{N}}}
\newcommand*{\realNum}{\ensuremath{\mathbb{R}}}
\newcommand*{\partSet}{\wp}

\newcommand{\transitions}[1]{\ensuremath{T_{#1}}}
\newcommand{\netid}{\ensuremath{\mathcal{N}}}
\newcommand{\apvariable}{\ensuremath{AP_v}}
\newcommand{\clock}{\ensuremath{x}}
\newcommand{\clocks}{\ensuremath{X}}
\newcommand{\guardUniverse}{\ensuremath{\Gamma(\clocks)}}

\newcommand{\taNetwork}{\ensuremath{\mathcal{N}}}
\newcommand{\clockConstraint}{\ensuremath{\Gamma (\clocks)}}

\newcommand{\propositions}{\ensuremath{AP}}
\newcommand{\taid}{\ensuremath{\mathcal{A}}}
\newcommand{\ta}{$\taid=\langle AP,$ $X$, $Act_\tau,$ $Q,$ $q_0,$ $Inv,$ $L,$ $T \rangle$}
\newcommand{\tavar}{$\taid=\langle AP,$ $X$, $Act_\tau,$ $Int,$ $Q,$ $q_0,$ $v^0_{var},$ $Inv,$ $L,$ $T\rangle$}

\newcommand{\var}{n}
\newcommand{\variable}{\ensuremath{n}}

\newcommand{\constant}{\ensuremath{d}}
\newcommand{\transid}{\ensuremath{t}}

\newcommand{\timevalue}{\ensuremath{r}}

\newcommand{\variables}{\ensuremath{\mathit{Int}}}

\newcommand{\clockconstraint}{\ensuremath{\gamma}}
\newcommand{\variableconstraint}{\ensuremath{\xi}}
\newcommand{\varassignement}{\ensuremath{\mu}}
\newcommand{\nullevent}{\ensuremath{\tau}}
\newcommand{\actions}{\ensuremath{Act_\nullevent}}
\newcommand{\action}{\ensuremath{\alpha}}
\newcommand{\autstate}{\ensuremath{q}}

\newcommand{\transtavar}{\ensuremath{q \xrightarrow{\clockconstraint, \variableconstraint, \action, \resettedclocks, \varassignement} q^\prime}}
\newcommand{\conftranstavar}[1]{\ensuremath{l[ #1 ] \xrightarrow{\clockconstraint, \variableconstraint, \action, \resettedclocks, \varassignement} l^\prime[ #1 ]}}

\newcommand{\tanumber}{\ensuremath{K}}

\newcommand{\finalformula}{\ensuremath{\Phi_\mathit{sig}}}

\newcommand{\tracesymbol}{\ensuremath{\eta}}
\newcommand{\signalsymbol}{\ensuremath{M_\tracesymbol}}
\newcommand{\trace}{$(\loc_0,$ $v_{\mathrm{var},0},$ $v_0)  \xrightarrow{e_0} 
(\loc_1,$ $v_{\mathrm{var},1},$ $v_1)  \xrightarrow{e_1}
(\loc_{2},$ $v_{\mathrm{var},2},$ $v_{2}) \xrightarrow{e_{2}} \ldots$}

\newcommand{\rest}[1]{\ensuremath{\overleftarrow{#1}}}

\newcommand{\rclosed}[1]{\mathtt{edge}^{](}[#1]}

\newcommand{\proposition}{\ensuremath{p}}

\newcommand{\timeposition}{\ensuremath{i}}

\newcommand{\autindex}{\ensuremath{k}}
\newcommand{\autindextwo}{\ensuremath{h}}

\newcommand{\trans}{\ensuremath{t}}
\newcommand{\automatonIndex}{\ensuremath{k}}
\newcommand{\automatonIndextwo}{\ensuremath{h}}
\newcommand{\automatonIndexthree}{\ensuremath{g}}
\newcommand{\numberOfTA}{\ensuremath{K}}

\usepackage{changes}

\usepackage{subfigure}
\usepackage[T1]{fontenc}

\newboolean{showcomments}
\setboolean{showcomments}{true} 
\ifthenelse{\boolean{showcomments}}
{\newcommand{\nb}[2]{
  \fcolorbox{black}{yellow}{\bfseries\sffamily\scriptsize#1}
  {\sf\small$\blacktriangleright$\textit{#2}$\blacktriangleleft$}
 }
 
}
{\newcommand{\nb}[2]{}
 
}
\definechangesauthor[name={CM},color=red]{CM}
\definechangesauthor[name={MB},color=blue]{MB}
\definechangesauthor[name={PL},color=green]{PL}

\usepackage{setspace}

\newcommand{\setclockconstraint}{\ensuremath{\Gamma (X)}}

\newtheorem{remark}{Remark}
\newtheorem{txproposition}{Proposition}
\newcommand{\first}[1]{\mathbin{\text{\rotatebox[origin=c]{90}{$\multimapdot$}}}_{#1}}

\newcommand{\tr}{\mathtt{t}}
\newcommand{\loc}{\mathtt{l}}
\newcommand{\notr}{\natural}

\newcommand{\roland}{\textsc{Mitl$_{0,\infty}$BMC}}

\allowdisplaybreaks

\DeclareRobustCommand{\textblue}{\color{blue!70!black}}

\DeclareRobustCommand{\textblack}{\color{black}}

\newif\ifbothversions
\bothversionsfalse

\newif\ifextended
\extendedfalse
\extendedtrue

\newif\ifcoloredversions
\coloredversionsfalse

\ifbothversions
	\newcommand{\extended}[1]{\textbf{extended:}\textblue{#1}\textblack} 
	\newcommand{\notextended}[1]{\textbf{short:}\textgreen{#1}\textblack} 
\else
	\ifextended
		\ifcoloredversions
			\newcommand{\extended}[1]{\textblue{#1}\textblack} 
		\else
			\newcommand{\extended}[1]{#1}
		\fi
		\newcommand{\notextended}[1]{}
	\else
		\ifcoloredversions
			\newcommand{\notextended}[1]{\textgreen{#1}\textblack} 
		\else
			\newcommand{\notextended}[1]{#1} 
		\fi
		\newcommand{\extended}[1]{}
	\fi
\fi

\newif\ifblind
\blindfalse
\blindtrue

\ifblind
\newcommand{\Zot}{\textsc{Zot}}
\newcommand{\aetwozot}{\textsc{ae2Zot}}
\newcommand{\aetwosbvzot}{\textsc{ae2SBVZot}}
\newcommand{\logic}{\textsc{CLTLoc}}
\newcommand{\CLTLoc}{\textsc{CLTLoc}}
\newcommand{\NAME}{\textsc{TACK}}
\else
\newcommand{\Zot}{Zot}

\newcommand{\logic}{CLTLoc$_v$}
\newcommand{\CLTLoc}{CLTLoc}
\newcommand{\NAME}{TACK}
\fi

\newcommand{\channelsend}{\ensuremath{!}}
\newcommand{\channelreceive}{\ensuremath{?}}
\newcommand{\broadcastsend}{\ensuremath{\#}}
\newcommand{\broadcastreceive}{\ensuremath{@}}
\newcommand{\manytomanysend}{\ensuremath{\&}}
\newcommand{\manytomanyreceive}{\ensuremath{*}}

\newcommand{\clockvaluationfunction}{\ensuremath{v}}
\newcommand{\clockvaluation}{\ensuremath{\clockvaluationfunction: \clocks \rightarrow \realNum_{\geq 0} }}
\newcommand{\variablevaluationfunction}{\ensuremath{v_{\mathrm{var}}}}
\newcommand{\variablevaluationfunctionpar}[1]{\ensuremath{v_{\mathrm{var},#1}}}
\newcommand{\clockvaluationfunctionpar}[1]{\ensuremath{v}_{#1}}
\newcommand{\edge}{\ensuremath{\mathrm{edge}^{](}}}

\newcommand{\blindonoff}[2]{\ifblind #1\else #2\fi}
\newcommand{\invariants}{\ensuremath{\Gamma(X)}}

\newcommand{\resettedclocks}{\ensuremath{\zeta}}
\newcommand{\universeOfActions}{\ensuremath{Act_\tau}}

\newcommand{\map}{\ensuremath{\rho}}
\newcommand{\eventtime}{\ensuremath{\Upsilon}}

\hyphenation{im-ple-men-ta-tion im-ple-men-ted}
\setlist[enumerate,1]{label=(\theenumi)}

\usepackage{pifont}
\newcommand{\cmark}{\ding{51}}
\newcommand{\xmark}{\ding{55}}

\begin{document}

\title{Verifying MITL formulae on Timed Automata considering a Continuous Time Semantics}

\author{Claudio Menghi}
\orcid{orcid}
\affiliation{
	\institution{SnT - Interdisciplinary Centre for Security, Reliability and Trust, University of Luxembourg}
	\streetaddress{H29 Avenue John F. Kennedy}
	\city{Gothenburg}
	\postcode{1855}
	\country{Luxembourg}}
\email{claudio.menghi@uni.lu}

\author{Marcello M. Bersani}
\orcid{orcid}
\affiliation{
	\institution{Dipartimento di Elettronica, Informazione e Bioingegneria, Politecnico di Milano}
	\streetaddress{via Golgi 42}
	\city{Milan}
	\postcode{20133}
	\country{Italy}}
\email{marcello.rossi@polimi.it}

\author{Matteo Rossi}
\orcid{orcid}
\affiliation{
	\institution{Dipartimento di Elettronica, Informazione e Bioingegneria, Politecnico di Milano}
	\streetaddress{via Golgi 42}
	\city{Milan}
	\postcode{20133}
	\country{Italy}}
\email{marcello.rossi@polimi.it}

\author{Pierluigi San Pietro}
\orcid{orcid}
\affiliation{
	\institution{Dipartimento di Elettronica, Informazione e Bioingegneria, Politecnico di Milano}
	\streetaddress{via Golgi 42}
	\city{Milan}
	\postcode{20133}
	\country{Italy}}
\email{pierluigi.sanpietro@polimi.it}

\begin{abstract}
Timed Automata (TA) is de facto a standard modelling formalism to represent systems when the 
interest is the analysis of their   behaviour as time progresses.
This  modelling formalism is mostly used for checking whether the behaviours of a system satisfy a set of properties  of interest.
Even if efficient model-checkers for Timed Automata exist, these tools are not easily configurable.
First, they are not designed to easily allow   adding new Timed Automata constructs, such as new synchronization mechanisms or communication procedures, but they assume a fixed set of Timed Automata constructs.
Second,  they usually do not support the full Metric Interval Temporal Logic (MITL) and rely on a precise semantics for the logic in which the property of interest is specified which cannot be easily modified and customized.
Finally,  they do not easily allow  using different solvers that may speed up verification in different contexts.

This paper presents a novel technique to perform model checking of \emph{full} Metric Interval Temporal Logic (MITL) properties on TA.
The technique relies on the translation of both the TA and the MITL formula into an intermediate Constraint LTL over clocks (CLTLoc) formula
which is verified through an available decision procedure.
The technique is flexible since the intermediate logic allows the encoding of new semantics as well as new TA constructs, by just 
adding new \CLTLoc\ formulae. 
Furthermore, our technique is not bound to a specific solver as the intermediate \CLTLoc\ formula can be verified using different 
procedures. 
 \end{abstract}

\maketitle

\section{Introduction}
\label{sec:Introduction}
Model checking is an automatic technique 
to verify whether a model of a system satisfies a property of interest.
Various formalisms 
have been proposed for representing the model and its properties, often in terms of state machines and temporal logics,
both having specific peculiarities depending on the designer's goals and tool availability.

Timed Automata~\cite{alur1994theory} (TA) are one of the most popular formalisms to describe system behavior when real time constraints are important.
Various tools are available to verify TA: 
Kronos~\cite{Yovine97}, the de facto standard tool Uppaal~\cite{larsen1997uppaal}, RED~\cite{Wang01},   \roland~tool~\cite{kindermann2013bounded}  and MCMT~\cite{CGR10} (though the latter can only perform reachability analysis of---parametric---networks of TA).

We believe that  novel model checking tools for TA should address three main challenges:  
(\texttt{C1}) providing different semantics of TA including the \emph{continuous time}; 
(\texttt{C2}) supporting high level expressive complex logics that easily allow the specification of the properties of interest, such as the \emph{full} Metric Interval Temporal Logic (MITL); 
and (\texttt{C3}) being extensible, i.e., allowing users to add new constructs easily, such as adding new synchronization mechanisms or communication procedures for TA. 
The main issues related to those challenges are summarized hereafter.

\texttt{C1}. The paradigm of time that is overwhelmingly adopted in practice is based on timed words~\cite{alur1994theory},---i.e., 
infinite words where each symbol is associated with a real-valued time-stamp--- and most tools 
(except~\cite{kindermann2013bounded}) are founded on such semantics.
The so-called \emph{signal-based semantics} is a different interpretation, where each instant of a dense temporal domain (e.g., $\realNum_{\geq 0}$) is 
associated with a state, called a signal.
Signals are more expressive than timed words (as proved in~\cite{DSouzaP07}), thus allowing a more precise representation of the system state over time. 
In particular, if a signal changes its value at an instant $t$,
it is possible to specify the value of the signal both in $t$ (the ``signal edge'') and in arbitrary small neighborhoods of $t$. This allows, for instance, 
to represent the location of an automaton both just before and immediately after an instantaneous state transition.
Despite its greater expressiveness, signal-based semantics has been so far confined mainly to theoretical investigations~\cite{alur1996benefits,BCM10,OW08,DSouzaP07} 
and seldom used in practice~\cite{kindermann2013bounded}, due the difficulty in developing a feasible decision procedure.
More precisely,  Kindermann et al.~\cite{kindermann2013bounded}  implemented a decision procedure for BMC of TA against MITL$_{0,\infty}$ which is based on the so-called ``super-dense'' time (also adopted by Uppaal).
Under a super-dense time assumption, a TA can fire more than one transition in the same (absolute) time instant; thus, two or more transitions can be fired one after the other and produce many simultaneous but distinct configuration changes such that time does not progress. 
Super-dense time is a modeling abstraction to represent systems that are much faster then the environment they operate in, so their reaction to external events has a negligible delay.
In the current work, the signal-based semantics is not ``super-dense'', i.e., at any time instant each TA is in exactly one state. 
This choice is mainly dictated by the use of~\cite{BRS15b} to translate MITL to CLTLoc, which is defined over the a more ``traditional'' dense-time; still, CLTLoc may be extended to super-dense time.

\texttt{C2}. Temporal Logics with metric of time, such MITL~\cite{alur1996benefits}, have been proposed to specify real-time properties,
but they are not fully supported by TA verification tools, that typically provide just some baseline functionalities to address reachability problems 
(safety assessment) or to perform model-checking of temporal logics without metric (e.g., LTL, CTL) or of fragments of Timed CTL~\cite{Yovine97}.
For example, Uppaal~\cite{larsen1997uppaal} supports only a limited set of reachability properties.
However, its ability of expressing that a certain condition triggers a reaction within a certain amount of time provides a clear improvement over being able only to specify that a reaction will eventually occur.
The gap of almost 20 years between the proof of the decidability of MITL~\cite{alur1996benefits} and its applicability in practice can be justified 
mainly with the practical complexity of the underlying decision procedure, hampering the development of efficient tools, until more
recent developments of new decision procedures, typically based on faster SMT-solvers. In fact, 
both~\cite{kindermann2013bounded,BRS13a} developed a decision procedure for the satisfiability of MITL, a problem which very recently was also tackled by~\cite{BGHM17}.
In particular, ~\cite{kindermann2013bounded} proposed a verification procedure for a fragment of MITL, namely MITL$_{0,\infty}$, on TA but under 
a semantics based on super-dense signals. It is however still unknown whether  MITL$_{0,\infty}$ 
is equivalent to MITL under the latter semantics.
To the best of the authors' knowledge, a verification procedure supporting full MITL over (standard) signals is still not available. 
A proof of the  language equivalence of TA and \CLTLoc\ over timed words  was given for the first time in~\cite{bersani2016logical}; however, 
the translation presented therein did not consider signals and had only the purpose of proving the equivalence, rather than being intended to be implemented in a tool. For instance, it makes use of many additional clocks
that would hinder the performance of any decision procedure. 
Those limitations fostered the definition of a new, more practical  translation, which is also radically different.
The new encoding has been devised to be as optimized and extensible as possible, rather than being intended to prove language-theoretical results.
Moreover, it also supports networks of TA, whose traces are interpreted for the evaluation of MITL formulae over atomic propositions and arithmetical formulae of the form $\variable\sim \constant$, where $\variable$ is an integer variable manipulated by the automata; to this end, the new encoding allows the representation of the signals associated with the atomic propositions  on the locations and with the integer variables elaborated by the network.
The new encoding also includes three synchronization primitives and allows the representation of the signal edges at the instant where transitions are taken.

\texttt{C3}. Even if a variety of tools supporting the analysis of TA and network of TA is available, they usually 
are not easy to tailor and extend.
\begin{enumerate*}
\item 
They only provide  a fixed set of modeling constructs that support designers in modeling the system under development, but which are not easily modifiable and customizable.
Typical examples are discrete variables (often on finite domains) as well as some communication and synchronization features among different TA.
For example, Uppaal 
provides designers with binary and broadcast synchronization primitives whereas RED  offers sending/receiving communication features via finite FIFO channels.
However, often new modeling requirements may prompt the designers to formulate 
specific communication/synchronization features, also based on data-structures such as queues, stacks, priority mechanisms.
Common model checkers do not explicitly support extending their features and constructs in the above directions,
since this could cause a significant variation of the underlying semantics. Ad hoc modifications of a tool are often possible, but they may require a deep knowledge of the tool internals, whose 
software implementation may be quite complex (depending also in the architecture and the programming language) and scantly documented. 
\item The existing model checkers are typically solver-dependent, since they rely on a strong relation between the problem domain and the solution domain---i.e., respectively, the models to be verified and the input language that is used by a  verification engine.
\item 
The cited tools explicitly support only TA, but not other timed formalisms such as, for instance, Time Petri Nets~\cite{Merlin74,FMMR12}, unless an
ad hoc front-end is developed (as for instance done by the Romeo toolkit~\cite{lime2009romeo}).
Hence, they not easily allow multi-formalism analysis~\cite{BFPR09}, e.g., to design systems with heterogeneous components which are more naturally modeled  by different formalisms.
\end{enumerate*}

\textbf{Contribution.} 
This paper describes a novel technique to model check networks of TA over properties expressed in the \emph{full} MITL  over signals, by relying on a purely logic-based approach. 
The technique is exemplified in the diagram of Fig.~\ref{fig:approch}.
It is based on the solution presented in~\cite{BRS15b} to translate both a MITL formula and a TA into an intermediate logical language, which is then encoded into the language of the underlying solver.
This intermediate level has thus a similar role to the Java Byte code for Java program execution on different architectures.
The advantages are that on the one hand, new TA constructs, logic formalisms or semantics can be dealt by defining new encodings into the intermediate language;
on the other hand, the intermediate language can independently be ``ported'' to different (possibly more efficient) solvers, by translating into the respective solver languages.

\begin{figure}[t]
\begin{center}
\begin{tikzpicture}[->,>=stealth',shorten >=1pt,auto,node distance=1cm,
thick,main node/.style={rectangle,draw,font=\sffamily\Large\bfseries}
]

\node[rectangle]          (l2) []  {\CLTLoc};
\node[]         (l0) [above left= of l2] {TA};
\node[]          (l1) [above right= of l2]  {MITL $\phi$};
\node[]          (l3) [below= of l2]  {\CLTLoc\ solver};

\path[->] (l0) edge[] node[text width=2cm,align=left,below] {Sect.\ref{sec:TA2CLTLoc}} (l2);
\path[->] (l1) edge[] node [text width=2cm,align=left]{\cite{BRS15b}} (l2);
\path[->] (l2) edge[] node [text width=2cm,align=left]{\cite{bersani2016tool}} (l3);
\end{tikzpicture}
\caption{A generic framework for checking the satisfaction of MITL formulae on TA.}
\label{fig:approch}
\end{center}
\end{figure}

A TA and a MITL formula are translated into \blindonoff{\CLTLoc}{Constraint LTL over clocks (\CLTLoc)}, a metric temporal logic~\cite{bersani2016tool}.
\CLTLoc\ is a decidable extension of Linear Temporal Logic (LTL) including real-valued variables that behave like TA clocks. 
The satisfiability of \CLTLoc\ can be checked by using different procedures; a bounded approach based on SMT-solvers is available as part of the \Zot~formal verification tool~\cite{BPR16}.
This intermediate language easily allows for different semantics of TA such as, for instance, 
the signal edges that are generated by the TA when transitions are fired (see Sect.~\ref{sec:tasemantics}).
Moreover, different features of the TA modeling language can be introduced by simply adding or changing formulae in the \CLTLoc\ encoding.
As an example, finite queues or data-structures can be easily included as long as the new features can be expressed in terms of \CLTLoc\ formulae.
The \CLTLoc\ formula encoding  the network of TA and the MITL property is modular, in the sense that the parts that 
translate the MITL prpperty are separated from the formulae translating the TA network.
Moreover, each aspect of the semantics of the (network of) TA is isolated in a specific formula, with only few interconnecting points with the other parts.  
Therefore, each part of the resulting translation is self-reliant, thus easily allowing changes or extensions.

The technique presented in this work is implemented in a Java tool, called \blindonoff{\NAME}{\NAME\ (Timed Automata ChecKer)} (\url{https://github.com/claudiomenghi/TACK}), which is built on the \textrm{QTLSolver} (\url{https://github.com/fm-polimi/qtlsolver}), and extends the translation of~\cite{bersani2016tool} to deal with a network of TA and to add a new front-end for the specification of the network and its properties.
\NAME\ takes as input a (network of) TA, described with a syntax compatible with Uppaal, and the MITL property to be verified.
Unlike Uppaal, TA and MITL are interpreted according to the \emph{signal-based semantics}.
The \CLTLoc~ formula produced by \NAME\ is then fed to \Zot\ for automated verification.

To evaluate the benefits that ensue from the adoption of an intermediate language, this work shows how to dealt with different signal-based semantics, synchronization primitives and liveness conditions. 
Furthermore, to show the flexibility that is yielded by decoupling  the model-checking problem and the resolution technique, different solvers are employed for verifying the intermediate \CLTLoc~ encoding.
The efficiency of technique is evaluated over  
some standard benchmarks, namely the  Fischer (see e.g.~\cite{abadi1994old}),  the CSMA/CD  (see e.g.~\cite{CSMACD}) and the Token Ring (see e.g.~\cite{jain1994fddi}) protocols.

\vskip 0.1in  
The paper is structured as follows.
Section~\ref{sec:Background} presents the background and the notation used in the rest of this work.
Section~\ref{sec:tasemantics} introduces the continuous time semantics of TA.
Section~\ref{sec:TA2CLTLoc} presents the algorithm to convert a TA in a \CLTLoc\ formula.
Section~\ref{sec:checkingMITLI} describe the model checking algorithm to verify MITL formulae on automata with time.
Section~\ref{sec:evaluation} evaluates \NAME  and 
discusses the experimental results. 
Section~\ref{sec:conclusion} concludes.

\section{Background}
\label{sec:Background}
This section presents TA (enriched with integer-valued variables and  synchronization), MITL and CLTLoc.
\subsection{Timed automata}
\noindent Let \clocks\ be a finite set of \emph{clocks} with values in \realNum.
 \clockConstraint\ is the set of \emph{clock constraints}  over \clocks\ defined by the syntax 
$\gamma \coloneqq  \clock \sim c \mid \neg \gamma \mid \gamma \wedge \gamma$, where $\sim \in \{ <, =\}$, $x \in X$ and $c \in \naturalNum$.
Let $Act$ be a set of events,
 $\universeOfActions$ is the set $Act \cup \{ \tau \}$, where $\tau$ is used to indicate a null event.

\begin{definition}[Timed Automaton]
Let $AP$ be a set of atomic propositions,
 \clocks\ be a set of clocks and 
 $Act$ be  a set of events. A
\emph{Timed Automaton} is a tuple 
\ta, where:
\begin{enumerate*}
\item[]  $Q$ is a finite set of control states (also called locations);
\item[] $q_0 \in Q$ is the initial state;
\item[] $Inv: Q\rightarrow \invariants$ is an invariant assignment function;
\item[] $L:$ $Q$ $\rightarrow \wp(AP)$ is a function labeling the states in $Q$ with elements of $AP$;
\item[] $T \subseteq_{\textit{fin}} Q \times Q \times \guardUniverse \times \universeOfActions \times \partSet(\clocks)$ is a finite set of transitions.
\end{enumerate*} 
\end{definition}

A transition $t=(q,q',\gamma,\action,\resettedclocks)\in T$ is written as $q \xrightarrow{\clockconstraint, \action, \resettedclocks} q^\prime$; the notations $t^-$, $t^+$,
$t_g$,  $t_e$, $t_s$  indicate, respectively,  the source $q$, the destination $q^\prime$, the clock constraint $\clockconstraint$, the event $\action$ and the set of clocks $\resettedclocks$ to be reset when firing the transition.
Fig.~\ref{fig:TaExample} shows a simple example of TA.

\vskip 0.05in   
Let $Int$ be a finite set of \emph{integer variables} with values in $\intNum$ and $\sim \in \{ <, =\}$; 
$Assign(Int)$ 
is
the set of assignments  of the form $n := exp$, where $n \in Int$ and $exp$ is an arithmetic expression over the integer variables and elements of $\intNum$.
$\Gamma (Int)$ 
is
 the set of \emph{variable constraints $\gamma$} over $Int$ defined as 
$\gamma \coloneqq n \sim c \mid$ $n \sim n' \mid$ $\neg \gamma \mid$ $\gamma \wedge \gamma$, where  $n$ and $n'$ are integer variables  and $c \in \intNum$.

\begin{definition}[TA with Variables]
\label{def:tawithvariables}
Let $AP$ be a set of atomic propositions, \clocks\ be a set of clocks , $Act$ be a set of events  and $Int$ be a set of integer variables. A
\emph{Timed Automaton with Variables} is a tuple \tavar,
 where:
\begin{enumerate*}
\item[]  $Q$ is a finite set of control states (also called locations);
\item[] $q_0 \in Q$ is the initial state;
\item[] $\variablevaluationfunction^0: Int \rightarrow \intNum$ assigns each variable with a value in $\intNum$; 
\item[] $Inv: Q\rightarrow \Gamma(X)$ is an invariant assignment function;
\item[] $L:$ $Q$ $\rightarrow \wp(AP)$ is a function labeling the states in $Q$ with elements of $AP$;
\item[] $T \subseteq_{\textit{fin}} Q \times Q \times \Gamma(X) \times \Gamma(Int) \times Act_\tau \times \partSet(X)\times \partSet(Assign(Int)) $ is a finite set of transitions.
\end{enumerate*}
\end{definition}

\begin{figure}[t]
	\subfigure[An example of TA.]{\begin{tikzpicture}[->,>=stealth',shorten >=1pt,auto,node distance=1.1cm,
                    thick,main node/.style={circle,draw,font=\sffamily\Large\bfseries}]

\node[state,align=center, minimum size=1.4cm]          (l2) []  {$q_2$\\ $c$};
\node[state,align=center,  minimum size=1.4cm]         (l0) [above left= of l2] {$q_0$\\ $a$};
\node[state,align=center, minimum size=1.4cm]          (l1) [above right= of l2]  {$q_1$\\ $x \leq 5$};

\draw[<-]               (l0) -- node[above] {} ++(-1,0cm);
\path[->]                 (l0)    edge[]        node[text width=2cm,align=center] {
\textbf{sync}: $e_1$\\
\textbf{guard}: $x < 5$ 
} (l1);
\path[->]                (l1)    edge[]        node [text width=1.5cm,align=center]{
\textbf{sync}: $e_2$\\
} (l2);
\path[->]                (l2)    edge[]        node[text width=2.5cm,align=center]{
\textbf{sync}:  $e_3$\\
\textbf{guard}:  $x=10$\\ 
\textbf{assign}: $x$
} (l0);

\end{tikzpicture} \label{fig:TaExample}}
	\subfigure[An example of TA with Variables.]{\begin{tikzpicture}[->,>=stealth',shorten >=1pt,auto,node distance=1.1cm,
                    thick,main node/.style={circle,draw,font=\sffamily\Large\bfseries}]

\node[state,align=center, minimum size=1.4cm]          (l2) []  {$q_2$\\ $c$};
\node[state,align=center,  minimum size=1.4cm]         (l0) [above left= of l2] {$q_0$\\ $a$};
\node[state,align=center, minimum size=1.4cm]          (l1) [above right= of l2]  {$q_1$\\ $x \leq 5$};

\draw[<-]               (l0) -- node[above] {} ++(-1,0cm);
\path[->]                 (l0)    edge[]        node[text width=2.2cm,align=center] {
\textbf{sync}: $e_1$\\
\textbf{guard}: $x < 5$ \\
\textbf{assign}:  $\variable:= 2$
} (l1);
\path[->]                (l1)    edge[]        node [text width=2cm,align=center]{
\textbf{sync}:  $e_2$\\
\textbf{assign}:  $\variable:= 1$ 
} (l2);
\path[->]                (l2)    edge[]        node[text width=2.5cm,align=center]{
\textbf{sync}:  $e_3$\\
\textbf{guard}:  $x=10$\\ 
\textbf{assign}: $x$, $\variable:=0$
} (l0);

\end{tikzpicture} \label{fig:TaWithVariableExample}}
	\caption{The TA in \subref{fig:TaExample} has  three locations, $q_0$, $q_1$, $q_2$, and one clock \clock.
		The transition from $q_2$ to $q_0$ is labeled with guard $\clock=10$. 
		When the transition is taken, clock \clock\ is reset---i.e., it is set to $0$. 
		Location $q_1$ is associated with invariant $\clock \leq 5$. 
		Locations $q_0$ and $q_2$ are labeled with atomic propositions $a$ and $c$, respectively.
		The TA in \subref{fig:TaWithVariableExample} is the same as the one of \subref{fig:TaExample}, except for the presence of integer variable \variable , which is set to $0$, $1$ or $2$ depending on the transition taken.}
\end{figure}

A transition is written as \transtavar\ where \variableconstraint\ is a constraint of $\Gamma(Int)$ and \varassignement\ is {a set of assignments} from $\partSet(Assign(Int))$.
The notations $t_{d}$ and $t_u$ indicate, respectively, the variable constraint \variableconstraint\ and {the set of assignments} \varassignement\ associated with a transition $t$.
An example of TA with Variables is presented in Fig.~\ref{fig:TaWithVariableExample}.

\begin{remark}
{A set of assignments}  $\varassignement \in \partSet(Assign(Int))$ might be inconsistent, i.e., 
it may assign different values to the same variable.
For example, $\varassignement=\{x=2, x=3\}$ is inconsistent since two values are assigned to variable \clock. In this case, a transition  associated with {$\varassignement$} cannot be fired.
\end{remark}

\vskip 0.05in  
When networks of TA are considered, the event symbols labeling the transitions are used to synchronize automata. 
Every event symbol $\action \in Act$ is associated with one communication channel, which can be identified with the event symbol itself, i.e., channel $\action$.
The set of actions \actions\  is defined as $\actions= \{   \tau  \} \cup   \{ Act \times Sync \}$, where $Sync$ is a set of synchronization primitives and 
$\tau$ indicates that no synchronization primitive is associated with the transition.
In this work, $Sync$ is restricted to $\{ \channelsend, \channelreceive, \broadcastsend, \broadcastreceive, \manytomanysend, \manytomanyreceive \}$ where the symbols $!$ and $?$ indicate that a TA emits and receives an event, respectively,  $\#$ denotes a broadcast synchronization sender and $@$ denotes a broadcast synchronization receiver, \manytomanysend\ denotes a one-to-many synchronization sender and \manytomanyreceive\ denotes a one-to-many communication receiver.
Symbols 
$\action \channelsend$, $\action \channelreceive$, $\action \broadcastsend$, $\action \broadcastreceive$, $\action \manytomanysend$ and $\action \manytomanyreceive$ indicate the element 
$(\action, \channelsend)$,  $(\action, \channelreceive)$, $(\action, \broadcastsend)$, $(\action, \broadcastreceive)$, $(\action,\manytomanysend)$ and $(\action,\manytomanyreceive)$ such that 
$(\action, \channelsend)$,  $(\action, \channelreceive)$, $(\action,\broadcastsend)$, $(\action, \broadcastreceive)$,  $(\action,\manytomanysend)$ and $(\alpha,\manytomanyreceive)$ is contained in the set $\{ Act \times Sync \}$.

\begin{definition}[Network of TA]
A network $\netid$ of TA is a set $\netid=\{ \mathcal{A}_1, \ldots, \mathcal{A}_K \}$ of TA  defined over the same set of atomic propositions $AP$, actions \actions, variables $Int$ and clocks \clocks .
\end{definition}

\begin{remark}
Let
\taNetwork\ be a network of TA defined over the set of clocks
 \clocks ;
 a clock $\clock \in \clocks$ is a \emph{local clock}  of an automaton $\mathcal{A}_i \in \netid$ if 
 \clock\ is used in the invariants, guards or resets of  $\mathcal{A}_i$ and it does not exist another automaton $\mathcal{A}_j \in \netid$, with $\mathcal{A}_i \neq \mathcal{A}_j$, using \clock\ in its invariants, guards or resets.
Let $Int$ be  a set of variables  of \taNetwork ,
a variable $\variable \in Int$  is a \emph{local variable} of an automaton $\mathcal{A}_i \in \netid$ if 
\variable\ is used in the guards or assignments of  $\mathcal{A}_i$ and it does not exist another automaton $\mathcal{A}_j \in \netid$, with $\mathcal{A}_i \neq \mathcal{A}_j$, using the variable  \variable\ in its  guards or assignments.
\end{remark}

\subsection{Metric Interval Temporal Logic}
\label{sec:mitl}
An interval $I$ is a convex subset of  $\realNum_{\geq 0}$ of the form $\langle a, b \rangle$  or $\langle a, \infty)$, where
\begin{enumerate*}
\item[] $a \leq b$ are non-negative integers;
\item[]  symbol $\langle$ is either $($ or $[$;
\item[]  symbol $\rangle$ is either $)$ or $]$.
\end{enumerate*}

The syntax of (well-formed) MITL formulae is defined by the grammar
\begin{align}
\phi \coloneqq \alpha \mid \phi \LTLand \phi \mid \neg \phi \mid \phi \LTLuntil_I \phi \nonumber
\end{align}
where $\alpha$ are atomic formulae.
Since MITL is here used to specify properties of TA enriched with variables, atomic formulae $\alpha$ are either propositions of $AP$ or formulae of the form $\variable\sim \constant$, where $\variable\in \variables$, $\constant \in \intNum$ and $\sim \in \{<,=\}$. In the following, set \apvariable\  indicates the universe of the formulae of the form $\variable\sim \constant$.

The semantics of MITL is defined w.r.t. signals. 
Let $\intNum^\mathit{Int}$  be
the set of total functions from $Int$ to $\intNum$.
A \emph{signal} is a total function $M: \realNum_{\geq 0} \rightarrow \wp (AP) \times \intNum^\variables$. 
Let $M$ be a signal;
the semantics of an MITL formula is defined as follows.
\begin{align*}
&M, t \models p && \text{iff} && M(t)=(P, \variablevaluationfunction) \text{ and }  p\in P  \nonumber \\ 
&M, t \models n\sim d && \text{iff} &&   M(t)=(P,\variablevaluationfunction) \text{ and } \variablevaluationfunction(\variable)\sim \constant
\\
&M, t \models \neg p && \text{iff} && M, t \models \neg \phi \nonumber \\
&M, t \models \phi \LTLand \psi && \text{iff} && M, t \models \phi \text{ and } M, t \models \psi \nonumber \\
&M, t \models \phi \LTLuntil_I \psi && \text{iff} && \exists t^\prime > t, t^\prime - t \in I, M, t^\prime \models \psi  \nonumber  \text{ and } \forall t^{\prime \prime} \in (t, t^\prime)\ M, t^{\prime \prime} \models \phi
\end{align*}

An MITL formula $\phi$ is \emph{satisfiable} if there exists a signal $M$, such that $M,0 \models \phi$. 
In this case, $M$ is called a {\em model} of $\phi$.

\subsection{Constraint LTL over clocks}\label{sec:cltloc}
\blindonoff{\CLTLoc\ }{Constraint LTL over clocks (CLTLoc)} is a temporal logic where formulae are defined over a finite set of atomic propositions and a set of dense variables over $\realNum_{\geq 0}$ representing clocks.  
\CLTLoc\ has been recently extended by supporting expressions over a set of arithmetical variables~\cite{marconi2016towards}.
\logic{} is the intermediate language that is adopted to solve the model-checking problem of TA with MITL specifications.

\logic{} allows for two kinds of atomic formulae: over clocks and over arithmetical variables.
An atomic formula over a clock $x$ is for instance $x < 4$, whereas an atomic formula over arithmetical variables is for example $n + m < 4$, with $n,m\in\intNum$.
Similarly to TA, a clock $x$ measures the time elapsed since its last ``reset''. 
\logic{} also exploits the ``next'' $\LTLx$ modality applied to integer variables~\cite{DD07}: if $n$ is an integer variable, the term $\LTLx(n)$ represents the value of $n$ in the next 
position.

\renewcommand{\exp}{\mathit{exp}}

Let $X$ be a finite set of clocks
and 
 $Int$ be a finite set of integer variables; 
formulae of \CLTLoc\  with counters  are defined by the grammar:
\begin{align}
\phi \coloneqq p \mid x \sim c \mid \exp_1 \sim \exp_2 \mid \LTLx(n) \sim\exp
\mid \phi \wedge \phi \mid \neg \phi \mid \LTLx \phi \mid  \phi \LTLuntil \phi  \nonumber
\end{align}
where $p \in AP$,  $c \in \naturalNum$,  $x \in X$, 
$exp$, $\exp_1$ and $\exp_2$ are arithmetic expressions over the sets $Int$ and $\intNum$, $n \in Int$ and
$\sim$ is a relation in $\{ <, =\}$. 
$\LTLx$, $\LTLuntil$  are the usual ``next" and ``until" operators of LTL.
Modalities such as ``eventually'' ($\LTLf$), ``globally'' ($\LTLg$), and ``release'' ($\LTLrelease$) are defined as usual.
Symbol $\top$ (true) abbreviates $(p \lor \neg p)$, for some $p \in AP$.

The strict linear order ${(\naturalNum, <)}$ is the standard representation of positions in time.
The interpretation of \emph{clocks} is defined by means of a clock valuation $\sigma: \naturalNum \times X \rightarrow \realNum_{\geq 0}$ assigning, 
for every position $i \in \naturalNum$, a real value $\sigma(i, x)$ to each clock $x \in X$.
As in TA, a clock $x$ measures the time elapsed since the last time when $x=0$, i.e., the last ``reset" of $x$.
The semantics of time evolution is {\em strict},
namely the value of a clock must strictly increase in two adjacent time positions, unless it is reset (i.e., for all $i \in \naturalNum$, $x \in X$, 
it holds that $\sigma(i+1,x)>\sigma(i,x)$, unless $\sigma(i+1,x)=0$ holds)\footnote{As discussed in the following this assumption does not allow us to capture the super-dense semantics of TAs.}.
\notextended{}
To ensure that time strictly progresses at the same rate for every clock, $\sigma$ must satisfy the following condition: for every position $i \in \naturalNum$, there exists a ``time delay" $\delta_i >0$ such that for every clock $x \in X$:
\begin{equation}
\sigma(i+1, x) = \begin{cases} \sigma(i,x)+\delta_i & \text{progress} \\ 0 & \text{reset}\ x \end{cases} \nonumber
\end{equation}
If this is the case, then $\sigma$ is called a {\em clock assignment}. 
The initial value $\sigma (0,x)$ may be any non-negative value.
Moreover, a clock assignment is such that $\sum_{i \in \naturalNum} \delta_i = \infty$, i.e., time is always progressing.

The interpretation of \emph{variables} is defined by a mapping $\iota: \naturalNum \times Int \rightarrow \intNum$ assigning, for every position $i \in \naturalNum$, a value in $\intNum$ to each variable of set $Int$.
Let $\iota$ be a valuation and $i$ be a position;
$\exp(\iota, i)$ indicates the evaluation of  $\exp$ obtained by replacing each arithmetical variable $n \in Int$ that occurs in $\exp$ with value $\iota(i, n)$.
An interpretation of \logic\ is a triple $(\pi, \sigma, \iota)$, where 
$\pi : \naturalNum \rightarrow \wp (AP)$ is a mapping associating a set of propositions with each position $i \in \naturalNum$, 
$\sigma$ is a clock assignment and 
$\iota$ is a valuation of variables.
Let $x$ be a clock, $n$ be a variable and $c$ be a constant in $\naturalNum$, the semantic of \logic\ at a position $i \in \naturalNum$ over an interpretation $(\pi, \sigma, \iota)$ is defined as follows (standard LTL modalities are omitted):
\begin{align}
& (\pi, \sigma, \iota),i \models x \sim c && \text{iff} &&  \sigma(i, x) \sim c& \nonumber\\
& (\pi, \sigma, \iota),i \models \exp_1 \sim \exp_2 && \text{iff} &&  \exp_1(\iota, i) \sim \exp_2(\iota, i) & \nonumber\\
& (\pi, \sigma, \iota),i \models \LTLx(n) \sim exp && \text{iff} && \iota(i+1,n) \sim \exp(\iota,i) & \nonumber \\
& (\pi, \sigma, \iota),i \models a && \text{iff} && a \in \pi(i) &\nonumber
\end{align}
 
A \logic{} formula $\phi$ is \emph{satisfiable} if there exist an interpretation $(\pi, \sigma, \iota)$ such that $(\pi, \sigma, \iota), 0 \models \phi$. In this case,   $(\pi, \sigma, \iota)$ is 
called a {\em model} of $\phi$, written $(\pi, \sigma, \iota) \models \phi$.
It is easy to see that \logic{} is undecidable, as it can encode a 2-counter machine; however, in this work a decidable subset of \logic{} is adopted, where the domain of arithmetical variables is \emph{finite}.

\section{Continuous time semantics for Timed Automata} 
\label{sec:tasemantics}

The behavior of TA over time is described by means of execution traces that define the evolution of the APs, variables and clocks of the automata changing their values because transitions are taken or because time elapses.
When networks of synchronizing TA are considered, the formal definition of the semantics of (network of) TA has to deal with the following issues: 
\begin{enumerate}[label=\textit{\alph*)}]
\item how the automata progress over time by means of transitions associated with actions (liveness conditions); and
\item how the automata synchronize when transitions labeled with \channelsend,\channelreceive, \broadcastsend,  \broadcastreceive, \manytomanysend\ and \manytomanyreceive\ are fired.
\end{enumerate}

Only the general case semantics of a network of TA with variables is discussed hereafter. 
Obviously, the semantics of a network of TA without variables or of a single TA are just special cases.
Furthermore, in the rest of this paper, integer variables are restricted to finite domains.

\subsection{Preliminaries}
\label{sec:preliminaries}
Let \clocks\ be a set of clocks and $\clockconstraint \in \setclockconstraint$ be a clock constraint. 
A \emph{clock valuation} is a function \clockvaluation; the notation
$\clockvaluationfunction \models \clockconstraint$ indicates that the clock valuation \clockvaluationfunction\ satisfies \clockconstraint---i.e., by replacing $\clockvaluationfunction(\clock)$ for $\clock$ in any subformula of the form $\clock \sim \constant$ the clock constraint \clockconstraint\ evaluates to true. 
Let $\timevalue$ be an element of $\realNum$,
$\clockvaluationfunction+\timevalue$ denotes the clock valuation mapping clock $\clock$ to value $\clockvaluationfunction(\clock)+\timevalue$---i.e., $(\clockvaluationfunction+\timevalue)(\clock)=\clockvaluationfunction(\clock)+\timevalue$ for all $\clock \in \clocks$.
In the following, without loss of generality, any clock constraint \clockconstraint\ is supposed to be defined by means of a conjunction of atomic formulae of the form $\clock \sim \constant$ or  $\neg(\clock \sim \constant)$, where $\sim \in \{<,= \}$, 
and that clock constraints are convex, as non-convex ones can be reduced to the convex case.
A transition from $q$ to $q'$ labeled with a convex guard can be equivalently replaced with a set of transitions, all starting in $q$ and leading to $q'$, labeled with convex guards.

Before providing the formal definition of the transition relation for networks of TA, 
the notion of \emph{weak} satisfaction relation $\models_w$ over clock valuations and clock constraints is introduced. 
The weak satisfaction relation may be used to evaluate the invariants in the locations when a transition is fired, to allow different ways of performing an instantaneous  transition.
In particular, relation $\models_w$ is never satisfied when a clock constraint 
has the form $\clock = \constant$, where
$\clock$ is a clock and $\constant$ is a positive integer.
This is motivated by the following intuition: since time strictly progresses, a location labeled with a clock constraint of the form $x=\constant$, for some positive integer $\constant$, can never be reached in any execution of the TA. 
Hence, a constraint of the form $\clock = \constant$ cannot be weakly satisfied by any assignment.
Relation
$\models_w$ weaken the evaluation of clock constraints  
 of the form $\clock < \constant$ and $\clock > \constant$.
Those two constraints are weakly satisfied for $\clockvaluationfunction(\clock) = \constant-\epsilon$ or $\clockvaluationfunction(\clock) = \constant+\epsilon$ (where $\epsilon>0$) as in the non-weak case, but 
they are weakly satisfied also for $\clockvaluationfunction(\clock) = \constant$. i.e., as if the constraints were of the forms $\clock \leq \constant$ and $\clock \ge \constant$. 
Formally, 
a clock valuation $\clockvaluationfunction$ weakly satisfies 
a clock constraint $\clockconstraint$, written 
$\clockvaluationfunction \models_w \clockconstraint$, when the following conditions hold:

\begin{center}
\begin{tabular}{lll}
$\clockvaluationfunction \models_w \clock \sim \constant$ & iff $\ \ \ \clockvaluationfunction(\clock) \sim \constant \text{ or } \clockvaluationfunction(\clock) = \constant$ & $\sim\in\{<,>\}$ \\
$\clockvaluationfunction \models_w \neg(\clock \sim \constant)$ & iff $\ \ \ \clockvaluationfunction(\clock)\not\sim \constant$ & $\sim\in\{<,>\}$ \\
$\clockvaluationfunction \not \models_w \clock = \constant$ & for any $x \in \clocks, \constant \in \naturalNum$
\end{tabular}
\end{center}
Naturally, $\models_w$ can be extended to conjunctions of formulae $x \sim \constant$ or $\neg(x \sim \constant)$.
For instance, the formula $\clock<1 \land \neg(y<1) \land \neg(y=1)$ is both satisfied and weakly satisfied by the clock evaluation such that  $\clockvaluationfunction(\clock)=0.8$ and $\clockvaluationfunction(y)=1.2$, but it is only weakly satisfied if $\clockvaluationfunction(\clock)=1$ and $\clockvaluationfunction(y)=1$.
Notice that formula $\neg(y=\constant)$, for any non-negative integer $\constant$, always evaluates to true according to the weak satisfaction relation, because of the third condition.

A \emph{variable valuation} $\variablevaluationfunction$ is a function $\variablevaluationfunction: \variables \rightarrow \intNum$ that maps each variable in $\variables$ to an integer number; also, if $\variableconstraint \in \Gamma (\variables)$ is a variable constraint, $\variablevaluationfunction \models \variableconstraint$ indicates that valuation \variablevaluationfunction\ satisfies \variableconstraint. 
Let $\transid=\transtavar$ be a transition, $\clockvaluationfunction$ be a clock valuation  and $\variablevaluationfunction$ be a variable valuation,
\transid\ is \emph{enabled} in the valuation
when $\clockvaluationfunction$ satisfies \clockconstraint\ and $\variablevaluationfunction$ satisfies \variableconstraint.
In addition, a satisfaction relation for assignments is here introduced.
Let $\variablevaluationfunction$ and $\variablevaluationfunction^\prime$ be two variable valuations;
$(\variablevaluationfunction^\prime, \variablevaluationfunction) \models \mu$
indicates that all the assignments in $\mu$ are satisfied by means of $\variablevaluationfunction^\prime$ and $\variablevaluationfunction$.
Formally, all assignments of the form $\variable = \exp$ hold when $\variable$
is replaced with $\variablevaluationfunction^\prime(\variable)$ and every occurrence of $m \in \variables$ in $\exp$ is replaced with $\variablevaluationfunction(m)$.
Moreover, let $U(\mu)$ be the set of variables that are updated by $\mu$---that is, that appear as the left-hand side in an assignment of $\mu$---and let $U(t)$ indicate the set $U(\mu)$ given a transition $t$.

\begin{remark}
Relation $\models$ does not hold for inconsistent transitions, i.e., assigning multiple distinct values to a variable. For example, if $\varassignement=\{\variable=2, \variable=3\}$, there is no assignment to $\variable$ such that both $\variable=2$ and $\variable=3$ hold.
\end{remark}

\begin{definition}
Let \netid\  be a network of \tanumber\ TA.
A \emph{configuration} of \netid\ is a tuple $( \loc , \variablevaluationfunction, \clockvaluationfunction )$ where $\loc$ is a vector $[\autstate^1, \ldots, \autstate^\tanumber]$---
s.t.
$\autstate^1, \ldots, \autstate^\tanumber$ are locations of 
$\taid_1, \ldots, \taid_\tanumber$---\variablevaluationfunction\ (resp., \clockvaluationfunction) is a variable (resp., clock) valuation for the set
\variables\ (resp.,  \clocks )
 including all integer variables  (resp., clocks) appearing in the TA of the network.  
\end{definition}

When a network of TA is considered, 
it is possible that some 
automata in the network take a transition while the remaining others do not fire a transition and keep their state unchanged.
Firing a transition labeled with the null event \nullevent\ (i.e., a transition that does not synchronize) is however different from not taking a transition at all.
Symbol $\_$ indicates that an automaton \autindex\ does not perform any transition in \transitions{\autindex}.

The notation $\loc[\autindex]$ indicates the location of automaton $\taid_\autindex$---i.e., if $\loc[\autindex] = j$, then automaton $\taid_\autindex$ is in location $\autstate^\autindex_j$, assuming that the locations of each automaton are numbered, with $0$ indicating the initial one.

\newcommand{\ie}{\mathtt{ie}}
\newcommand{\ei}{\mathtt{ei}}

Two kinds of configuration changes may occur when an automaton in the network performs a transition from a state $q$ to $q^\prime$.
They are indicated in Def.~\ref{def:confatrans} with symbols $\ei$ {(excluded-included)} and $\ie$ {(included-excluded)}.
{Intuitively, these symbols constraints how the network behaves when a transition is fired.
The symbol $\ei$ forbids an automaton to be in state $q$ (excluded) in the instant in which the transition from $q$ to $q^\prime$ is fired,
while it forces the automaton to be in state $q^\prime$ (included).
Viceversa, the symbol $\ie$ forces an automaton to be in state $q$ (included) in the instant in which the transition from $q$ to $q^\prime$ is fired,
while it forbids the automaton to be in state $q^\prime$ (excluded).
}
Consider, for instance, a location $q$ labeled with $x<1$ and an outgoing transition.
If the automaton is in $q$, then the transition can be executed when $\clockvaluationfunction(x)<1$, in which case the corresponding configuration change can be arbitrarily marked either with $\ei$ or with $\ie$.
The transition can be executed even when $\clockvaluationfunction(x)=1$ holds, but in this case the kind of configuration change 
can only be 
$\ei$.

	\begin{definition}
		\label{def:confatrans}
		Let \netid\  be a network of \tanumber\ TA.
		Let $(\loc, \variablevaluationfunction, \clockvaluationfunction)$, $(\loc', \variablevaluationfunction', \clockvaluationfunction')$ be two configurations, let $\delta \in \realNum_{>0}$ and $\Lambda$ be a tuple of \tanumber\ symbols such that $\Lambda[k] \in \{\actions \times \{\ei, \ie\}\} \cup \{ \_  \}$
		for every $1 \le \autindex \leq \tanumber$. 
		Then, a configuration change is either a transition 
		$(\loc, \variablevaluationfunction, \clockvaluationfunction) \xrightarrow{\Lambda} (\loc', \variablevaluationfunction', \clockvaluationfunction')$ 
		or  a transition 
		$(\loc, \variablevaluationfunction, \clockvaluationfunction) \xrightarrow{\delta} (\loc', \variablevaluationfunction', \clockvaluationfunction')$ defined as follows.
		\begin{enumerate}
			\item 				\label{condone} $(\loc, \variablevaluationfunction, \clockvaluationfunction) \xrightarrow{\Lambda} (\loc', \variablevaluationfunction', \clockvaluationfunction')$  occurs if 
			\begin{enumerate}
				\item 
				\label{firing}
				for each $\Lambda[\autindex]=(\action, b)$
				there is a transition 
				$\conftranstavar{\autindex}$	 in $\mathcal{A}_{\autindex}$ such that:
				\begin{enumerate}
					\item
					\label{guards}
					$\clockvaluationfunction \models \clockconstraint$ and $\variablevaluationfunction \models \variableconstraint$, 

					\item
					\label{resets}
					$\clockvaluationfunction'(\clock)=0$ holds for all $\clock \in \resettedclocks$, 

					\item
					\label{assignments}
					$(\variablevaluationfunction',\variablevaluationfunction) \models \varassignement$, 
					\item
					\label{invariant-leftopenrightclosed}
					when $b = \ei$ then:
					\begin{itemize}
						\item $\clockvaluationfunction \models_w Inv(\loc[\autindex])$ and
						\item $\clockvaluationfunction' \models Inv(\loc'[\autindex])$
					\end{itemize}
					\item\label{invariant-leftclosedrightopen}
					when $b = \ie$:
					\begin{itemize}
						\item $\clockvaluationfunction \models Inv(\loc[\autindex])$ and 
						\item $\clockvaluationfunction' \models_w Inv(\loc'[\autindex])$ 
					\end{itemize}
				\end{enumerate}
				\item
				\label{keepstate} 
				for each $\Lambda[\autindex] = \_$ it holds that:
				\label{epsilon-transition}
				\begin{enumerate}
					\item
					\label{epsilon-keepstate} 
					$\loc'[\autindex]=\loc[\autindex]$;
					\item $\clockvaluationfunction \models Inv(\loc[\autindex])$ and $\clockvaluationfunction' \models Inv(\loc'[\autindex])$.
				\end{enumerate}
				\item
				\label{keepclocksandvariables}
				for each clock $\clock \in \clocks$ (resp., integer variable $\variable \in \variables$), if \clock\ (resp., \variable) does not appear in any \resettedclocks\
				(resp., it is not assigned by any $A$) of one of the transitions taken by $\mathcal{A}_{1}, \ldots, \mathcal{A}_{K}$, then $\clockvaluationfunction'(x) = \clockvaluationfunction(x)$ (resp., $\variablevaluationfunction'(n)=\variablevaluationfunction(n)$);
			\end{enumerate}
			\item \label{time-transition} $(\loc, \variablevaluationfunction, \clockvaluationfunction) \xrightarrow{\delta} (\loc', \variablevaluationfunction', \clockvaluationfunction')$ 
occurs if 
			$\loc'[k]=\loc[k]$,
			$\variablevaluationfunction'= \variablevaluationfunction$,
			$\clockvaluationfunction'= \clockvaluationfunction+\delta$
			and for all $1 \leq \autindex \leq K$,
			$\clockvaluationfunction' \models_w Inv(\loc[\autindex])$.
		\end{enumerate}
	\end{definition}

A configuration change $(\loc, \variablevaluationfunction, \clockvaluationfunction ) \xrightarrow{\Lambda} (\loc', \variablevaluationfunction', \clockvaluationfunction')$, for some $\Lambda \in  \{\{\actions \times \{\ei, \ie\}\} \cup \{ \_  \}\}^\numberOfTA$,
satisfying (1) is called a \emph{discrete transition}.
If it satisfies (2) then it is called a \emph{time transition}.
For convenience of notation, symbols $(\action,\ei)$ and $(\action,\ie)$, for some $\action \in \actions$, are hereinafter denoted respectively with $\action^{)[}$ and $\action^{](}$, meaning that the discrete transition performed by the $k$-th automaton, such that $\Lambda[\autindex]=(\action,\ei)$ (resp., $\Lambda[\autindex]=(\action,\ie)$), is \emph{open-closed} (resp., \emph{closed-open}).
	The use of symbols $\action^{)[}$ or  $\action^{](}$ allows the distinction of two different ways of performing a transition by means of an action $\action$. 
	The two modes are determined by the conditions in~\ref{invariant-leftopenrightclosed} and~\ref{invariant-leftclosedrightopen} and depend on the invariants of the locations involved in the transition, the clock values and the resets applied in the configuration change. 
	Location invariants and resets make it possible to constrain every symbol $\Lambda[\autindex]$,
	associated with $\mathcal{A}_k$, and hence to define how the configuration change in $\mathcal{A}_k$ is realized. 
	Cases~\ref{condone} and~\ref{time-transition}  are discussed in detail in the following.

Case~\ref{condone}. If the discrete transition is open-closed---i.e., the symbol is $\action^{)[}$---then \ref{invariant-leftopenrightclosed} holds.
The conditions of this case impose that $\clockvaluationfunction'$ satisfies the invariant of the destination location and $\clockvaluationfunction$ \emph{weakly} satisfies the invariant of the source location. 
Therefore, if the invariant of the source state is $\clock <1$, 
then the transition can be taken with $\clockvaluationfunction(x)\leq 1$. 
This is achieved through the weak  satisfaction relation that guarantees the (weak) satisfaction of the invariant $\clock <1$ with $\clockvaluationfunction(x)=1$.
Conversely, if the discrete transition is closed-open---i.e., the symbol is $\action^{](}$---then \ref{invariant-leftclosedrightopen} holds.
The conditions defined therein allow the invariant of the destination location to be weakly satisfied and the transition to be fired with $\clockvaluationfunction(x)\geq 1$, if the invariant of the destination state is $\clock>1$. 

Based on the invariants and resets, the symbol $\Lambda[\autindex]$ is either non-deterministically chosen between $\action^{)[}$ and $\action^{](}$ because both symbols are allowed, or it is determistically defined because only one is permitted, according to conditions \ref{invariant-leftopenrightclosed} and ~\ref{invariant-leftclosedrightopen}.
Figure~\ref{fig:leftclosed-open-transitions} shows two automata 
and the possible transitions.
The case of a transition on a symbol $\action^{)[}$ is exemplified in Fig.~\ref{fig:leftopenTA}.
When $v(x)\leq 1$ holds, the invariant $Inv(\loc[\autindex])$ of the first location is weakly satisfied by $v$---i.e., $\clockvaluationfunction \models_w Inv(\loc[\autindex])$ holds.
In such a case, since $Inv(\loc'[\autindex])$ is empty---hence it is trivially true---symbol $\action^{)[}$ is allowed.
In the automaton of Fig.~\ref{fig:leftclosedTA}, instead, only symbol $\action^{](}$ is allowed, because the constraint on the second location requires that $x$ is not reset.
Constraint $\clockvaluationfunction' \models_w Inv(\loc'[\autindex])$ of condition \ref{invariant-leftclosedrightopen} is satisfiable with $\clockvaluationfunction'(x)=0$. 
Conversely, $\clockvaluationfunction \models Inv(\loc[\autindex])$ of condition \ref{invariant-leftopenrightclosed} would be falsified because of the reset and, hence, $\action^{)[}$ is prevented.
In addition, $\action^{](}$ is also allowed in the automaton of Fig.~\ref{fig:leftopenTA} when $v(x) < 1$ holds, because $\clockvaluationfunction \models Inv(\loc[\autindex])$ holds in that case.

Case~\ref{time-transition}. It 
defines the time transitions by means of the weak relation $\models_w$.
Consider an open-closed transition that changes the location of an automaton currently in $q$, for some $q$, and the last time transition immediately preceding it which makes the time progress of $\delta$ time units, for some $\delta>0$.
In order to perform the open-closed transition, 
$v+\delta$
weakly satisfies $Inv(q)$, as required by \ref{invariant-leftopenrightclosed}.
The use of relation $\models$ instead of $\models_w$ in \ref{time-transition} would prevent the occurrence of some open-closed configuration changes, i.e., 
those that would be caused by an  $Inv(q)$ being weakly, but not strongly, satisfied in the time transition.
In case of closed-open transitions, $v+\delta$ (strongly) satisfies  $Inv(q)$, as required by \ref{invariant-leftclosedrightopen}.
Hence, $v+\delta$ also weakly satisfies $Inv(q)$, as in condition~\ref{time-transition}.
For instance, in Fig.~\ref{fig:leftopenTA}, if the automaton is in the location labeled with $x<1$ and $\clockvaluationfunction(x)=0.8$ then the time progress $\delta=0.2$ is permitted by \ref{time-transition} in order to perform the outgoing transition in an open-closed manner with $\clockvaluationfunction'(x)=1$. 
Moreover, if the time progress $\delta$ is such that the invariant of the current location $q$ is such that $v+\delta\models Inv(q)$ holds, then both kind of transitions are allowed.

The combination of conditions~\ref{condone} and~\ref{time-transition} describe how the configuration of a network of TA changes.

Based on the previous arguments, the symbols $\action^{)[}$ and $\action^{](}$ 
will be 
considered in Sec.~\ref{sec:checkingMITLI} 
to define the signals associated with atomic propositions  and variables when discrete transitions are taken.

The notion of trace is now introduced. 
Recall that, $\variablevaluationfunction^0: Int \rightarrow \intNum$ assigns each variable with a value in $\intNum$ (see Def~\ref{def:tawithvariables}).

\begin{definition}
\label{def:trace}
Let \netid\  be a network of \tanumber\ TA. 
A \emph{trace} is an infinite sequence \[\eta=(\loc_0, \variablevaluationfunctionpar{0}, \clockvaluationfunctionpar{0}), {e_0}, (\loc_1, \variablevaluationfunctionpar{1}, \clockvaluationfunctionpar{1}), {e_1}, (\loc_2, \variablevaluationfunctionpar{2}, \clockvaluationfunctionpar{2}), {e_2}, \ldots\] such that:
\begin{enumerate}
\item for all $i \in \naturalNum$, $e_i = \Lambda_i$ or $e_i = \delta_i$;
\item
for all $h \in \naturalNum$ it holds that $(\loc_h, \variablevaluationfunctionpar{h}, \clockvaluationfunctionpar{h})  \xrightarrow{e_h} (\loc_{h+1},\variablevaluationfunctionpar{h+1}, \clockvaluationfunctionpar{h+1})$;

\item \label{firstTransitionIsDelay}
$e_0 = \delta_0$, for some $\delta_0 \in \realNum_{>0}$;

\item
\label{invariantholdsinstate}
for all $1 \leq \autindex \leq K$, it holds that $\loc_0[\autindex]=0$, $\clockvaluationfunction_0 \models Inv(\loc_0[\autindex])$, for all $\clock \in \clocks$ it holds that $\clockvaluationfunctionpar{0}(\clock) = 0$, and for all $\variable \in \variables$ it holds that $\variablevaluationfunctionpar{0}(n) = \variablevaluationfunction^0(\variable)$.

\item
\label{noconsecLambda}
discrete transitions must be followed by time transitions; that is, if $e_h$ is a discrete transition ($e_h = \Lambda_{h}$), then $e_{h+1}$ is a time transition ($e_{h+1} = \delta_{h+1}$).
 
\end{enumerate}
The word $w(\eta)$ of a trace $\eta$
is the sequence $e_0 e_1 \dots$.
\end{definition}

With a slight abuse of notation, a trace $(\loc_0, \variablevaluationfunctionpar{0}, $ $ \clockvaluationfunctionpar{0}), {e_0}, (\loc_1, $ $\variablevaluationfunctionpar{1}, \clockvaluationfunctionpar{1}), {e_1},$
 $ (\loc_2, $ $ \variablevaluationfunctionpar{2}, \clockvaluationfunctionpar{2}), {e_2}, \ldots$ can be written as \trace.

Since by condition \ref{noconsecLambda} there cannot be two consecutive discrete transitions, and since any finite sequence of consecutive delays $\delta_h \dots \delta_{h+k}$, with $k\geq 0$, is equivalent to a single delay $\sum_{i=h}^{h+k} \delta_i$, a trace can always be rewritten into
a new one such that discrete and time transitions strictly alternate.
Moreover, by the previous property, every {time transition $\delta_h$ can always be replaced with}  a finite
sequence of $m$ pairs of time and discrete transition $\delta_{h,0}\Lambda_{h,0} \delta_{h,1}\Lambda_{h,2} \dots \delta_{h,m}$, strictly alternating, such that in $\Lambda_{h,i}[\autindex]=\_$, for all $0\leq i\leq m-1$ and $\delta_h = \sum_{i=0}^{m} \delta_{h,i}$.
This property is used in Sec.~\ref{sec:checkingMITLI} to allow the use of~\cite{BRS15b} in the resolution of the model-checking problem of TA with MITL.

To facilitate future discussions, a trace is represented with the following notation where the numbering of configurations increases only after the discrete transitions:

\begin{align}
(\loc_0, \variablevaluationfunctionpar{0}, \clockvaluationfunctionpar{0}) 
\xrightarrow{\delta_0} 
(\loc'_0, \variablevaluationfunctionpar{0}', \clockvaluationfunctionpar{0}') 
\xrightarrow{\Lambda_0}
(\loc_1, \variablevaluationfunctionpar{1}, \clockvaluationfunctionpar{1}) 
\xrightarrow{\delta_1}  \ldots  \nonumber
\end{align}

\begin{figure}[t]
	\caption{Two examples of transitions enforcing a different and unique configuration change.}
	\label{fig:leftclosed-open-transitions}
	\subfigure[TA allowing $\Lambda^{)[}$ only, for $v(x)=1$; and either $\Lambda^{)[}$ or $\Lambda^{](}$, for $v(x)<1$.]{\begin{tikzpicture}[->,>=stealth',shorten >=1pt,auto,node distance=1.5cm,
thick,main node/.style={circle,draw,font=\sffamily\Large\bfseries}]

\node[state,align=center,  minimum size=1.25cm]         (l0) [] {$x < 1$};
\node[state,align=center, minimum size=1.25cm]          (l1) [right= of l0]  {};

\path[->] (l0) edge[] node[text width=2cm,align=center] {} (l1);

\end{tikzpicture} \label{fig:leftopenTA}}
	\ \ \ \ \ \ \ \ \  
	\subfigure[TA allowing $\Lambda^{](}$ only.]{\begin{tikzpicture}[->,>=stealth',shorten >=1pt,auto,node distance=1.5cm,
                    thick,main node/.style={circle,draw,font=\sffamily\Large\bfseries}]

\node[state,align=center,  minimum size=1.25cm]         (l0) [] {};
\node[state,align=center, minimum size=1.25cm]          (l1) [right= of l0]  {$x > 0$};

\path[->] (l0) edge[] node[text width=2cm,align=center] {
\textbf{assign}: $x$ 
} (l1);

\end{tikzpicture} \label{fig:leftclosedTA}}
\end{figure}

\subsection{Liveness and synchronization}\label{sec:livenessandsynch}
Definition~\ref{def:trace} only provides weak conditions on the occurrences of discrete transitions and does not express any restriction on how TA synchronize.
In fact, beside the first three conditions  requiring that traces are sequences of configuration changes starting from a specific initial configuration, 
only Condition~\ref{noconsecLambda} expresses a restriction on how the configuration changes occur, which only prevents discrete transitions from occurring consecutively, one after the other. 
However, one is typically interested in ``live'' traces, in which some transition is eventually taken and where the effect of the synchronizing primitives is precisely defined.

\subsubsection{Liveness.}
 
Table~\ref{tab:livenesssemantic} shows the formal definition of the following four possible liveness conditions, for a generic trace of a network with $K$ timed automata.
\begin{itemize}
\item \emph{Strong (Weak) transition liveness:} at any time instant,  \emph{each} (resp., \emph{at least one}) automaton of the network eventually performs a transition.
\item \emph{Strong (Weak) guard liveness:} at any time instant, \emph{for each automaton} (resp., \emph{there exists an automaton} such that) the values of clocks and variables will eventually enable one of its transitions
\footnote{The constraint does not force the transition to be taken. Moreover, alternative definitions can be given by considering only  clock or variable guards.}.
\end{itemize}
\begin{table*}[t]
	\caption{Formal definition of different liveness properties for traces.}
	\label{tab:livenesssemantic}
	\begin{tabular}{ p{2.5cm}  p{10.5cm} }
		\toprule
		\textbf{Name} & \textbf{Formulation of the Semantics} \\
		\toprule
		\parbox[t]{3.5cm}{Strong  transition\\  liveness} & 
		For every $h \geq 0$ and $1\leq \autindex \leq K$, there exists $j > h$ such that 
		$(\loc'_j, \variablevaluationfunctionpar{j}', \clockvaluationfunctionpar{j}')  \xrightarrow{\Lambda_j} 
		(\loc_{j+1}, \variablevaluationfunctionpar{j+1}, \clockvaluationfunctionpar{j+1} )$ belongs to the trace and $\Lambda_j[\autindex] \neq \_$.\\
		\midrule
		\parbox[t]{3.5cm}{Weak transition\\  liveness} &
		For every $h \geq 0$ there exist $1\leq \autindex \leq K$ and $j > h$ such that
		$(\loc'_j, \variablevaluationfunctionpar{j}', \clockvaluationfunctionpar{j}')  \xrightarrow{\Lambda_j} 
		(\loc_{j+1}, \variablevaluationfunctionpar{j+1}, \clockvaluationfunctionpar{j+1} )$ belongs to the trace and $\Lambda_j[\autindex] \neq \_$.\\
		\midrule 
		\parbox[t]{3.5cm}{Strong guard\\  liveness} & 
		For every $h \geq 0$ and $1\leq\autindex \leq K$, there exist $j > h$ and a configuration $(\loc'_j, \variablevaluationfunctionpar{j}', \clockvaluationfunctionpar{j}')$ in the trace such that there is a transition \transtavar\, with $q = \loc'_j[k]$, for which 
		$\variablevaluationfunctionpar{j} \models \variableconstraint$ and  $\clockvaluationfunctionpar{j} \models \clockconstraint$ hold.  \\
		\midrule 
		\parbox[t]{3.5cm}{Weak guard\\  liveness} & For every $h \geq 0$ there exist $1\leq\autindex \leq K$, $j > h$, and a configuration $(\loc'_j, \variablevaluationfunctionpar{j}', \clockvaluationfunctionpar{j}')$ in the trace such that there is a transition \transtavar\, with $q = \loc'_j[k]$,
		for which
		$\variablevaluationfunctionpar{j} \models \variableconstraint$ and  $\clockvaluationfunctionpar{j} \models \clockconstraint$ hold. \\
		\bottomrule
	\end{tabular}
\end{table*}

Even if the previous conditions restrict the occurrence of transitions or the satisfiability of guards along the trace, they do not prevent, in general, the progress of time from slowing down.
This issue is well-known in the literature of timed verification and, intuitively, it is caused by so-called \emph{time convergent} traces, where the sum of all the delays $\delta_i$ associated with time transitions is bounded by some positive integer.
Therefore, the previous liveness conditions allow Zeno traces, i.e., where infinitely many actions can occur in finite time. 
Avoiding Zeno traces can be done in several ways.
For instance, one can require strong transition liveness and introduce a new TA in the network which infinitely many times along the trace resets a clock when 
the clock value  
reaches $1$.

\subsubsection{Synchronization}
Section~\ref{sec:Background} introduced qualifiers 
 \channelsend, \channelreceive, \broadcastsend, \broadcastreceive, \manytomanysend, and \manytomanyreceive\ labeling actions on the transitions with the goal of capturing different ways in which the automata of a network can synchronize.
Qualifiers \channelsend\ and \channelreceive\ describe a so-called \emph{channel-based} synchronization; 
qualifiers \broadcastsend\ and \broadcastreceive\ describe a \emph{broadcast} synchronization; and qualifiers 
\manytomanysend\ and \manytomanyreceive\ describe a \emph{one-to-many} synchronization.

Channel-based, broadcast and one-to-many synchronizations can be arbitrarily mixed in the same configuration change.
Table~\ref{tab:networksemantics} shows the formal definition of the channel-based,  broadcast and one-to-many synchronization mechanisms for a generic trace of a network of TA. 
\begin{itemize}
\item \emph{Channel-based synchronization:} 
for any discrete transition, every ``sending'' (qualifier \channelsend) action in a TA $\mathcal{A}_{\autindex}$
must be matched by exactly one corresponding ``receiving'' (qualifier \channelreceive) action in another TA $\mathcal{A}_{\autindex^\prime}$ on the same channel (e.g., $\action \channelsend$ and $\action \channelreceive$) labeling an enabled transition. 
 \item \emph{Broadcast synchronization:} for any discrete transition, every ``sending'' (qualifier  \broadcastsend) action in a TA $\mathcal{A}_{\autindex}$ is matched by all and only ``receiving'' (qualifier \broadcastreceive) actions in TA $\mathcal{A}_{\autindex^\prime}$ labeling an enabled transition. In other words, either $\mathcal{A}_{\autindex^\prime} \neq \mathcal{A}_{\autindex}$ takes a transition labeled with $\action \broadcastreceive$, or it does not exist any enabled transition for $\mathcal{A}_{\autindex^\prime}$ labeled with $\action \broadcastreceive$.
\item \emph{One-to-many synchronization:} for any discrete transition, every ``sending'' (qualifier \manytomanysend) action in $\mathcal{A}_{\autindex}$ is matched by a (non empty) set of ``receiving'' (qualifier \manytomanyreceive) actions in $\mathcal{A}_{\autindex^\prime}$ labeling an enabled transition. 
The one-to-many synchronization  is a variation of the broadcast in which when an automaton $\mathcal{A}_{\autindex}$  sends a message it is received by at least one receiver. However,  not all the automata that have a transition labeled with  $\action \manytomanyreceive$ are forced to receive the message.
\end{itemize}
Notice that, for any channel $\action$, the previous synchronizations allow only one TA to send a message on $\action$ at any time instant; on the other hand, distinct TA can send messages concurrently on separate channels.

\begin{table*}[t]
	\caption{Definition of different constraints on traces depending on synchronization primitives.}
	\label{tab:networksemantics}
	\begin{tabular}{ p{1.5cm}  p{11.5cm} }
		\toprule
		\textbf{Type} & \textbf{Formulation of the Semantics} \\
		\toprule
		\parbox[t]{2.5cm}{Channels} &  For every configuration change $(\loc'_h, \variablevaluationfunctionpar{h}',$ $\clockvaluationfunctionpar{h}')$ 
		$ \xrightarrow{\Lambda_h}  
		(\loc_{h+1}, \variablevaluationfunctionpar{h+1}, \clockvaluationfunctionpar{h+1} )$ and $1\leq \autindex \leq \numberOfTA$
		such that $\action \channelsend=\Lambda_h[\autindex]$ holds, there exists exactly one $0 < \autindex' \leq \numberOfTA$ such that $\autindex' \neq \autindex$ and $\action \channelreceive=\Lambda_h[\autindex^\prime]$ hold, and vice-versa. 
		\\
		\midrule
		\parbox[t]{2.5cm}{Broadcast} &  
		For every configuration change $(\loc'_h, \variablevaluationfunctionpar{h}',$ $\clockvaluationfunctionpar{h}')$ $ \xrightarrow{\Lambda_h} 
		(\loc_{h+1}, \variablevaluationfunctionpar{h+1}, \clockvaluationfunctionpar{h+1} )$ and $1\leq \autindex \leq K$ such that $\action  \broadcastsend=\Lambda_h[\autindex]$ holds, for every $1 \leq \autindex' \leq K$, with $\autindex' \neq \autindex$, it holds that $\action\broadcastsend\not=\Lambda_h[\autindex']$ and either $\action \broadcastreceive=\Lambda_h[\autindex^\prime]$ holds, or no transition 
		$q \xrightarrow{\clockconstraint, \variableconstraint,  \action \broadcastreceive, \resettedclocks, \varassignement} q'$ in $\mathcal{A}_{\autindex^\prime}$ 
		is such that $\variablevaluationfunctionpar{h}' \models \variableconstraint$ and $\clockvaluationfunctionpar{h}' \models \clockconstraint$ hold.
		\\
		\midrule
		\parbox[t]{2.5cm}{One-to-\\ many} &  For every configuration change $(\loc'_h, \variablevaluationfunctionpar{h}',$ $\clockvaluationfunctionpar{h}')$ $ \xrightarrow{\Lambda_h} 
		(\loc_{h+1}, \variablevaluationfunctionpar{h+1}, \clockvaluationfunctionpar{h+1} )$ and $1\leq \autindex \leq \numberOfTA$ such that $\action \manytomanysend=\Lambda_h[\autindex]$, there exists at least one $1\leq \autindex' \leq K$ such that $\autindex' \neq \autindex$ and $\action \manytomanyreceive=\Lambda_h[\autindex^\prime]$ hold.
		\\
		\bottomrule 
	\end{tabular}
\end{table*}

\section{From timed automata to \blindonoff{\logic\ }{CLTLoc$_v$}}
\label{sec:TA2CLTLoc}

This section shows that, given a network $\netid$ of TA, 
it is possible to construct a \logic\ formula 
$\Phi_{\mathcal{N}}= \varphi_{\mathcal{N}} 
\wedge \varphi_{\mathit{l}} \wedge \varphi_{\mathit{s}} \wedge \varphi_{\mathit{ef}}$
whose models represent 
the traces of $\mathcal{N}$. 
Formulae $\varphi_{\mathcal{N}}$, 
$\varphi_{\mathit{l}}$,  
$\varphi_{\mathit{s}}$ and $\varphi_{\mathit{ef}}$
encode, respectively: the behavior of the network, that is  the effect of the transitions on the configuration of the network including how clocks are reset, how variables and locations are modified and when transitions can be taken ($ \varphi_{\mathcal{N}} $); a set of constraints on the 
liveness conditions ($ \varphi_{\mathit{l}}$); the semantics of the firing of the transitions that depend on 
the synchronization modifiers that decorate their labeling events ($\varphi_{\mathit{s}}$);
and constraints on the types of edges (open-closed or closed-open) with which transitions are taken ($\varphi_{\mathit{ef}}$).
Before discussing \logic\ formulae $\varphi_{\mathcal{N}}$, 
$\varphi_{\mathit{l}}$  
$\varphi_{\mathit{s}}$ and $\varphi_{\mathit{ef}}$, formula $\phi_{\mathit{clock}}$ is first introduced to encode 
 a set of constraints on the \logic\ clocks used in $\varphi_{\mathcal{N}}$.

\newcommand{\x}{\mathtt{x}}

\subsection{Encoding constraints over clocks ($\boldsymbol{\phi_{\mathit{clock}}}$)}\label{sec:clock-encoding}
Unlike those in TA, clocks in \logic\ formulae cannot be tested and reset at the same time.
For instance, while it is possible that a transition in a TA both has guard $\clock>5$ and resets clock $\clock$, in \logic\ simultaneous test and reset 
would yield a contradiction, as testing $\clock>5$ and resetting \clock\ in the same position equals to formula $\clock>5 \land \clock=0$.
Therefore, for each clock $\clock \in \clocks$, two clocks $\clock_0$ and $\clock_1$ are introduced in $\Phi_\netid$ to represent a single clock \clock\ of the automaton.
An additional Boolean variable
$\x_v$ keeps track, in any discrete time position, of which clock between $\clock_0$ and $\clock_1$ is ``active'' ($\x_v$ being equal to 0 or 1 respectively).
Clocks $\clock_0$ and $\clock_1$ are never reset at the same time and their resets alternate. 
If $\x_v=0$ (resp., $\x_v=1$) at position \timeposition\ of the  model of $\Phi_\mathcal{N}$, then $\clock_0$ (resp., $\clock_1$) is the active clock at \timeposition\ and $\sigma(\timeposition,\clock_0)$ (resp., $\sigma(\timeposition,\clock_1)$) is the value used to evaluate the clock constraints at \timeposition.
If the reset of \clock\ has to be represented at \timeposition, clock $\clock_1$ (resp., $\clock_0$) is set to $0$ and the value $\x_v$ in position $\timeposition+1$ is set to 1 (resp., 0)---i.e., the active clock is switched.

Table~\ref{tab:clocks} shows the formulae $\phi_1$ and $\phi_2$ that are used to define $\phi_{\mathit{clock}}$.
Formula $\phi_1$ specifies that initially, the active clock is $\clock_0$. In position 0 variable $\x_v$ is equal to $0$ (indicating that $\clock_0$ is the active clock),
$\clock_0$ is also equal to $0$ and $\clock_1$ has an arbitrary value greater than zero.
Formula $\phi_2$ specifies that if $\clock_j$ is reset, it cannot be reset again before $\clock_{(j+1)\mod 2}$ is reset.
For instance, if clock $\clock_0$ is reset, then it cannot be reset again (it remains grater than zero) and it is the active clock ($\x_v=0$) as long as $\clock_1$ is different from $0$. 

\begin{table}
	\centering
	\caption{Encoding of the clocks of the automata.}
	\label{tab:clocks}
	\begin{tabular}{ l }
		\toprule
		$\phi_1 \coloneqq \bigwedge_{\clock \in \clocks}( \clock_0=0 \ \LTLand \ \clock_1>0 \ \LTLand \ \x_v=0)$ \\
		\midrule
		$\phi_2(j) \coloneqq \bigwedge_{\clock \in \clocks}
		(\clock_j=0) \LTLimplication 
		\LTLx 
		(
		(\clock_{(j+1)\mod 2}=0) 
		\LTLrelease 
		(
		(\x_v=j) 
		\wedge 
		(\clock_j> 0) ))$\\
		\bottomrule
	\end{tabular}
\end{table}

Formula $\phi_{\mathit{clock}}$ is defined as $\phi_1 \land \LTLg (\phi_2(0) \land \phi_2(1))$.
Since every clock \clock\ is represented by two variables $\clock_0$ and $\clock_1$, all clock constraints of the form $\clock \sim \constant$ in $\Gamma (\clocks)$ that appear in the automaton are translated by means of the following \logic~formula:
$$\phi_{\clock \sim \constant}:=((\clock_0 \sim \constant) \LTLand (\x_v=0)) \LTLor ((\clock_1 \sim \constant) \LTLand (\x_v=1)).$$

\begin{example}
Figure~\ref{fig:alternatingclocks} depicts a sequence of tests and resets of clock \clock\ over 
9
discrete positions.
The first row shows the sequence of operations $[\clock:=0]$, $[\clock>5,\clock:=0]$, $[\clock<1]$ and $[\clock=3,\clock:=0]$, where $[\clock>5,\clock:=0]$, for instance, means that $\clock$ is tested against 5 and it is reset simultaneously.
In the second row, all the operations on $x$ are represented by means of clocks $\clock_0$ and $\clock_1$.
Based on the value of $\x_v$ at $i$, the active clock at that position is used to realize the correspondent operation.
A 
continuous 
line 
identifies the regions where  either clock $x_0$ or $x_1$ is active.
The third row shows the constraints on $\sigma$ that are enforced by the operations on $x$.
\end{example}

\begin{figure}
	\begin{tikzpicture}
	\draw[dashed] (0,-1.5) -- (8,-1.5);
	\draw[dashed] (0,2.5) -- (8,2.5);
	\draw (0.5,1) -- (2.5,1);
	\draw (3.5,1.5) -- (4.5,1.5);
	\draw (5.5,1.5) -- (6.5,1.5);
	\draw[dashed] (-0.5,0.25) -- (8.5,0.25);
	\draw node at (0,-2) {\small $0$};
	\draw node at (1,-2) {\small $1$};
	\draw node at (2,-2) {\small $2$};
	\draw node at (3,-2) {\small $3$};
	\draw node at (4,-2) {\small $4$};
	\draw node at (5,-2) {\small $5$};
	\draw node at (6,-2) {\small $6$};
	\draw node at (7,-2) {\small $7$};
	\draw node at (8,-2) {\small $8$};
	\draw node at (0,3) {\small $\clock:=0$};
	\draw node at (3,3) {\small $\clock>5, \clock:=0$};
	\draw node at (5,3) {\small $\clock<1$};
	\draw node at (7,3) {\small $\clock=3, \clock:=0$};	
	\draw node at (0,1) {\small $\clock_0=0$};
	\draw node at (3,1) {\small $\clock_0>5$};		
	\draw node at (3,1.5) {\small $\clock_1=0$};
	\draw node at (5,1.5) {\small $\clock_1<1$};
	\draw node at (7,1.5) {\small $\clock_1=3$};	
	\draw node at (7,1) {\small $\clock_0=0$};
	\draw node at (0,2) {\small $\x_v=0$};
	\draw node at (3,2) {\small $\x_v=0$};
	\draw node at (5,2) {\small $\x_v=1$};
	\draw node at (7,2) {\small $\x_v=1$};
	\draw node at (0,-1) {\small $\sigma(0,\clock_0)=0$};
	\draw node at (3,-1) {\small $\sigma(3,\clock_0)>5$};		
	\draw node at (3,-0.5) {\small $\sigma(3,\clock_1)=0$};
	\draw node at (5,-0.5) {\small $\sigma(5,\clock_1)<1$};
	\draw node at (7,-0.5) {\small $\sigma(5,\clock_1)=3$};	
	\draw node at (7,-1) {\small $\sigma(5,\clock_0)=0$};	
	\end{tikzpicture}
	\caption{Representation of tests and resets of clock \clock\ by means of the two copies $\clock_0$ and $\clock_1$. }
	\label{fig:alternatingclocks}
\end{figure}

\subsection{Encoding the network ($\boldsymbol{\varphi_{\mathcal{N}}}$).}\label{sec:phi_N}

Formula $\varphi_{\mathcal{N}}$  encodes both the relation $\xrightarrow{e}$ between pairs of configurations and all (and only) the conditions of  Def.~\ref{def:trace} defining a trace.
However, $\varphi_{\mathcal{N}}$ does not express any restriction on automata synchronization and it does not impose any specific liveness condition. 
The discrete positions of the \logic\ model render the configurations of the network evolving over the continuous time by means of a discrete sequence of points.
All those positions in the model represent the discrete transitions performed by the automata of $\mathcal{N}$ that modify the values of variables, clocks and locations and also the time transitions that produce the elapsing of time.
The model of the formula $\varphi_{\mathcal{N}}$ is thus a (representation of a) possible trace realized by the network $\mathcal{N}$.

A generic configuration $(\loc,\variablevaluationfunction, \clockvaluationfunctionpar{})$ of $\mathcal{N}$ is represented in the \CLTLoc\ formula by means of the values of clocks and variables and a set of auxiliary variables representing locations and transitions of the automata.
An array $\loc$ of $\numberOfTA$ integer variables in the \CLTLoc\ formula encodes the location of each automaton in the network, with $\loc[\autindex]\in \{0, \dots, |Q_\autindex|-1\}$ for every $1\leq \autindex\leq \numberOfTA$.
Given an enumeration of the elements in $Q_\autindex$, $\loc[\autindex] = i$ indicates that $\mathcal{A}_\autindex$ is in the location $q_i$ of $Q_\autindex$.
An array $\tr$ of $\numberOfTA$ integer variables encodes the transitions of each automaton in the network, with $\tr[\autindex] \in \{0, \dots, |T_\autindex|-1\} \cup \{\notr\}$, for every $1\leq \autindex\leq \numberOfTA$. 
Given an enumeration of the elements in $T_\autindex$, $\tr[\autindex] = i$ indicates the execution of transition $t_i$ of $T_\autindex$, while $\tr[\autindex] = \notr$ indicates that no transition of $T_k$ is performed (it represents the symbol $\_$ in the discrete transitions of traces).
An array $\edge$ of $\numberOfTA$ Booleans represents the kind of transition taken by each automaton $\mathcal{A}_k$.
If $\edge[\autindex]$ is true, then the configuration change of $\mathcal{A}_k$ at the current position is closed-open; otherwise, it is open-closed.
Finally, for each variable $\variable \in \variables$, a corresponding \logic\ integer variable is introduced.

The configuration change determined by a discrete transition is explicitly encoded with a formula that expresses the effect of the transition on the network configuration. 
Conversely, since in \logic\ time progress is inherent in the model, the encoding does not explicitly deal with time transitions of traces because between any pair of adjacent positions $\timeposition$ and $\timeposition+1$ the time always advances.
To facilitate understanding and future discussions, a trace is written 
as:

\begin{align}
(\loc_0, \variablevaluationfunctionpar{0}, \clockvaluationfunctionpar{0}) 
\xrightarrow{\delta_0 \Lambda_0} 
(\loc_1, \variablevaluationfunctionpar{1}, \clockvaluationfunctionpar{1}) 
\xrightarrow{\delta_1 \Lambda_1}  
(\loc_2, \variablevaluationfunctionpar{2}, \clockvaluationfunctionpar{2}) \ldots \nonumber
\end{align}

where time and discrete transitions are paired together and the notation $(\loc_i, \variablevaluationfunctionpar{i}, \clockvaluationfunctionpar{i}) 
\xrightarrow{\delta_i \Lambda_i} 
(\loc_{i+1}, \variablevaluationfunctionpar{i+1}, \clockvaluationfunctionpar{i+1})$ is simply a rewriting of

\begin{align}
(\loc_i, \variablevaluationfunctionpar{i}, \clockvaluationfunctionpar{i}) 
\xrightarrow{\delta_i} 
(\loc'_i, \variablevaluationfunctionpar{i}', \clockvaluationfunctionpar{i}') 
\xrightarrow{\Lambda_i}
(\loc_{i+1}, \variablevaluationfunctionpar{i+1}, \clockvaluationfunctionpar{i+1}). \nonumber
\end{align}

for some configuration $(\loc'_i, \variablevaluationfunctionpar{i}', \clockvaluationfunctionpar{i}')$. 
Formula $\varphi_{\mathcal{N}}$ is built by assuming that the configuration of the network does not change over the intervals of time delimited by a pair of positions of the \CLTLoc\ model, except for clocks progressing.
Hence, any pair of positions $i$ and $i+1$ of the model of $\varphi_{\mathcal{N}}$ represents (the pair of transitions) $(\loc_i, \variablevaluationfunctionpar{i}, \clockvaluationfunctionpar{i}) \xrightarrow{\delta_i \Lambda_i} (\loc_{i+1}, \variablevaluationfunctionpar{i+1}, \clockvaluationfunctionpar{i+1})$.

The atomic propositions and variables, appearing in $\varphi_{\mathcal{N}}$, are interpreted with the following meaning:
\begin{itemize}
	\item if $l[\autindex]=j$ holds at position \timeposition, then automaton $\mathcal{A}_\autindex$ is in state $q^\autindex_j$ over the interval of time that starts at \timeposition\ and ends in $\timeposition+1$.
	\item if $\tr[\autindex]=j$ holds at position \timeposition, then automaton $\mathcal{A}_\autindex$ performs transition $j$ in $\timeposition+1$.
\end{itemize}
 
\begin{figure}
	\begin{tikzpicture}
	\draw node at (0,0) (a) {\small $c_0$};
	\draw node at (3,0) (b) {\small $c'_0$};
	\draw node at (3,0.8) (c) {\small $c_1$};
	\draw node at (7,0.8) (d) {\small $c'_1$};
	\draw node at (7,0) (e) {\small $c_2$};
	\draw node at (10,0) (f) {\small $c'_2$};
	\draw node at (10,0.8) (g) {\small $c_3$};
	\draw[->] (a) -- (b) node[midway,above]{};
	\draw[->] (b) -- (c) node[midway,right]{\footnotesize $e_1$};;
	\draw[->] (c) -- (d) node[midway,above]{};
	\draw[->] (d) -- (e) node[midway,left]{\footnotesize $e_2$};;
	\draw[->] (e) -- (f) node[midway,above]{};
	\draw[->] (f) -- (g) node[midway,right]{\footnotesize $e_3$};
	\draw node at (0.5,0)  {\tiny $|$};
	\draw node at (0.5,0)  {\tiny $|$};
	\draw node at (1.2,0)  {\tiny $|$};
	\draw node at (2,0)  {\tiny $|$};
	\draw node at (5,0.776)  {\tiny $|$};
	\draw node at (6,0.776)  {\tiny $|$};
	\draw node at (8.5,0)  {\tiny $|$};
	
	\pgfmathsetmacro{\ilocation}{-0.75}
		\draw[dashed] (-1,-0.5) -- (10,-0.5);
	\draw node at (-1,\ilocation) {\small $\timeposition=$};
	\draw node at (0,\ilocation) {\small $0$};
	\draw node at (0.5,\ilocation) {\small $1$};
	\draw node at (1.2,\ilocation) {\small $2$};
	\draw node at (2,\ilocation) {\small $3$};
	\draw node at (3,\ilocation) {\small $4$};
	\draw node at (5,\ilocation) {\small $5$};
	\draw node at (6,\ilocation) {\small $6$};
	\draw node at (7,\ilocation) {\small $7$};
	\draw node at (8.5,\ilocation) {\small $8$};
	\draw node at (10,\ilocation) {\small $9$};
	\draw[dashed] (-1,-1) -- (10,-1);
	\pgfmathsetmacro{\llocation}{-1.5}
	\pgfmathsetmacro{\dlocation}{-2}
	\pgfmathsetmacro{\xlocation}{-2.5}
	\pgfmathsetmacro{\edgelocation}{-3.5}
	\draw node at (-1,\llocation) {\small $\mathtt{l}=$};
	\draw node at (-1,\dlocation) {\small $\variable=$};
	\draw node at (-1,\xlocation) {\small $\clock=$};
	\draw node at (-1,\edgelocation) {\small $\mathtt{edge}=$};	
	\draw node at (0,\llocation) {\small $q_0$};
	\draw node at (0.5,\llocation) {\small $q_0$};
	\draw node at (1.2,\llocation) {\small $q_0$};
	\draw node at (2,\llocation) {\small $q_0$};
	\draw node at (3,\llocation) {\small $q_1$};
	\draw node at (5,\llocation) {\small $q_1$};
	\draw node at (6,\llocation) {\small $q_1$};
	\draw node at (7,\llocation) {\small $q_2$};
	\draw node at (8.5,\llocation) {\small $q_2$};
	\draw node at (10,\llocation) {\small $q_0$};
	\draw node at (0,\dlocation) {\small $0$};
	\draw node at (0.5,\dlocation) {\small $0$};
	\draw node at (1.2,\dlocation) {\small $0$};
	\draw node at (2,\dlocation) {\small $0$};
	\draw node at (3,\dlocation) {\small $2$};
	\draw node at (5,\dlocation) {\small $2$};
	\draw node at (6,\dlocation) {\small $2$};
	\draw node at (7,\dlocation) {\small $1$};
	\draw node at (8.5,\dlocation) {\small $1$};
	\draw node at (10,\dlocation) {\small $0$};
	\draw node at (0,\xlocation) {\small $0$};
	\draw node at (0.5,\xlocation) {\small $0.7$};
	\draw node at (1.2,\xlocation) {\small $\cdot$};
	\draw node at (2,\xlocation) {\small $\cdot$};
	\draw node at (3,\xlocation) {\small $3.2$};
	\draw node at (3,-3) {\small $x:=0$};
	\draw node at (5,\xlocation) {\small $\cdot$};
	\draw node at (6,\xlocation) {\small $\cdot$};
	\draw node at (7,\xlocation) {\small $4.5$};
	\draw node at (8.5,\xlocation) {\small $\cdot$};
	\draw node at (10,\xlocation) {\small $10$};
	\draw node at (10,-3) {\small $x:=0$};
	\draw node at (0,\edgelocation) {$\cdot$};
	\draw node at (0.5,\edgelocation) {\small $\cdot$};
	\draw node at (1.2,\edgelocation) {\small $\cdot$};
	\draw node at (2,\edgelocation) {\small $\cdot$};
	\draw node at (3,\edgelocation) {\small $]($};
	\draw node at (5,\edgelocation) {\small $\cdot$};
	\draw node at (6,\edgelocation) {\small $\cdot$};
	\draw node at (7,\edgelocation) {\small $)[$};
	\draw node at (8.5,\edgelocation) {\small $\cdot$};
	\draw node at (10,\edgelocation) {\small $]($};
	\draw[dashed] (-1,-4) -- (10,-4);
	\draw node at (2,-4.5) {\small $\tr[0]=e_1$};
	\draw node at (6,-4.5) {\small $\tr[0]=e_2$};
	\draw node at (8.5,-4.5) {\small $\tr[0]=e_3$};
	\draw node at (3,-5) {\small $x<5$};
	\draw node at (10,-5) {\small $x=10$};
	\draw node at (3,-5.5) {\small $x\leq 5$};
	\draw node at (7,-5.5) {\small $x\leq 5$};
	\draw[-] (3.5,-5.5) -- (6.5,-5.5) node[midway,above]{\footnotesize $\mathit{Inv}(q_1)$};
		\end{tikzpicture}
	\caption{Interpretation of atom in $\varphi_{\mathcal{N}}$.}
	\label{fig:phi-N}
\end{figure} 
 
\begin{example}\label{ex:cltloc-model-trace}
Figure~\ref{fig:phi-N} shows a trace of the automaton depicted in Fig.~\ref{fig:TaWithVariableExample}  that consists of various time transitions and three discrete transitions associated with events $e_1$, $e_2$ and $e_3$.
{
To facilitate readability, the discrete transitions such that $\Lambda[0]=\_$ are indicated with a vertical bar and the discrete transitions where at least one automaton executes (in the next position) a transition are drawn by showing the primed configurations.}
Every discrete transition corresponds to a unique position in the \logic\ model and every time transition determines the time progress between pairs of adjacent positions.
The first area below the trace shows the discrete positions \timeposition\ of the \logic\ model and the second one, for each position \timeposition, provides the values of the variables representing location $\loc$, variable \variable, clock \clock\ (a dot stands for a monotonically increasing positive value), and the value of variable $\mathtt{edge}$ (a dot represents an irrelevant assignment to $\mathtt{edge}$).
The first discrete transition labeled with $e_1$ occurs at position $4$, where the guard $\clock<5$ holds; at that moment, clock $\clock$, whose value is equal to $3.2$, is reset and the location changes from $q_0$ to $q_1$.
The second transition---associated with event $e_2$---occurs at position $7$ when $\clock=4.5$ holds, before the value of \clock\ violates the invariant $\clock \leq 5$ of location $q_1$, and produces the change of location from $q_1$ to $q_2$.
The last transition, associated with event $e_3$, occurs at position $9$ with $\clock=10$, it resets $\clock$ and changes location to $q_0$.
In the \logic\ model, discrete transitions are represented one position earlier than the position where they actually occur, namely at position 3, 6, and 8,
respectively.

\begin{remark}
As specified in Definition~\ref{def:trace}, Condition~\ref{firstTransitionIsDelay}, the first configuration change is associated with a time transition.
Thus, it is not possible to fire a discrete transition at position $i=0$.
\end{remark}

The third segment of Fig.~\ref{fig:phi-N} shows the exact positions where transitions $\tr[0] =e_1$,  $\tr[0] =e_2$ and $\tr[0] =e_3$ occur (first line), the positions where the guards are evaluated in the \logic\ model (second line) and the sequence of positions where the invariant of $q_1$ holds (third line). 
For convenience of notation, the assignment for $\mathtt{edge}$ is shown by means of symbols $]($ and $)[$.
At positions 4, 7 and 9 it is shown one among the possible assignments that are compatible with the clock values in the model.
For instance, $)[$ is also possible at position 4.
\end{example}
 
\begin{table*}
\caption{Encoding of the automaton.}
\label{tab:automaton}
\centering
\small
\begin{tabular}{c c c c c c }
\toprule
 \multicolumn{2	}{ c |}{
$\varphi_{1}\coloneqq \underset{ \autindex \in [1,K]}{\bigwedge} (\loc[\autindex]=0) $\hspace{1cm}
} 
&
 \multicolumn{2}{| r |}{
 \hspace{1cm}
$\varphi_{2}  \coloneqq  \underset{\variable \in \variables}{\bigwedge} \var=\variablevaluationfunction^0(\variable)$ 
\hspace{1cm}
}
&
\multicolumn{2}{| r }{
$\varphi_{3} \coloneqq 
\underset{ 
\begin{subarray}{l}
\autindex \in [1,K]
\end{subarray}
}{\bigwedge}  Inv(\loc[\autindex])  $} \\
\midrule
\multicolumn{6}{c}{
	$\varphi_{4} \coloneqq 
	\underset{ 
		\begin{subarray}{l}
		\autindex \in [1,K]\\
		q \in Q_\autindex
		\end{subarray}
	}{\bigwedge} \left(  (\loc[\autindex]=q \land \tr[\autindex] = \notr) \rightarrow \LTLx( Inv(q) \land r_1(Inv(q)) \right)$} 
\\
\midrule
\multicolumn{6}{ c }{
\begin{tabular}{c}
$
\varphi_{5} \coloneqq  
\underset{ 
\begin{subarray}{l}
\autindex \in [1,K],t \in T_k
\end{subarray}
}{\bigwedge}
\tr[k] = {t}
\LTLimplication
\left( \loc[k] = t^- \land  \phi_{\variableconstraint}  \wedge \LTLx (\loc[k]={t^+}   \wedge   \phi_{\clockconstraint} \wedge \phi_{\varassignement} \wedge   \phi_{\resettedclocks} \land \phi_\mathit{edge}(t^-,t^+,k) \right)$ \\ 
$\phi_\mathit{edge}(a,b,i) \coloneqq \phi_{\action^{](}}(a,b,i) \ \lor \ \phi_{\action^{)[}}(a,b,i)$ \\ \\
$\phi_{\action^{](}}(a,b,i) \coloneqq Inv(a) \land r_2(Inv_w(b)) \land \edge[i]$\\ \\
$\phi_{\action^{)[}}(a,b,i) \coloneqq Inv_w(a) \land r_2(Inv(b)) \land \neg \edge[i]$
\end{tabular}
}
\\
\midrule
\multicolumn{6}{c}{
$\varphi_{6} \coloneqq 
\underset{ 
\autindex \in [1,K], 
q,q' \in Q_\autindex \mid
q \neq q'
}{\bigwedge}  \left( ((\loc[\autindex]=q) \wedge \LTLx (\loc[\autindex]=q')) \rightarrow 
\underset{
t \in T_\autindex,
t^- = q,
t^+ = q'}{\bigvee} (\tr[k] = t) \right)$}\\
\midrule
\multicolumn{3}{ c |}{$\varphi_{7} \coloneqq \underset{\clock \in \clocks}{\bigwedge} \left( \LTLx(\clock_0 = 0 \vee \clock_1 = 0) \rightarrow 
\underset{
\begin{subarray}{c}
\autindex \in [1,\numberOfTA]\\
t \in T_\autindex \mid \clock \in t_s
\end{subarray}
}{\bigvee} \tr[k] = t \right)$}
&
\multicolumn{3}{| c}{$\varphi_8 \coloneqq \underset{\variable \in \variables}{\bigwedge} \left( (\neg (\variable =\LTLx (\variable))) \rightarrow \underset{
\begin{subarray}{c}
\autindex \in [1,\numberOfTA]\\
t \in T_\autindex \mid \variable \in U(t)
 \end{subarray}}{\bigvee} \tr[k] = t\right)$} \\
\bottomrule
\end{tabular}
\end{table*}

A network of TA is 
transformed into a \logic\ formula 
using the formulae in the Table~\ref{tab:automaton}.
In the following, the 
invariant of location $q$ is indicated with $Inv(q)$
and the weak version of $Inv(q)$, where all relations $<,>$ are replaced with $\leq,\geq$ and the equalities are replaced with false, is denoted with $Inv_w(q)$.
With slight abuse of notation, $Inv(q)$ and $Inv_w(q)$ are used in Fig.~\ref{tab:automaton} to indicate the CLTLoc formula corresponding to the invariant of $q$ and its weak version.

Before explaining the formulae of Table~\ref{tab:automaton} in details, a short description is first provided. Formulae $\varphi_{1}$, $\varphi_{2}$ and $\varphi_{3}$ specify the initial conditions that must hold in the TA.
Formula $\varphi_{4}$ specifies the behavior of a TA when a time transition is fired.
Formula $\varphi_{5}$ and formulae $\varphi_{6}$-$\varphi_{8}$ 
define, respectively, the necessary and the sufficient conditions that must hold when a discrete transition is performed {with a symbol different from $\notr$, i.e., when the corresponding symbol in the trace is not $\_$\footnote{In $\varphi_{5} \ldots \varphi_{8}$, symbols of \universeOfActions\ do not appear, as events are only used to define how the TA synchronize 
and they will be discussed in the next section.}.}
More precisely, formulae $\varphi_{6}$-$\varphi_{8}$ force the execution of (at least) a transition in the network if a reset or a change of the value of variables or locations occurs and
they prevent a variation in the values of the model that is not caused by the occurrence of a transition.
Each of the formulae is discussed in detail in the following.

\begin{itemize}
\item Formula $\varphi_{1}$ specifies that at position 0 every automaton is in its initial state;
\item Formula $\varphi_{2}$ specifies that at position 0 every variable $\var$ is assigned its initial value $\variablevaluationfunctionpar{0}(\variable)$;
\item  Formula $\varphi_{3}$ specifies that at position 0 the invariant of the initial state of each TA holds;
\item 
Formula $\varphi_{4}$ encodes the case~\ref{epsilon-transition} of Def.~\ref{def:confatrans} requiring that, for every $1\leq \autindex\leq \numberOfTA$, 
if automaton $\mathcal{A}_\autindex$ does not perform a transition
{when other TA do}, 
then the clocks
still satisfy the invariant of the current location $\loc[\autindex]$.
{Formula $Inv(q) \land r_1(Inv(q))$ guarantees that the values of the active clocks satisfy $Inv(q)$, even in the case when they are reset. 
Unlike $Inv(q)$, $r_1(Inv(q))$ does not make use of formulae $\phi_{x\sim \constant}$ to evaluate constraints $x\sim \constant$ when the clock assignments are equal to 0.
In fact, the evaluation of a constraint $x\sim \constant$ through $\phi_{x\sim \constant}$ only depends on the active value of $x$, which is always different from 0 by definition (see Sec.~\ref{sec:clock-encoding}).
To this goal, the mapping $r_1$, defined below, replaces every constraint of the form $\clock \sim \constant$ with the value of $0 \sim \constant$ if $\clock$ is reset. 
Given an atomic formula $\beta(\clock)$ of the form $\clock \sim \constant$ or  $\neg(\clock \sim \constant)$, 
where $\sim \in \{<,=,>\}$, let $\beta_{[x\leftarrow c]}$ be true or false depending on the value of the formula obtained by replacing $\clock$ in $\beta(\clock)$ with the constant $c$.
Then, for a clock constraint $\clockconstraint$, let $r_1(\clockconstraint)$ be defined as the formula obtained from $\clockconstraint$ by replacing, for all clocks $\clock$, 
each occurrence of an atomic formula $\beta(x)$ with the formula: 
}

\begin{align}
(\clock_0=0 \lor \clock_1=0) \rightarrow \beta_{[x\leftarrow 0]}.\nonumber
\end{align}

When $\clock$ is reset, $\clock_0$ or $\clock_1$ are equal to 0, and $\beta_{[x\leftarrow 0]}$ 
 establishes the value of $\beta(x)$.
Since a CLTLoc model represents the occurrence of a transition one time position before the effect,
$\varphi_{4}$ imposes that the invariant $Inv(q)$ associated with the current location $\loc[\autindex]=q$ is satisfied in the next position if no transition is taken in the current one, i.e, when $\trans[\autindex]=\notr$. 
The value of $\edge[\autindex]$ is irrelevant because no transition of $\mathcal{A}_\autindex$ is occurring.
Hence, no constraint is specified for it.

\item Formula $\varphi_{5}$ encodes the case~\ref{firing} of Def.~\ref{def:confatrans}.
Similarly to formula $\varphi_4$, the value of active clocks and their resets have to be considered carefully in the evaluation of the invariants of $\loc[\autindex]$ and $\loc'[\autindex]$.
In the definition of discrete transitions of Def.~\ref{def:confatrans}, the invariant of $\loc[\autindex]$ is always evaluated with respect to $\clockvaluationfunction$ whereas the one of $\loc'[\autindex]$ is evaluated with respect to $\clockvaluationfunction'$.
Function $r_2$ is used to encode the conditions in \ref{invariant-leftclosedrightopen} and \ref{invariant-leftopenrightclosed}, that require the satisfaction of $Inv(\loc'[\autindex])$, or possibly its weak version, with $\clockvaluationfunction'$.
Let $\gamma$, $\beta$ and $\beta_{[x\leftarrow c]}$ be formulae defined as above and let $r_2(\clockconstraint)$ be the formula where all the occurrences $x\sim \constant$ are replaced with the formula 

\begin{align}
((\clock_0>0 \land \clock_1>0) \rightarrow \phi_{\clock \sim \constant}) \land ((\clock_0=0 \lor \clock_1=0) \rightarrow (x \sim \constant)_{[x\leftarrow 0]}) \nonumber
\end{align}

and all the occurrences $\neg(x \sim \constant)$ are replaced with the formula 

\begin{align}
((\clock_0>0 \land \clock_1>0) \rightarrow \neg\phi_{\clock \sim \constant}) \land ((\clock_0=0 \lor \clock_1=0) \rightarrow \neg(x \sim \constant)_{[x\leftarrow 0]}). \nonumber
\end{align}
For instance, by means of $r_2$, the value of a constraint $x\sim \constant$ occurring in $Inv(\loc'[\autindex])$ is either $(x\sim \constant)_{[x\leftarrow 0]}$, if $x$ is reset by the transition (i.e., $\clockvaluationfunction'(x) = 0$); or $\phi_{x\sim \constant}$ if it is not. 
In the latter case, its value is determined by the active clock for $x$ that is equal to $\clockvaluationfunction'(x)$.

Let $t$ be $\transtavar \in T_\autindex$, let 
$\phi_\clockconstraint$ be the \CLTLoc\ formula expressing the guard $\clockconstraint$, 
let $\phi_{\variableconstraint}$ be the formula expressing the constraints on the integer variables,
let $\phi_\resettedclocks$ be the formula $\underset{\clock \in \resettedclocks}{\bigwedge}  (\clock_0 = 0 \vee \clock_1 = 0)$ encoding in \CLTLoc\ the effect of resets applied by $t$ on clocks in $\resettedclocks$ and 
let $\phi_{\varassignement}$ be the formula translating the assignments of the form $\variable:=\exp$ that appear in $t$ into a (semantically) equivalent \logic{} formula.

Recall that $\tr[\autindex]=t$ represents the execution of transition $t$ in the next position of time and that $t^+$, $t^-$ are the locations $q$ and $q'$ connected by $t$.
If $\tr[\autindex]=t$ holds at position $\timeposition$ (hence, transition $t$ is actually performed by $\mathcal{A}_\autindex$ at $\timeposition+1$) then:
\begin{enumerate}
\item Automaton $\mathcal{A}_\autindex$ is currently in $q$ and changes location to $q'$ in the next position of time. Hence, $\loc[\autindex]$ at $i$ and $\timeposition+1$ is, respectively, $q=t^-$ and $q'=t^+$.
\item The condition on the integer variables holds. Formula $\phi_\variableconstraint$ is 
satisfied 
at position \timeposition\ because the effect of $t$ on the integer variables is enforced at $\timeposition+1$.
\item The condition on the clocks holds. Formula $\phi_\gamma$ is satisfied by the clock assignments in $\timeposition+1$.
\item All the clock resets and variable assignments  are performed. Hence, formulae $\phi_\resettedclocks$ and $\phi_\varassignement$ hold at position $\timeposition+1$, i.e., resets and updates are performed when $t$ is actually taken.
\item The configuration change is either open-closed or closed-open. Formula $\phi_\mathrm{edge}$ encodes the cases~\ref{invariant-leftclosedrightopen} and~\ref{invariant-leftopenrightclosed}.
If the configuration change is closed-open then, according to~\ref{invariant-leftclosedrightopen}, $\clockvaluationfunction \models Inv(\loc[\autindex])$ 
and  $\clockvaluationfunction' \models_w Inv(\loc'[\autindex])$ must hold.
The first condition is ensured by $Inv(t^-)$, while the second by $r_2(Inv_w(t^+))$, where $t^- = \loc[\autindex]$ and $t^+ = \loc'[\autindex]$.
Since the transition is closed-open, then $\edge[i]$ is set to true.
If the configuration change is open-closed then, according to~\ref{invariant-leftopenrightclosed}, $\clockvaluationfunction \models_w Inv(\loc[\autindex])$ and $\clockvaluationfunction' \models Inv(\loc'[\autindex])$ must hold.
The first condition is ensured by $Inv_w(t^-)$, while the second by $r_2(Inv(t^+))$, where $t^- = \loc[\autindex]$ and $t^+ = \loc'[\autindex]$.
Since the transition is not closed-open, then $\edge[i]$ is false.

\end{enumerate}
\item Formula $\varphi_{6}$ specifies that if automaton $\mathcal{A}_\autindex$ modifies its location from $q$ to $q'$ over two adjacent positions, then a transition $t \in T_\autindex$ such that $q=t^-$ and $q' = t^+$ is taken at $i+1$---i.e., $\tr[\autindex]=t$ holds at position $i$.
\item Formula $\varphi_{7}$ specifies that if a reset of $x$ (i.e., either $x_0=0$ or $x_1=0$) occurs at $i+1$ then a transition resetting clock $x$ is performed at $i+1$---i.e., $\tr[\autindex]=t$ holds at position $i$, for some $1\leq \autindex\leq \numberOfTA$ and $t\in T_\autindex$ such that $x \in t_s$ (where $t_s$ is the set of clocks that transition $t$ resets).
\item Formula $\varphi_{8}$ specifies that if the value of variable $\variable$ in $i+1$ is not equivalent to the one in $i$ then a transition modifying $\variable$ is performed at $i+1$---i.e., $\tr[\autindex]=t$ holds at position $i$, for some $1\leq \autindex\leq \numberOfTA$ and $t\in T_\autindex$ such that $\variable \in U(t)$ (where $U(t)$ is the set of integer variables that transition $t$ updates by means of the assignments in $\varassignement$).
\end{itemize}

Formula $\varphi_{\netid}$ encoding the network is then defined in Formula~\eqref{f:network}. 
\begin{align}
\label{f:network}
\varphi_{\netid}=\phi_{\mathit{clock}} \land \varphi_{1} \land \varphi_{2} \land  \varphi_{3} \land \LTLg( \underset{4 \leq i \leq 8}{\bigwedge} \varphi_{i})
\end{align}

\begin{remark}
\label{rem:simedges}
The proposed encoding allows the simultaneous execution,  by different automata, of transitions with edges of type  $\action^{)[}$ and $\action^{](}$.
\end{remark}

\newcommand{\n}{\mathtt{n}}

The correctness of the \CLTLoc\ encoding is demonstrated by showing a correspondence between the traces of a network $\netid$ {(Def~\ref{def:trace})} and the models $(\pi,\sigma,\iota)$ of {the \CLTLoc\ formula}  $\varphi_{\netid}$.
Without loss of generality, assume that the set of clocks $\clocks$ of $\netid$ is not empty (if $\clocks = \emptyset$, one could always add a clock that is never reset, nor tested, and the behavior of the network would not change).

At the core of the proof there is a mapping, $\map$, between traces of TA and \logic{} models.
First of all, every trace $\eta$ of TA can be given the form
$$
(\loc_0, \variablevaluationfunctionpar{0}, \clockvaluationfunctionpar{0}) 
\xrightarrow{\delta_0 \Lambda_0} 
(\loc_1, \variablevaluationfunctionpar{1}, \clockvaluationfunctionpar{1}) 
\xrightarrow{\delta_1 \Lambda_1}  
(\loc_2, \variablevaluationfunctionpar{2}, \clockvaluationfunctionpar{2}) \ldots
$$
since any pair of consecutive time transitions can be seen as a pair of time transitions separated by a discrete transition in which the action is $\_$ (i.e., nothing happens in between).
Then, given a trace $\eta$, a \logic{} model $(\pi,\sigma,\iota)$ that belongs to $\map(\eta)$ can be built as follows.

For every position $h \in \naturalNum$, the function $\iota$ assigns each \logic{} variable $\loc[\autindex]$  to location $\loc_{h}[\autindex]$ (that is, $\iota(h, \loc[\autindex]) = \loc_{h}[\autindex]$).
The value of variables $n \in \variables$ is defined as $\iota(h,\variable)=\variablevaluationfunctionpar{h}(\variable)$.
Function $\pi$ assigns values to the atomic propositions of $\loc_h$, for every index $h \in \naturalNum$: $p\in \pi(h)$ if, and only if, $p\in  L(\loc_h[\autindex])$, for some $\autindex$.
The clock valuation $\sigma$  specifies the assignments to both active and inactive clocks.
Recall that $(\loc_h, \variablevaluationfunctionpar{h}, \clockvaluationfunctionpar{h}) 
\xrightarrow{\delta_h \Lambda_h} 
(\loc_{h+1}, \variablevaluationfunctionpar{h+1}, \clockvaluationfunctionpar{h+1})$ is a shortcut for
$$
(\loc_h, \variablevaluationfunctionpar{h}, \clockvaluationfunctionpar{h}) 
\xrightarrow{\delta_h} 
(\loc'_h, \variablevaluationfunctionpar{h}', \clockvaluationfunctionpar{h}') 
\xrightarrow{\Lambda_h}
(\loc_{h+1}, \variablevaluationfunctionpar{h+1}, \clockvaluationfunctionpar{h+1})
$$
where the time transition $(\loc_h, \variablevaluationfunctionpar{h}, \clockvaluationfunctionpar{h}) 
\xrightarrow{\delta_h} 
(\loc'_h, \variablevaluationfunctionpar{h}', \clockvaluationfunctionpar{h}') $ only updates clocks, while locations and integer variables are unchanged.
For convenience of writing, let $x_a$ and $x_i$ be, respectively, the \emph{active} and the \emph{inactive} clocks associated with $x$ in a given position.
Initially, the active clock is $x_0$ (i.e., $x_a$ is $x_0$), and its value is $0$; that is, $\sigma(0,x_0) = 0$ and $\iota(0,\x_v)=0$ (the value of $x_1$---i.e., $x_i$---is arbitrary, hence it can be any positive value).
For all $h \in \naturalNum$, $x \in \clocks$, define $\sigma(h+1,x_a) = \sigma(h,x_a) + \delta_h$; and also $\sigma(h+1,x_i) = \sigma(h,x_i) + \delta_h$ unless $x$ is reset in position $h+1$ of the trace, in which case $\sigma(h+1,x_i) = 0$; clock $x_i$ becomes the active clock from $h+1$ (excluded), and the value of {$\iota(h+2,\x_v)=(\iota(h+1,\x_v)+1) \mod 2$.

The value of predicate $\tr[\autindex]$ at position $h$ is defined based on configurations $(\loc'_h, \variablevaluationfunctionpar{h}', \clockvaluationfunctionpar{h}')$,
$(\loc_{h+1}, \variablevaluationfunctionpar{h+1}, \clockvaluationfunctionpar{h+1})$ and on symbol $\Lambda_h[\autindex]$.
In particular, define $\iota(h,\tr[\autindex])=\notr$ when $\Lambda_h[\autindex]=\_$.
Instead, define $\iota(h,\tr[\autindex])=t$ when there is $t=q \xrightarrow{\clockconstraint, \variableconstraint, \action, \resettedclocks, \varassignement} q'$
such that $\loc'_h[\autindex]=q$, $\loc_{h+1}[\autindex]=q'$, $\Lambda_h[\autindex]=\action$, and $\clockvaluationfunction_h'$, $\clockvaluationfunction_{h+1}$, $\variablevaluationfunctionpar{h}'$ $\variablevaluationfunctionpar{h+1}$ are compatible with $\clockconstraint, \variableconstraint, \resettedclocks, \varassignement$
according to the semantics of Def.~\ref{def:confatrans}.
Finally, the value of $\edge[\autindex]$ is set according to $\Lambda_h[\autindex]$: if $\Lambda_h[\autindex] = (\action, b)$ for some action $\alpha$, then $\edge[\autindex] \in \pi(h+1)$ if, and only if, $b = \ie$; otherwise, if $\Lambda_h[\autindex] = \_$, the value of $\edge[\autindex]$ in $h+1$ is arbitrary (it is also arbitrary in the origin).

Notice that, given a trace $\eta$, its mapping $\map(\eta)$ contains more than one \logic{} model (for example, the value of $x_i$ in the origin is arbitrary).
The inverse mapping $\map^{-1}$, instead, defines, for each \logic{} model $(\pi,\sigma,\iota)$, a unique trace $\eta = \map^{-1}((\pi,\sigma,\iota))$;
indeed, the presence of at least a clock $x$ in $\netid$ entails that at each position $h+1$ at least one of the 
two corresponding \logic{} clocks $x_0, x_1$ is not reset, which in turn uniquely identifies the delay $\delta_{h}$.

The rest of this section sketches the proof for the following result.

\begin{theorem}
\label{theorem:mapping}
Let $\netid$ be a network of TA and $\varphi_\netid$ be its corresponding \logic{} formula. 

For every trace $\eta$ of $\netid$, every \CLTLoc\ model $(\pi, \sigma, \iota)$ such that $(\pi, \sigma, \iota) \in \map(\eta)$ is a model of $\varphi_\netid$.

Conversely, for all \CLTLoc\ models $(\pi, \sigma, \iota)$ of $\varphi_\netid$, $\map^{-1}((\pi, \sigma, \iota))$ is a trace of $\netid$.
\end{theorem}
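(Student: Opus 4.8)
The plan is to unfold $\varphi_\netid$ from Formula~\eqref{f:network} and to match its conjuncts, one by one, against the conditions of Definitions~\ref{def:confatrans} and~\ref{def:trace}, using the mapping $\map$ (resp.\ $\map^{-1}$) described just before the statement. The whole argument rests on one technical ingredient --- a \emph{clock-correspondence} fact relating the doubled \logic{} clocks $x_0,x_1$ and the ``active clock'' discipline of Section~\ref{sec:clock-encoding} to the clock valuations along a trace --- so I would establish that first. Concretely: an interpretation satisfies $\phi_{\mathit{clock}}$ iff at every position exactly one of $x_0,x_1$ is designated active by $\x_v$, the two copies are never reset simultaneously, their resets strictly alternate, and $x_0$ is active with value $0$ at the origin; together with the strict-progress requirement built into \logic{} clock assignments and with the standing assumption $\clocks\neq\emptyset$, this yields for every position $h$ a \emph{unique} delay $\delta_h>0$ by which all non-reset copies grow between $h$ and $h+1$, with $\sum_h\delta_h=\infty$ --- which is exactly what makes $\map^{-1}$ well defined and total. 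The delicate companion fact is that the copy active at $h+1$ still carries the \emph{pre-reset} value of $x$ (a reset is routed to the inactive copy, which becomes active only from $h+2$), so the $\phi_{\clockconstraint}$-style translation of a clock constraint read at $h+1$ evaluates against the post-delay/pre-reset valuation $\clockvaluationfunctionpar{h}'$, while the wrappers $r_1,r_2$ reintroduce the value $0$ precisely for the clocks reset at $h+1$; hence ``$Inv(t^-)$ at $h+1$'', ``$r_2(Inv_w(t^+))$ at $h+1$'' and the like coincide with the $\models$/$\models_w$ tests on $\clockvaluationfunctionpar{h}'$ and $\clockvaluationfunctionpar{h+1}$ required by Def.~\ref{def:confatrans}(\ref{invariant-leftopenrightclosed})--(\ref{invariant-leftclosedrightopen}).

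For the first direction, fix a trace $\eta$ in the normalized form $(\loc_0,\cdot,\cdot)\xrightarrow{\delta_0 \Lambda_0}(\loc_1,\cdot,\cdot)\xrightarrow{\delta_1 \Lambda_1}\cdots$ and any $(\pi,\sigma,\iota)\in\map(\eta)$ built as in the construction preceding the theorem; then I would check each conjunct. Formula $\phi_{\mathit{clock}}$ holds because the construction only ever resets the inactive copy, flips $\x_v$ one step later, and starts from $x_0=0$. Formulae $\varphi_1,\varphi_2,\varphi_3$ hold by Def.~\ref{def:trace}(\ref{invariantholdsinstate}). Formula $\varphi_4$ holds by Def.~\ref{def:confatrans}(\ref{keepstate}): when $\Lambda_h[\autindex]=\_$ the location is preserved and its invariant is satisfied both before and after the step, which the clock-correspondence fact turns into $Inv(q)\wedge r_1(Inv(q))$ at $h+1$ ($r_1$ absorbing resets performed by other automata). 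Formula $\varphi_5$ holds by Def.~\ref{def:confatrans}(\ref{firing}): source/target locations are $t^-,t^+$; the guard, variable guard, resets and assignments translate to $\phi_{\clockconstraint},\phi_{\variableconstraint},\phi_{\resettedclocks},\phi_{\varassignement}$ at $h+1$; and the disjunct of $\phi_\mathit{edge}$ for $b\in\{\ei,\ie\}$ is witnessed, $\edge[\autindex]$ being set accordingly and the weak/strong invariants matching cases~(\ref{invariant-leftopenrightclosed})--(\ref{invariant-leftclosedrightopen}) via the clock-correspondence fact. Finally $\varphi_6,\varphi_7,\varphi_8$ hold because, in the construction, a change of location, a copy $x_j$ becoming $0$, or a change of an integer variable between $h$ and $h+1$ can only come from the discrete transition $\Lambda_h$ and hence is witnessed by some $\tr[\autindex]\neq\notr$ at $h$; this is the contrapositive of the frame clause Def.~\ref{def:confatrans}(\ref{keepclocksandvariables}) and of the state-preservation clause Def.~\ref{def:confatrans}(\ref{keepstate}), together with the transition-existence requirement that $\varphi_6$ itself imposes.

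For the converse, take $(\pi,\sigma,\iota)\models\varphi_\netid$. Using the clock preliminaries, extract the unique delays $\delta_h>0$ and let $\eta=\map^{-1}((\pi,\sigma,\iota))$ be the trace with $\loc_h[\autindex]=\iota(h,\loc[\autindex])$, $\variablevaluationfunctionpar{h}(\variable)=\iota(h,\variable)$, clock valuations $\clockvaluationfunctionpar{h},\clockvaluationfunctionpar{h}'$ reconstructed from $\sigma$ and the reset pattern, and $\Lambda_h$ read off from $\tr[\cdot]$ and $\edge[\cdot]$ at $h$ (with $\Lambda_h[\autindex]=\_$ exactly when $\tr[\autindex]=\notr$). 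I would then verify every clause of Def.~\ref{def:trace}: the initial configuration is the required one (from $\varphi_1$--$\varphi_3$ and the origin clause of $\phi_{\mathit{clock}}$), the first move is a delay, and the alternating form is automatic. Each step $(\loc_h,\cdot,\cdot)\xrightarrow{\delta_h}(\loc'_h,\cdot,\cdot)\xrightarrow{\Lambda_h}(\loc_{h+1},\cdot,\cdot)$ is a legal configuration change of Def.~\ref{def:confatrans}: the time transition meets~(\ref{time-transition}) because every automaton, whether it moves ($\varphi_5$) or stays ($\varphi_4$), satisfies at least weakly its current invariant under the post-delay valuation; and the discrete transition meets~(\ref{firing}) and~(\ref{keepstate}) because $\varphi_5$ supplies the guard, variable guard, resets, assignments and edge-dependent invariant conditions for each moving automaton, $\varphi_4$ supplies the invariant for each stationary one, and $\varphi_6$--$\varphi_8$ forbid any location, clock or variable change not backed by a taken transition --- i.e.\ exactly the frame clause~(\ref{keepclocksandvariables}). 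Since $\map^{-1}$ returns a single trace, this establishes the correspondence in both directions.

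The main obstacle is not any individual conjunct but the bookkeeping that glues them together: proving the clock-correspondence fact in the presence of the one-position shift between a transition being announced ($\tr[\autindex]=t$ at $h$) and executed (at $h+1$), the routing of resets to the inactive copy so that the copy active at $h+1$ still witnesses the pre-reset source invariant, and the precise role of $r_1$ and $r_2$ in restoring the value $0$ for reset clocks so that weak/strong invariant satisfaction in \logic{} lines up with $\models_w$/$\models$ in Def.~\ref{def:confatrans}. A secondary subtlety is checking that the quantifier shape of $\varphi_6$--$\varphi_8$ complements the frame clause Def.~\ref{def:confatrans}(\ref{keepclocksandvariables}) exactly --- neither forbidding a legal change nor permitting a spurious one.
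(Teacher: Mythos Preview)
Your proposal is correct and follows essentially the same route as the paper's own proof: a conjunct-by-conjunct match of $\varphi_1$--$\varphi_8$ against Definitions~\ref{def:confatrans} and~\ref{def:trace}, driven by a case analysis on $\Lambda_h[\autindex]$ (equivalently on $\iota(h,\tr[\autindex])$), with the active/inactive clock correspondence and the $r_1,r_2$ wrappers doing the work of aligning $\phi_{\clockconstraint}$-style tests at $h{+}1$ with $\clockvaluationfunctionpar{h}'$ and $\clockvaluationfunctionpar{h+1}$. The only cosmetic difference is that you isolate the clock-correspondence fact as an explicit preliminary, whereas the paper weaves it into the case analysis; your packaging is arguably cleaner but the content is the same.
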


\begin{proof}
\emph{From traces to models.}\label{sec:traces2models}

Let $\eta =
(\loc_0, \variablevaluationfunctionpar{0}, \clockvaluationfunctionpar{0}) 
\xrightarrow{\delta_0 \Lambda_0} 
(\loc_1, \variablevaluationfunctionpar{1}, \clockvaluationfunctionpar{1}) 
\xrightarrow{\delta_1 \Lambda_1}  
(\loc_2, \variablevaluationfunctionpar{2}, \clockvaluationfunctionpar{2}) \ldots
$
be a trace of $\netid$, and $(\pi, \sigma, \iota) \in \map(\eta)$.
Formulae $\varphi_1$, $\varphi_2$ and $\varphi_3$ of $\varphi_\netid$ are  satisfied since they trivially hold at position $0$.

The following arguments show that, at every position $h \in \naturalNum$, \logic\ formulae $\varphi_{4}, \varphi_{5}$ are satisfied by $(\pi,\sigma,\iota)$ (i.e., $(\pi, \sigma, \iota), h \models \varphi_{4}$ and $(\pi, \sigma, \iota), h \models \varphi_{5}$ both hold).
Different cases are considered, depending on the nature of $\Lambda_h[\autindex]$ (where $1\leq \autindex\leq \numberOfTA$).
\begin{enumerate}
	\item Case $\Lambda_h[\autindex]=\_$. In this case, the  conditions  \ref{epsilon-transition} of Definition~\ref{def:confatrans} hold in the trace:
	\begin{itemize}
		\item $\loc_h[\autindex]=\loc_{h+1}[\autindex]$;
		\item $\clockvaluationfunction_h' \models Inv(\loc_h[\autindex])$ and $\clockvaluationfunction_{h+1} \models Inv(\loc_{h+1}[\autindex])$.
	\end{itemize}
	The antecedent of $\varphi_4$ holds at position $h$ and formula $Inv(\loc[\autindex]) \land r_1(Inv(\loc[\autindex]))$ holds at $h+1$---hence satisfying the entailment---because the second condition of \ref{epsilon-transition} holds in the trace.
	Indeed, all clock constraints $\beta$ in $Inv(\loc[\autindex])$ of the form $x\sim\constant$ correspond to formula $\phi_\beta$ in $\varphi_{4}$, evaluated with the values of the active clocks at position $h+1$.
	Hence, $\clockvaluationfunction_h'(x)$ satisfies $\beta$ if, and only if, $\phi_\beta$ holds at position $h+1$, since by construction $\sigma(h+1,x_a)=\clockvaluationfunction_h'(x)$.
	Moreover, in case of reset of $x$ at $h+1$ in the trace (in which case, by construction, $\sigma(h+1,x_{i})=\clockvaluationfunction_{h+1}(x)=0$ holds), every clock constraint $\beta$ of the form $x\sim\constant$ or $\neg (x\sim\constant)$ corresponds to \logic{} formula $r_1(\beta)$, which in turn reduces to the constant $\beta_{[x\leftarrow 0]}$, and which equals to the evaluation of $\beta$ in $\clockvaluationfunction_{h+1}(x)$.\\
		If $\Lambda_h[\autindex]=\_$, then by construction $\iota(h, \tr[\autindex])=\notr$ and $\varphi_5$ is vacuously satisfied.

	\item Case $\Lambda_h[\autindex]= \action^{](}$. Let $t$ be the transition such that $\iota(h, \tr[\autindex]) = t$.
	By Def.~\ref{def:confatrans} condition~\ref{firing} , the constraints  hold in the trace (notice that $\variablevaluationfunctionpar{h} = \variablevaluationfunctionpar{h}'$):
	\begin{itemize}		
			\item $\clockvaluationfunctionpar{h}' \models \clockconstraint$ and $\variablevaluationfunctionpar{h}' \models \variableconstraint$, 
			\item $\clockvaluationfunctionpar{h+1}(\clock)=0$ for all $\clock \in \resettedclocks$, 
			\item $(\variablevaluationfunctionpar{h+1},\variablevaluationfunctionpar{h}') \models \varassignement$,
			\item $\clockvaluationfunctionpar{h}' \models Inv(\loc_h[\autindex])$ and 
			\item $\clockvaluationfunctionpar{h+1} \models_w Inv(\loc_{h+1}[\autindex])$.
	\end{itemize}
	Thus, formula $\varphi_4$ vacuously holds in $h$.\\
	Formula 	$\varphi_5$ holds since the antecedent of $\varphi_5$ holds for the case $\tr[\autindex]=t$ and
	the consequent  is satisfied as follows. 
	By construction, it holds that $\iota(h, \loc[\autindex]) = \loc_{h}[\autindex]$ and $\iota(h+1, \loc[\autindex]) =  \loc_{h+1}[\autindex]$.
	Formula $\phi_\clockconstraint$ holds at position $h+1$ with $\sigma(h+1,x_a)=\clockvaluationfunction_h'(x)$, for every clock $x$; also, $\phi_\variableconstraint$ holds at position $h$ with $\iota(h,\variable) = \variablevaluationfunctionpar{h}(\variable)=\variablevaluationfunctionpar{h}'(\variable)$ for every variable $n$.
	Formulae $\phi_\resettedclocks$ and $\phi_\varassignement$ hold at $h+1$, since $\sigma(h+1,x_i)=\clockvaluationfunctionpar{h+1}(x)=0$ for all clock $x \in \resettedclocks$, and $\iota(h+1,\variable) = \variablevaluationfunctionpar{h+1}(\variable)$ for all variable $n \in \variables$.
	The first condition on the invariants of $\loc[\autindex]$ is evaluated as follows.
	Each clock constraint $\beta$ in $Inv(t^-)$
	corresponds to \logic{} formula $\phi_\beta$, evaluated with the values of the active clocks at position $h+1$ of $\sigma$. 
	Hence, since $\sigma(h+1,x_a)=\clockvaluationfunction_h'(x)$, $\phi_\beta$ holds at position $h+1$ if, and only if, $\clockvaluationfunction_h'(x)$ satisfies $\beta$.
	The formula $r_2(Inv_w(t^+))$ must hold at $h+1$ to guarantee the enforcement of the last condition in the model.
	By definition of $r_2$, each clock constraint $\beta$ in $Inv_w(t^+)$ of the form $x \sim \constant$ or $\neg(x \sim \constant)$ corresponds to formula $\phi_\beta$, evaluated with the values in $\sigma$ of the active clocks at position $h+1$, if the clock is not reset; otherwise, $\beta$ reduces to $\beta_{[x\leftarrow 0]}$, whose value is that of $\beta$ when valuation $\clockvaluationfunction_{h+1}$ is considered.
	By construction, $\edge[k]  \in \pi(h+1)$ holds, so $\phi_\mathit{edge}(t^-,t^+,k)$ also holds in $h+1$.

	\item Case $\Lambda_h[\autindex]= \action^{)[}$. The proof is similar to the previous one. The only differences are the conditions on the invariants, that are $Inv_w(t^-)$ and $r_2(Inv(t^+))$.
	However, the same arguments of the previous case hold.
\end{enumerate}
Formulae $\varphi_6$
and $\varphi_8$ are trivially satisfied when the location $\loc[\autindex]$ does not change in $\eta$---hence it does not change in $(\pi, \sigma, \iota)$---and for all variables $n$ that have the same value in $h$ and $h+1$; similarly, $\varphi_7$ is satisfied when a clock $x$ is not reset in $h+1$ (hence, both $x_a$ and $x_i$ are not 0).
Otherwise, if between positions $h$ and $h+1$ in the trace the location changes, or a variable is updated, or a clock is reset, then there must be a transition taken in the TA, hence by construction in $(\pi, \sigma, \iota)$ both the antecedents and the consequences of formulae $\varphi_6 - \varphi_{8}$ trivially hold.

\emph{From models to traces.}\label{sec:models2traces}
Let $(\pi, \sigma, \iota)$ be a model of $\varphi_\netid$.
The proof shows that $\eta = \map^{-1}((\pi, \sigma, \iota)) =
(\loc_0, \variablevaluationfunctionpar{0}, \clockvaluationfunctionpar{0}) 
$ $\xrightarrow{\delta_0 \Lambda_0} 
(\loc_1, \variablevaluationfunctionpar{1}, \clockvaluationfunctionpar{1}) 
\xrightarrow{\delta_1 \Lambda_1}  
(\loc_2, \variablevaluationfunctionpar{2}, \clockvaluationfunctionpar{2}) \ldots$ is a trace of $\netid$ according to Def.~\ref{def:trace} (and the related Def.~\ref{def:confatrans}).

Since formulae $\varphi_{1}, \varphi_{2}, \varphi_{3}$  hold in at position 0 of $(\pi, \sigma, \iota)$, 
the values of $\loc_0, \variablevaluationfunctionpar{0}, \clockvaluationfunctionpar{0}$ defined by mapping $\map^{-1}$  constitute an initial configuration of $\netid$ according to Def.~\ref{def:trace}.

The following arguments show that, at each position $h \in \naturalNum$ of $\eta$, for all $\autindex$ (with $1\leq \autindex\leq \numberOfTA$) the conditions of Def.~\ref{def:confatrans} hold.
Separate cases are considered, depending on the value of $\iota(h, \tr[\autindex])$.
	
	\begin{enumerate}
		\item
		$\iota(h, \tr[\autindex]) =\notr$. Then, $\loc[\autindex]$ is the same in $h$ and $h+1$ and $\varphi_4$ ensures that $Inv(\loc[\autindex])\land r_1(Inv(\loc[\autindex]))$ holds at $h+1$, which in turn entails that $\clockvaluationfunction'_{h} \models Inv(\loc[\autindex])$  and $\clockvaluationfunction_{h+1} \models Inv(\loc[\autindex])$ both hold.
		Hence, condition \ref{keepstate} holds.

		\item
		$\iota(h, \tr[\autindex]) = t \neq \notr$ and $\edge[\autindex] \in \pi(h+1)$, where $t=q \xrightarrow{\clockconstraint, \variableconstraint, \action, \resettedclocks, \varassignement} q'$ is a transition of $\mathcal{A}_k$.
		Since $\varphi_5$ holds in $h$, then the following conditions hold: 
		\begin{itemize}
			\item $\iota(h, \loc[\autindex])=q$ and $\iota(h+1, \loc[\autindex])=q'$,
			\item $\phi_\variableconstraint$ holds in $h$ and
			\item $\phi_\clockconstraint$, $\phi_\varassignement$, $\phi_\resettedclocks$ and $\phi_\mathrm{edge}(q,q',\autindex)$
			hold in $h+1$.
		\end{itemize}
		All conditions in~\ref{firing}, and in particular those in \ref{invariant-leftclosedrightopen} corresponding to the case $\action^{](}$, for some action $\action$, hold as follows:
		\begin{enumerate}
			\item $\clockvaluationfunctionpar{h}' \models \clockconstraint$ and $\variablevaluationfunctionpar{h}' \models \variableconstraint$ (condition \ref{guards}) are guaranteed by $\phi_\clockconstraint$ holding in $h+1$ and $\phi_\variableconstraint$ holding in $h$, respectively (notice that in $h+1$ variables are updated by $\varassignement$ and clocks are possibly reset, but $\clockconstraint$ is evaluated through the active clocks).
			\item $\clockvaluationfunctionpar{h+1}(\clock)=0$ (condition \ref{resets}) holds, for all $\clock \in \resettedclocks$, since $\phi_\resettedclocks$ holds in $h+1$.
			\item $(\variablevaluationfunctionpar{h+1},\variablevaluationfunctionpar{h}) \models \varassignement$ (condition \ref{assignments}) holds, since $\phi_\varassignement$ holds in $h+1$.
			\item $\clockvaluationfunctionpar{h}' \models Inv(\loc'_h[\autindex])$ holds because, by $\phi_\mathrm{edge}(q,q',\autindex)$, $Inv(q)$ holds in $h+1$, and $\map^{-1}$ defines that $\loc'_h[\autindex]$ is $q$; then, the first condition of \ref{invariant-leftclosedrightopen} holds.
			\item $\clockvaluationfunctionpar{h+1} \models_w Inv(\loc_{h+1}[\autindex])$ holds because, again by $\phi_\mathrm{edge}(q,q',\autindex)$, $r_2(Inv_w(q'))$ holds in $h+1$, and $\map^{-1}$ defines that $\loc_{h+1}[\autindex]$ is $q'$; then, the second condition of \ref{invariant-leftclosedrightopen} holds.
		\end{enumerate}
		
		\item
		$\iota(h, \tr[\autindex]) = t \neq \notr$ and $\edge[\autindex] \notin \pi(h+1)$.
		The proof, which now focuses on condition \ref{invariant-leftopenrightclosed}, is similar to the previous case---with the only differences being the conditions on the invariants, which are now $Inv_w(q)$ and $r_2(Inv(q'))$---and it is omitted for simplicity.
	\end{enumerate}
	Condition \ref{keepclocksandvariables} follows from $\varphi_7$ and $\varphi_8$ holding at $h$, which impose the occurrence of a transition that resets a clock $x$ or modifies the value of a variable $\variable$ if $x$ is reset or $\variable$ is updated at position $h+1$.
	Finally, condition \ref{time-transition} holds because, as mentioned above, formula $\varphi_{4}$ imposes that $Inv(\loc[\autindex])$ holds at $h+1$, hence $\clockvaluationfunction'_{h} \models Inv(\loc[\autindex])$ also holds, which in turn implies that $\clockvaluationfunction'_{h} \models_w Inv(\loc[\autindex])$ holds (notice that a model for $\varphi_{\netid}$ cannot reach a location that includes constraints of the form $x \sim \constant$ in its invariant, since time is strictly monotonic, and the residence time in the location cannot be null).
	In addition, formula $\varphi_{5}$ defines that $Inv(\loc[\autindex])$ or $Inv_w(\loc[\autindex])$ holds at $h+1$, and both entail that $\clockvaluationfunction'_{h} \models_w Inv(\loc[\autindex])$ holds.
\end{proof}

\subsection{Encoding liveness, synchronization and edge constraints ($\boldsymbol{\varphi_{\mathit{l}}}$, $\boldsymbol{\varphi_{\mathit{s}}}$ and $\boldsymbol{\varphi_{\mathit{ef}}}$)}
\label{subsec:livandsynch}
As seen in Section~\ref{sec:livenessandsynch}, different liveness conditions and synchronization mechanisms for networks of TA can be considered. 
This section describes how the liveness conditions and synchronization mechanisms presented in Section~\ref{sec:livenessandsynch} can be encoded in \CLTLoc.
Several liveness conditions could be required for a network of TA, so Formula $\varphi_\mathit{l}$ captures a conjunction of the following conditions, each one encoded by a \CLTLoc\ formula in Table~\ref{tab:livenessCLTLOoc} (if no liveness condition is required, $\varphi_\mathit{l}$ reduces to true).

\begin{table}[t]
\centering
\caption{Formulae encoding the different liveness conditions.}
\label{tab:livenessCLTLOoc}
\begin{tabular}{ c c }
\toprule
\textbf{Liveness} & \textbf{Property} \\
\toprule
 \makecell{\emph{Strong} \emph{transition}}  &
$ \underset{\autindex \in [1,K]}{\bigwedge} \LTLg \left( \LTLf \left( \underset{\trans \in T_\autindex}{\bigvee} \tr[k]=t \right) \right)$\\
\midrule
\makecell{\emph{Weak} \emph{transition}} &
$  \LTLg \left( \LTLf \left( \underset{\autindex \in [1,K], \trans \in T_\autindex}{\bigvee} \tr[k]=\trans \right) \right)$\\
\midrule
\makecell{\emph{Strong guard}} &
$ \LTLg \left( \underset{\autindex \in [1,K]}{\bigwedge} \left(\underset{q \in Q_\autindex}{\bigwedge} \loc[\autindex]=q \rightarrow \LTLf \left( \bigvee_{t \in T_\autindex, t^- = q} \phi_{\trans_{\gamma}} \wedge \phi_{\trans_{\gamma_{var}}} \right) \right)\right)$ \\
\midrule
\makecell{\emph{Weak guard}} &
$\LTLg \left( \underset{\autindex \in [1,K]}{\bigvee} \left(\underset{q \in Q_\autindex}{\bigwedge}  \loc[\autindex]=q \rightarrow \LTLf \left( \bigvee_{t \in T_\autindex, t^- = q} \phi_{\trans_{\gamma}} \wedge \phi_{\trans_{\gamma_{var}}} \right) \right)\right)$\\
\bottomrule
\end{tabular}
\end{table}

\begin{itemize}
	\item \emph{Strong  transition liveness:} at any time instant, at least one transition in \emph{every}  automaton is eventually fired.
	\item \emph{Weak transition liveness:} at any time instant, at least one transition in \emph{at least one} automaton is eventually fired.
	\item \emph{Strong guard liveness:} at any time instant, \emph{every} automaton eventually reaches a state that has an outgoing transition whose guard holds.
	\item \emph{Weak guard liveness:} at any time instant,  \emph{at least one} automaton eventually reaches a state that has an outgoing transition whose guard holds.
\end{itemize}

\newcommand{\othersynconexcept}{\mathtt{sync}\text{-}\mathtt{on}\text{-}\mathtt{but}}
\newcommand{\sameedge}{\mathtt{same}\text{-}\mathtt{edge}}

\newcommand{\existsinsyncon}{\mathtt{sync}\text{-}\mathtt{on}}

Synchronization is encoded by relying on the following abbreviations, where $ \action \in \actions$, $\automatonIndex,\automatonIndextwo \in [1, \numberOfTA]$ and $S$ is a set of indices in $[1,\numberOfTA]$.
Recall that, given a transition $t$, $t_e \in \actions$ represents the symbols that labels $t$.

\begin{align}
&  \phi_{\existsinsyncon}(\automatonIndex,\action)  \coloneqq  \underset{
	t \in T_{\automatonIndex}\mid
	\trans_e=  \action}{\bigvee}
(t[ \automatonIndex ]= t)  \label{f:existsyncon} \\
&\phi_{\othersynconexcept}(S,\action) 
 \coloneqq  
\underset{	\automatonIndexthree \in \{i \mid i \in [1,\numberOfTA]\}\setminus S}{\bigvee} \phi_{\existsinsyncon}(\automatonIndexthree,\action) \label{f:existsynconnoautomaton}
\\
&\phi_{\sameedge}(\automatonIndex,\automatonIndextwo) 				    \coloneqq  \LTLx (\rclosed{\automatonIndex}\leftrightarrow\rclosed{\automatonIndextwo}) 	 \label{f:sameedge}		\\		\nonumber 
\end{align}

\noindent Formula $\phi_{\existsinsyncon}(\automatonIndex,\action)$  specifies that  a transition $t$ of $\mathcal{A}_\automatonIndex$ labeled with the action \action\ is fired.
Formula $\phi_{\othersynconexcept}(S,\action)$ specifies that a transition $t$ 
labeled with the action \action , and belonging to a
TA whose index does not belong to set $S$, 
is fired.
Finally, $\phi_{\sameedge}(\automatonIndex,\automatonIndextwo)$ specifies that the transitions taken by $\mathcal{A}_\automatonIndex$ and $\mathcal{A}_\automatonIndextwo$ have the same edge structure, i.e., either they are both open-closed or both closed-open.

\begin{table*}[t]
\centering
\caption{Formulae encoding different types of synchronizations.}
\label{tab:synchronizationCLTLocFormulae}
\small
\begin{tabular}{ p{1.2cm}  p{12cm} }
\toprule
\textbf{Name} & \textbf{Property} \\
\toprule
\multirow{2}{*}{\shortstack[l]{Channel}} 
&
$\upsilon_1 \coloneqq 
\underset{
\begin{subarray}{c}
\autindex \in [1, \numberOfTA] \\
 t \in T_\autindex \mid \trans_e = \action !
\end{subarray}
}{\bigwedge} 
 \left( t[\autindex]=t \LTLimplication 
 \underset{
\begin{subarray}{c} 
\autindextwo \in [1, \numberOfTA],   \autindextwo \neq \autindex  \end{subarray}
  }{\bigvee} 
 \left(
	\begin{gathered}
	 \phi_{\existsinsyncon}(\autindextwo,\action?) 
	  \wedge 
	 \neg\phi_{\othersynconexcept}(\{\autindex,\autindextwo\},\action?) \\
	 \wedge \\
	 \phi_{\sameedge}(\autindex,\autindextwo)
	  \end{gathered}  
\right)
 \right)$\\
& 
\\
&
    $\upsilon_2 \coloneqq 
\underset{
\begin{subarray}{c} 
\autindex \in [1, \numberOfTA]  \\
  t \in T_\autindex \mid \trans_e =\action ?
  \end{subarray}
  }{\bigwedge}
\left(
 t[\autindex]=t \LTLimplication 
 \underset{
\begin{subarray}{c}  
	\autindextwo \in [1, \numberOfTA], \autindextwo \neq \autindex
 \end{subarray} 
 }{\bigvee} 
\left( 
	 \phi_{\existsinsyncon}(\autindextwo,\action!) 
  \wedge  
 \neg\phi_{\othersynconexcept}(\{\autindex,\autindextwo\},\action!)
\right)
 \right)
$\\
\midrule 
\multirow{2}{*}{\shortstack[l]{Broadcast} 
 }& 
$
	\upsilon_1 \coloneqq 
		\begin{gathered} 
	\underset{
		\begin{subarray}{c}
		\autindex \in [1, \numberOfTA]  \\
		t \in T_\autindex \mid
		\trans_e  = \action \#
		\end{subarray}
	}{\bigwedge}  
	\left( \tr[\autindex]=t \rightarrow
	\left(	
	\neg\phi_{\othersynconexcept}(\{\autindex\},\alpha \#)	
		\right)
			\right)	
				\\ \wedge \\
	\underset{
		\begin{subarray}{c}
		\autindex \in [1, \numberOfTA]  \\
		t \in T_\autindex \mid
		\trans_e  = \action \#
		\end{subarray}
	}{\bigwedge}  
	\left( \tr[\autindex]=t \rightarrow
	\left(	
	\underset{
		\begin{subarray}{c}
		\autindextwo \in [1, \numberOfTA] \\
		\autindextwo \neq \autindex
		\end{subarray}}{\bigwedge}  
	\left(
	\begin{gathered}
	\phi_{\existsinsyncon}(\autindextwo,\action @) \land \phi_{\sameedge}(\autindex,\autindextwo)
	\\ \vee \\
	\left(
	\begin{gathered}
	\underset{
		\begin{subarray}{l}
		t'  \in T_{k'}\mid
		t'_e=\alpha @
		\end{subarray}}{\bigwedge}  (\LTLx(\neg \phi_{t'_\gamma}) \LTLor \neg \phi_{t'_{\gamma_{var}}} \lor l[\autindextwo]\not=t'^-)\\
	\end{gathered}\right)
	\end{gathered}
	\right) 
	\right)
	\right)
	\end{gathered}
	$ 	\vspace{0.5cm}
   \\
   &  $
\upsilon_2 \coloneqq 
\underset{
\begin{subarray}{c}
\autindex  \in [1, \numberOfTA]  \\
 t \in T_\autindex \mid \trans_e =\alpha @
  \end{subarray}}{\bigwedge}
\left( t[\autindex]= t \LTLimplication 
\phi_{\othersynconexcept}(\{k\},\alpha \#)
  \right) 
  $
   \\
   \midrule
   \parbox[t]{2.5cm}{One-to-\\many} &
$
	\upsilon_1 \coloneqq 
	\underset{
		\begin{subarray}{c}
		\autindex \in [1, \numberOfTA]  \\
		t \in T_\autindex \mid
		\trans_e  = \action \&
		\end{subarray}
	}{\bigwedge}  
	\left( \tr[\autindex]=t \rightarrow
	\left(	
	\neg\phi_{\othersynconexcept}(\{\autindex\},\action \&)	
	\wedge
	\phi_{\othersynconexcept}(\{\autindex\},\action *)
	\right)
	\right)
			$ 	   
   \\  
   &  $
\upsilon_2 \coloneqq 
\underset{
\begin{subarray}{c}
\autindex  \in [1, \numberOfTA] , \\
 \trans \in T_\autindex \mid \trans_e =\action *
  \end{subarray}}{\bigwedge}
\left( t[\autindex]= t \LTLimplication 
\phi_{\othersynconexcept}(\{k\},\action \&)  
  \right) 
  $\\
   \bottomrule
\end{tabular}
\end{table*}

The abbreviations in Formulae~\eqref{f:existsyncon},~\eqref{f:existsynconnoautomaton} and~\eqref{f:sameedge} are used in Table~\ref{tab:synchronizationCLTLocFormulae}  to encode the channel-based, the broadcast and the one-to-many synchronizations.
The intermediate \CLTLoc\ formula $\phi_{\mathit{sync\_type}}$ (where $\mathit{sync\_type}$ is $\mathit{channel}$, $\mathit{broadcast}$ or $\textit{one-to-many}$) is $\phi_{\mathit{sync\_type}} \coloneqq \LTLg (\upsilon_1 \wedge \upsilon_2)$, where $\upsilon_1$ and $\upsilon_2$ depend on the selected type of synchronization.
Then, if multiple synchronizations are considered, $\phi_{\mathit{s}}$ is the conjunction of the corresponding $\phi_{\mathit{sync\_type}}$, one for each type of synchronization.
Note that the syntax of MITL adopted in this work does not allow event symbols to appear in the formulae, being the language limited to constrain the values of the variables in \variables\ and the atomic propositions in \propositions\ over the time. 
For this reason, symbols of \universeOfActions\ do not have a corresponding \logic\ representation, yet they are used to instantiate the formulae encoding the synchronization among the TA. 

\begin{itemize}
\item \emph{Channel-based synchronization.}
Formula $\upsilon_1$ specifies that any sending event $\action \channelsend$ in a transition $t$ of an automaton $\autindex$ must be matched by exactly one corresponding receiving event $\action \channelreceive$ of a transition $t^\prime$ of  another automaton $\autindextwo$.
This is specified by stating that there exists one of the automata  with index $\autindextwo$ that syncs on $\action \channelreceive$ and all the others (with index different than $\autindex$ and $\autindextwo$) do not sync on $\action \channelreceive$. Furthermore, the shape of the edges of the transitions of the automata that sync on action $\action$ must correspond.

Formula  $\upsilon_2$ specifies that any receiving event $\action \channelreceive$ must be matched by exactly one corresponding sending event $\action \channelsend$ in one of the other automata.

\item \emph{Broadcast synchronization.} 
Formula $\upsilon_1$ first specifies that if an automaton $\autindex$ broadcasts on a channel $\action$, no other automaton broadcasts on that channel.
Then, it specifies that if an automaton $\autindex$ broadcasts on a channel $\action$, all the other automata $\autindextwo$ either sync and receive on that channel, and also match the shape of the transition, or they do not sync.
If the automaton does not sync, either it is in a state that has no outgoing transition labeled with  $\action \broadcastreceive$, or the guards of the outgoing transitions labeled with $\action \broadcastreceive$ are not satisfied (i.e., no transitions are enabled).

Formula  $\upsilon_2$ specifies that any receiving event $\action \broadcastreceive$ must be matched by exactly one corresponding sending event $\action \broadcastsend$ in one of the other automata.

\item \emph{One-to-many.}
Formula $\upsilon_1$ specifies that if an event $\action \manytomanysend$ is sent, no other automaton sends the same event, and
 at least one automaton receives the event $\action \manytomanyreceive$. 
Formula $\upsilon_2$ specifies that if an event $\action \manytomanyreceive$ is received by an automaton, some automaton has sent event $\action \manytomanysend$.
\end{itemize}

As mentioned in Remark~\ref{rem:simedges}, the semantics---and corresponding encoding---of networks of TA allows for transitions with different types of edged to be taken at the same time.
As depicted in Fig.~\ref{fig:phi-N}, it also allows the same automaton to take transitions with different edges over time.
However, one might desire to restrict this behavior (for example for synchronization reasons), and only allow transitions to be taken with a certain type of edge.
These restrictions (if any), are captured by formula  $\varphi_{\mathit{ef}}$.
Table~\ref{tab:edges} shows some examples of restrictions and corresponding \logic{} constraints.
In particular, the ``closed-open'' (resp., open-closed) restriction states that, when a transition is taken, it must with some symbol $\action^{](}$ (resp., $\action^{)[}$).
The ``unrestricted'' case obviously means that no constraint is introduced, hence it corresponds to true.
Other possibilities could be envisaged, but these are the most relevant for our purposes.

The following theorem extends Theorem \ref{theorem:mapping} to liveness conditions and synchronization mechanisms.

\begin{theorem}
\label{theorem:mapping2}
Let $\netid$ be a network of TA,
$l$ be a (set of) liveness conditions,
$s$ be the (set of) synchronization mechanisms used in network $\netid$, and $\mathit{ef}$ the set of restrictions on edges.
Let  $\Phi_\netid$ 
be the \logic{} formula  $ \varphi_{\mathcal{N}} 
\wedge \varphi_{\mathit{l}} \wedge \varphi_{\mathit{s}} \land \varphi_{\mathit{ef}}$.

For every trace $\eta$ of $\netid$ that satisfies the selected liveness conditions $l$, synchronization mechanisms $s$, and edge restrictions $edge$,
any \CLTLoc\ model $(\pi, \sigma, \iota)$ such that $(\pi, \sigma, \iota) \in \map(\eta)$ is a model of $\Phi_\netid$.

Conversely, for each \CLTLoc\ model $(\pi, \sigma, \iota)$ of $\Phi_\netid$, $\map^{-1}((\pi, \sigma, \iota))$ is a trace of $\netid$ that satisfies the selected liveness conditions $l$, synchronization mechanisms $s$, and edge restrictions $ef$.
\end{theorem}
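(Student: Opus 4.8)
\begin{proof-sketch}
The plan is to bootstrap from Theorem~\ref{theorem:mapping}. Since $\Phi_\netid$ is the conjunction $\varphi_{\mathcal{N}}\wedge\varphi_{\mathit{l}}\wedge\varphi_{\mathit{s}}\wedge\varphi_{\mathit{ef}}$, and Theorem~\ref{theorem:mapping} already establishes the correspondence between traces of $\netid$ and models of $\varphi_{\mathcal{N}}$ through $\map$ and $\map^{-1}$, it suffices to prove two things about the remaining conjuncts: (i) if $\eta$ is a trace satisfying the chosen liveness conditions $l$, synchronization mechanisms $s$, and edge restrictions $\mathit{ef}$, then every $(\pi,\sigma,\iota)\in\map(\eta)$ also satisfies $\varphi_{\mathit{l}}$, $\varphi_{\mathit{s}}$, and $\varphi_{\mathit{ef}}$; and (ii) if $(\pi,\sigma,\iota)$ is a model of $\Phi_\netid$, then the trace $\map^{-1}((\pi,\sigma,\iota))$ --- which by Theorem~\ref{theorem:mapping} is already a trace of $\netid$ --- additionally satisfies $l$, $s$, and $\mathit{ef}$. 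Everything is a position-by-position unfolding of the definitions that rests on three facts about $\map$ (dually about $\map^{-1}$): at every position $h$, $\iota(h,\tr[\autindex])=\notr$ iff $\Lambda_h[\autindex]=\_$; $\iota(h,\tr[\autindex])=t$ iff automaton $\mathcal{A}_\autindex$ fires $t$ in the configuration change at $h$; and $\edge[\autindex]\in\pi(h+1)$ iff the edge component of $\Lambda_h[\autindex]$ is $\ie$.

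For liveness I would pair each row of Table~\ref{tab:livenesssemantic} with its \logic{} counterpart in Table~\ref{tab:livenessCLTLOoc}. For strong (resp.\ weak) transition liveness, the trace-level statement ``for every $h$ and every (resp.\ some) $\autindex$ there is $j>h$ with $\Lambda_j[\autindex]\neq\_$'' translates, via the first fact above, into ``$\bigvee_{t\in T_\autindex}\tr[\autindex]=t$ holds at some $j>h$'', that is into $\bigwedge_{\autindex}\LTLg\LTLf(\cdots)$ (resp.\ $\LTLg\LTLf(\bigvee_{\autindex,t}\cdots)$), and the equivalence is immediate in both directions. For strong/weak guard liveness one additionally uses that $\phi_{t_\gamma}$ and $\phi_{t_{\gamma_{var}}}$ hold at a position exactly when the clock and variable valuations there satisfy the guard $t_g$ and the variable constraint $t_d$; here a short extra argument is needed, namely that an automaton stuck forever in a state $q$ is forced by the \logic{} formula to satisfy an outgoing guard of $q$ while still residing in $q$ (hence actually enabling the transition), and one must reconcile the strict-future quantifier $j>h$ of the trace definition with the reflexive $\LTLf$ of the encoding, which interacts with the convention of recording a transition one position before its effect (the same position alignment already used for $\phi_{\clockconstraint}$ in the proof of Theorem~\ref{theorem:mapping}).

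For synchronization, I would fix a configuration change $(\loc',\variablevaluationfunction',\clockvaluationfunction')\xrightarrow{\Lambda}(\loc,\variablevaluationfunction,\clockvaluationfunction)$ and an index $\autindex$ whose label carries a sending qualifier, read off the corresponding row of Table~\ref{tab:networksemantics}, and translate it through $\map$. The existence --- and, for channels and broadcast, uniqueness --- of matching receivers becomes precisely the conjunctions and disjunctions of the abbreviations $\phi_{\existsinsyncon}$ and $\phi_{\othersynconexcept}$ occurring in $\upsilon_1,\upsilon_2$ of Table~\ref{tab:synchronizationCLTLocFormulae}; the clause requiring synchronizing automata to agree on the edge type becomes $\phi_{\sameedge}$, which holds because $\map$ sets $\edge[\cdot]$ from the $\ei/\ie$ components; and, in the broadcast case, the alternative ``$\mathcal{A}_{\autindextwo}$ has no enabled transition labeled with $\action\broadcastreceive$'' is mirrored by the disjunct forcing every such transition $t'$ of $\mathcal{A}_{\autindextwo}$ to be non-enabled, i.e.\ $\LTLx(\neg\phi_{t'_\gamma})\lor\neg\phi_{t'_{\gamma_{var}}}\lor \loc[\autindextwo]\neq {t'}^-$. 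The reverse direction reads the same equivalences backwards on $\map^{-1}((\pi,\sigma,\iota))$. The edge restrictions $\varphi_{\mathit{ef}}$ are the easiest: the ``closed-open'' restriction states that every fired transition uses $\ie$, i.e.\ $\tr[\autindex]\neq\notr\rightarrow\edge[\autindex]$ holds everywhere, which is verbatim the matching row of Table~\ref{tab:edges}, and ``unrestricted'' is trivially $\top$.

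The hard part will be the synchronization step, and within it the broadcast encoding, whose $\upsilon_1$ nests a case split between ``the receiver synchronizes with a matching edge'' and ``the receiver has no enabled $\action\broadcastreceive$-transition''. Making the ``exactly one sender'' and ``all and only matching receivers'' requirements line up faithfully with $\neg\phi_{\othersynconexcept}(\{\autindex\},\action\broadcastsend)$, the family of $\phi_{\existsinsyncon}(\autindextwo,\action\broadcastreceive)$, and the non-enabledness disjunct --- in both directions and simultaneously with the edge-matching constraint --- is where the real work lies; a secondary delicate point, flagged above, is getting the ``eventually''/one-position-shift bookkeeping right in the guard-liveness cases. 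Once this dictionary between trace-level conditions and \logic{} subformulae is in place, the rest is routine.
\end{proof-sketch}
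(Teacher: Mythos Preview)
Your proposal is correct and follows exactly the approach the paper indicates: the paper explicitly omits the proof ``for reasons of brevity'' and states that it is a straightforward extension of Theorem~\ref{theorem:mapping}, obtained by checking that each \logic{} formula in Tables~\ref{tab:livenessCLTLOoc}, \ref{tab:synchronizationCLTLocFormulae} and \ref{tab:edges} encodes the corresponding semantic condition in Tables~\ref{tab:livenesssemantic} and \ref{tab:networksemantics}. Your write-up actually supplies the case analysis the paper leaves implicit; one small caveat is that the closed-open row of Table~\ref{tab:edges} is literally $\LTLg\bigwedge_{\autindex}\rclosed{\autindex}$ rather than the guarded version $\tr[\autindex]\neq\notr\rightarrow\rclosed{\autindex}$ you describe, so for the ``any $(\pi,\sigma,\iota)\in\map(\eta)$'' direction you must use that $\map$ leaves $\edge[\autindex]$ arbitrary at positions with $\Lambda_h[\autindex]=\_$ and pick it true there.
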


The proof is omitted for reasons of brevity, as it is rather standard. Indeed, it is a straightforward extension of the proof of Theorem~\ref{theorem:mapping}, and follows from the fact that each 
CLTLoc formula listed in Table~\ref{tab:livenessCLTLOoc} (resp., Table~\ref{tab:synchronizationCLTLocFormulae}) encodes the corresponding semantics described in Table~\ref{tab:livenesssemantic} (resp., Table~\ref{tab:networksemantics}); in addition, the formulae of Table~\ref{tab:edges} capture the corresponding restrictions on edges.

\begin{table*}[t]
\centering
\caption{Formulae encoding different types of edges.}
\label{tab:edges}
\small
\begin{tabular}{ c  c }
\toprule
\textbf{Name} & \textbf{Property} \\
\toprule
closed-open & $\LTLg  \underset{\automatonIndex [1,K]}{\bigwedge}  \rclosed{\automatonIndex}$ \\
open-closed & $\LTLg  \underset{\automatonIndex [1,K]}{\bigwedge}  \neg \rclosed{\automatonIndex}$  \\
unrestricted & $\top$  \\ 
   \bottomrule
\end{tabular}
\end{table*}

\section{Checking the satisfaction of MITL formulae over TA}
\label{sec:checkingMITLI}
Traces encode executions of TA by means of denumerable sequences of time and discrete transitions.
However, the evolution of a network of TA is continuous, hence it is more naturally represented by means of \emph{signals} (see Section \ref{sec:mitl}).
{Timed words, instead of signals, are commonly adopted to represent the semantics of TA: although they are expressive enough in many cases, they cannot describe 
the values at the edge of signals---i.e., in correspondence of configuration changes.
For instance, in a temporal logic such as MITL one can indeed state properties whose value is affected by signal edges; e.g., the set 
of signals of symbol \proposition{} that change value only over intervals that are left-closed/right-open can be specified with the MITL formula 
{$(\proposition \Rightarrow \proposition \LTLuntil_{(0,+\infty)} \top) \land (\neg \proposition \Rightarrow \neg \proposition \LTLuntil_{(0,+\infty)} \top)$}.
Therefore, model checking a TA against such an expressive language requires modeling edges as well.
}

Traces are tightly bound to signals: 
intuitively, given a trace \tracesymbol, the projection over the real line of the values of its integer variables and atomic propositions  determines a signal \signalsymbol.
To be able to consistently associate signals with traces of a TA, however, we need impose the following restriction on traces.

\begin{definition}
\label{rem:tracerestriction}
Let $\netid$ be a network of TA. A trace $\eta$ of $\netid$ is \emph{edge-consistent} if, for any configuration change $(\loc'_h, \variablevaluationfunctionpar{h}', \clockvaluationfunctionpar{h}') 
\xrightarrow{\Lambda_h}
(\loc_{h+1}, \variablevaluationfunctionpar{h+1}, \clockvaluationfunctionpar{h+1})$
there are two transitions $ \loc'_h[ k ] \xrightarrow{\clockconstraint, \variableconstraint, \action, \resettedclocks, \varassignement} \loc_{h+1}^\prime[ k ]$ and $\loc'_h[ \bar{k} ] \xrightarrow{\bar{\clockconstraint}, \bar{\variableconstraint}, \bar{\action}, \bar{\resettedclocks}, \bar{\varassignement}} \loc_{h+1}^\prime[ \bar{k} ]$, of two distinct TA $k,\bar{k}$, which both set the value of variable $\variable$ (in a compatible manner), then the edge of the transitions is the same; that is, either they are $\action^{](}$ and $\bar{\action}^{](}$, or they are $\action^{)[}$ and $\bar{\action}^{)[}$.
\end{definition}

In the rest of this section, only traces that are edge-consistent are considered.

Let \tracesymbol\ be an edge-consistent trace 
{$
(\loc_0, \variablevaluationfunctionpar{0}, \clockvaluationfunctionpar{0}) 
\xrightarrow{\delta_0} 
(\loc'_0, \variablevaluationfunctionpar{0}', \clockvaluationfunctionpar{0}') 
\xrightarrow{\Lambda_0}
(\loc_1, \variablevaluationfunctionpar{1}, \clockvaluationfunctionpar{1}) 
\xrightarrow{\delta_1}  \ldots
$; we indicate by $\eventtime(e)$ the ``time'' of a symbol $e$ (where $e$ can be either $\delta$ or $\Lambda$), defined as follows:}
\begin{itemize}
	\item $\eventtime(\delta_0) = 0$;
	\item $\eventtime(\Lambda_h) = \eventtime(\delta_{h}) + \delta_{h}$ {for all $h\geq 0$};
	\item $\eventtime(\delta_h) = \eventtime(\Lambda_{h-1})$ {for all $h > 0$}.
\end{itemize}

Recall that, given a trace $\eta$, its associated word $w(\eta)$ is the sequence $\delta_0 \Lambda_0  \delta_1 \Lambda_1 \dots$; also, given a set of assignments $\mu$, $U(\mu)$ is the set of variables updated by $\mu$.
Let $(\loc,$ $\variablevaluationfunction,$ $\clockvaluationfunction)$ be a configuration; we denote as  $c(\loc, \variablevaluationfunction, \clockvaluationfunction)$ 
the pair $(\cup_{1\leq k \leq K} L(\loc_h[k]), \variablevaluationfunctionpar{h}) \in \partSet(AP) \times \intNum^\variables$
of the atomic propositions and variable assignments that hold in the configuration $(\loc,$ $\variablevaluationfunction,$ $\clockvaluationfunction)$.

\begin{definition}\label{def:signal}
{Let \tracesymbol\ be an edge-consistent trace of a network $\netid$ of TA.
}
	The \emph{signal} \signalsymbol\ associated with \tracesymbol\ is the function $\signalsymbol:   \realNum_{\geq 0} \rightarrow \partSet(AP) \times \intNum^\variables $ such that:
	\begin{enumerate}
		\item\label{signal-time-0} $\signalsymbol(0)=c(\loc_0, \variablevaluationfunctionpar{0}, \clockvaluationfunctionpar{0})$;
		\item\label{signal-time-t} for all $\delta_h$ in $w(\tracesymbol)$, for all $r \in \realNum_{\geq 0}$ such that $\eventtime(\delta_h) < r < \eventtime(\delta_h)+\delta_h$
		then $\signalsymbol(r)=c(\loc_h, \variablevaluationfunctionpar{h}, \clockvaluationfunctionpar{h})$;
		\item\label{signal-discrete-t} for all $\Lambda_h$ in $w(\tracesymbol)$, $\signalsymbol(\eventtime(\Lambda_h))=(A, \variablevaluationfunction) \in \partSet(AP) \times \intNum^\variables$ where, for all $p \in AP$ and $n \in Int$:
		\begin{enumerate}
			\item \label{signal-discrete-t-atoms} $\proposition \in A$ if, for some $\action \in \actions$ and for some $1\leq \autindex \leq \numberOfTA$: 
			\begin{itemize}
				\item $\proposition \in L(\loc_h[k])$ and $\Lambda_h[\autindex] \in \{\_, \action^{](}\}$ holds, or
				\item $\proposition \in L(\loc_{h+1}[k])$ and $\Lambda_h[\autindex] = \action^{)[}$ holds
			\end{itemize}
			\item \label{signal-discrete-t-var-I} $\variablevaluationfunction(\variable) = \variablevaluationfunctionpar{h}(\variable)$ if one of the following conditions holds:
			\begin{itemize}
				\item there is no transition $\loc'_{h}[\autindex] \xrightarrow{\clockconstraint, \variableconstraint, \action, \resettedclocks, \varassignement} \loc_{h+1}[\autindex]$ compatible with the configuration change and such that $\variable\in U(\mu)$; 
				\item 
				 there is $1\leq \autindex\leq \numberOfTA$ and a transition $\loc'_{h}[\autindex] \xrightarrow{\clockconstraint, \variableconstraint, \action, \resettedclocks, \varassignement} \loc_{h+1}[\autindex]$---compatible with the configuration change---such that $\Lambda_h[\autindex] = \action^{](}$ and $\variable\in U(\mu)$.
			\end{itemize} 
			\item \label{signal-discrete-t-var-II} $\variablevaluationfunction(\variable) = \variablevaluationfunctionpar{h+1}(\variable)$ if there is $1\leq \autindex\leq
			\numberOfTA$ and a transition $\loc'_{h}[\autindex] \xrightarrow{\clockconstraint, \variableconstraint, \action, \resettedclocks, \varassignement} \loc_{h+1}[\autindex]$---compatible with the configuration change---such that $\Lambda_h[\autindex] = \action^{)[}$ and $\variable\in U(\mu)$ hold.

		\end{enumerate}
		
	\end{enumerate}
\end{definition}

Condition~\ref{signal-time-t} defines the correspondence between the time transitions in a trace \tracesymbol\ and the values of the signal within the left-open/right-open intervals of \signalsymbol.
In particular, any trace $\tracesymbol$ defines an infinite set of intervals $I_h$ of the form $( \eventtime(\delta_{h}),$ $\eventtime(\delta_{h})+\delta_h )$, for all $h\geq 0$.
The value of signal $M_\tracesymbol$ in every interval $I_h$ is determined
by
the propositions and variable assignments $c(\loc_h, \variablevaluationfunctionpar{h}, \clockvaluationfunctionpar{h})$ that hold in $I_h$.

Condition \ref{signal-discrete-t} handles the case of a discrete transition in $h$ and defines the value of \signalsymbol\ at time $\eventtime(e_{h})$ when a configuration change occurs (i.e., when $e_h = \Lambda_h$ holds). 
Conditions~\ref{signal-discrete-t-atoms},~\ref{signal-discrete-t-var-I} and~\ref{signal-discrete-t-var-II} define, respectively, the atomic propositions  and the value of the integer variables based on the transition $\loc'_{h}[\autindex] \xrightarrow{\clockconstraint, \variableconstraint, \action, \resettedclocks, \varassignement} \loc_{h+1}[\autindex]$ performed at time $\eventtime(\Lambda_{h})$ by automaton $\mathcal{A}_\autindex$, for every $1\leq \autindex\leq \numberOfTA$.
Condition~\ref{signal-discrete-t-atoms} specifies that $M_\tracesymbol(\eventtime(\Lambda_h))$ includes the atomic propositions  in $L(\loc_{h}[\autindex])$ if the discrete transition performed by $\mathcal{A}_\autindex$ is closed-open (i.e., $\Lambda_h[\autindex]=\action^{](}$ holds), or if no transition is taken;
otherwise, if the discrete transition is open-closed (i.e., $\Lambda_h[\autindex]=\action^{)[}$ holds), $M_\tracesymbol(\eventtime(\Lambda_h))$ includes the atomic propositions  in $L(\loc_{h+1}[k])$.
Conditions~\ref{signal-discrete-t-var-I} and~\ref{signal-discrete-t-var-II} define the value $\variablevaluationfunction(\variable)$ of variable $\variable$ at time $\eventtime(\Lambda_{h})$.
The value of $\variablevaluationfunction(\variable)$ is the same as $\variablevaluationfunctionpar{h}(\variable)$ if there is an automaton $\mathcal{A}_\autindex$ that performs a closed-open discrete transition that modifies $\variable$, or if none of the automata updates $\variable$ (condition~\ref{signal-discrete-t-var-I}).
Conversely (condition~\ref{signal-discrete-t-var-II}), the value of $\variable$ becomes $\variablevaluationfunctionpar{h+1}(\variable)$ at time $\eventtime(\Lambda_{h})$ if there is an automaton $\mathcal{A}_\autindex$ that performs a open-closed discrete transition that modifies $\variable$.

Consider a trace $\tracesymbol$ and its associated signal $M_\tracesymbol$.
$M_\tracesymbol$ is \emph{left-closed} when, for all $r \in \realNum_{\geq 0}$, if $\signalsymbol(r)=(A, \variablevaluationfunction)$ for some $A$, $\variablevaluationfunction$, then there is $\varepsilon \in \realNum_{>0}$ such that, for all $r < r' < r + \varepsilon$ it also holds $\signalsymbol(r')=(A, \variablevaluationfunction)$.
Dually, $M_\tracesymbol$ is \emph{right-closed} when, for all $r \in \realNum_{> 0}$, if $\signalsymbol(r)=(A, \variablevaluationfunction)$ for some $A$, $\variablevaluationfunction$, then there is $\varepsilon \in \realNum_{>0}$ such that, for all $r - \varepsilon < r' < r$ it also holds $\signalsymbol(r')=(A, \variablevaluationfunction)$.
$M_\tracesymbol$ is \emph{unrestricted} if there are no constraints on the value of the signal in the neighborhood of each time instant.

The shape of a signal $M_\tracesymbol$ is determined by the transitions taken by the automata of the network.
Consider, for instance, a variable \variable\ with value $2$, and a transition $t = \transtavar$ that, when it is taken, assigns value $1$ to \variable.
There are different possibilities concerning the value of variable \variable{} in the instant when $t$ is taken: if it must be $2$ (i.e., it is not yet assigned by the transition), then the corresponding signal cannot be left-closed; if it must be $1$ (i.e., it is already assigned by the transition), then the signal cannot be right-closed.
There is an obvious relation (captured by Table~\ref{tab:signalsemantics}) between a signal $M_\tracesymbol$ being right-closed, left-closed, or unrestricted, and the edges of the transitions taken in $\tracesymbol$.
Indeed, when all edges are open-closed (i.e., $\action^{)[}$), the signal is left-closed; when they are all closed-open (i.e.,  $\action^{](}$) the signal is right-closed; when they can be both, the signal is unrestricted.
Leaving signals unrestricted---which means that, when transitions are fired, the choice of whether variables are already assigned their new values or still retain their old ones is non-deterministic---is a common approach in literature that has been used in some seminal works on TA \cite{alur1996benefits}.

Then, by imposing constraints on the types of edges that the transitions of a network $\netid$ of TA can have, one can restrict the set of corresponding signals to contain only left-closed or only right-closed ones.

\begin{table}
	\caption{Definition of different types of signals based on the symbols occurring in the corresponding traces.}
	\label{tab:signalsemantics}
	\begin{tabular}{c c}
		\toprule
		\textbf{Signal} & \textbf{Discrete transitions}\\
		\midrule
		right-closed & $\alpha^{](}$ \\
		left-closed  & $\alpha^{)[}$ \\
		unrestricted & $\alpha^{](}$ or $\alpha^{)[}$ \\
		\bottomrule
	\end{tabular} 
\end{table}

\begin{definition}
\label{rem:traceset}
Let $\netid$ be a network of TA,
$l$ be a set of liveness conditions selected from Table~\ref{tab:livenesssemantic},
$s$ be the synchronization primitives (among those of Table~\ref{tab:networksemantics}) used in $\netid$,
and $\mathit{ef}$ a restriction on the types of edges for the transitions taken by $\netid$ (such as those of Table~\ref{tab:signalsemantics}).

$\mathcal{T}(\netid, l, s, \mathit{ef})$ is the set of edge-consistent traces that satisfy the liveness conditions $l$, the semantics of synchronizations $s$, and the restriction on edges $\mathit{ef}$.

In addition, $\mathcal{S}(\netid, l, s, \mathit{ef})$ is the corresponding set of signals.
\end{definition}

\begin{example}
Figure~\ref{fig:differentsemantics} shows the same trace of Fig.~\ref{fig:phi-N}
and the values of $\loc$ and $\variable$ that contribute to the definition of the signal deriving from the trace.
At the bottom, three different types of signals---as specified on the right-hand side of the figure---are drawn according to the selected restrictions on the edges.
The last signal is based on the assignment of variable $\mathtt{edge}$ in Fig.~\ref{fig:phi-N}, whereas the first two are derived by restricting the kind of discrete transitions of the traces.
Similarly to Fig.~\ref{fig:phi-N}, the value of $\mathtt{edge}$ is explicitly written by means of $]($ or $)[$.
In correspondence to the events $e_1$, $e_2$ and $e_3$ its value defines the edge of the current interval, while it is left unspecified in the other positions to avoid cluttering the figure.
\end{example}	

\begin{figure}
	\begin{tikzpicture}
	\draw node at (0,0) (a) {\small $c_0$};
	\draw node at (3,0) (b) {\small $c'_0$};
	\draw node at (3,0.8) (c) {\small $c_1$};
	\draw node at (7,0.8) (d) {\small $c'_1$};
	\draw node at (7,0) (e) {\small $c_2$};
	\draw node at (9,0) (f) {\small $c'_2$};
	\draw node at (9,0.8) (g) {\small $c_3$};
	\draw[->] (a) -- (b) node[midway,above]{};
	\draw[->] (b) -- (c) node[midway,right]{\footnotesize $e_1$};;
	\draw[->] (c) -- (d) node[midway,above]{};
	\draw[->] (d) -- (e) node[midway,left]{\footnotesize $e_2$};;
	\draw[->] (e) -- (f) node[midway,above]{};
	\draw[->] (f) -- (g) node[midway,right]{\footnotesize $e_3$};
	\draw node at (0.5,0)  {\tiny $|$};
	\draw node at (0.5,0)  {\tiny $|$};
	\draw node at (1.7,0)  {\tiny $|$};
	\draw node at (2,0)  {\tiny $|$};
	\draw node at (5,0.776)  {\tiny $|$};
	\draw node at (6,0.776)  {\tiny $|$};
	\draw node at (8.5,0)  {\tiny $|$};
	
	\pgfmathsetmacro{\ilocation}{-0.75}
	\draw[dashed] (-1,-0.5) -- (10,-0.5);
	\pgfmathsetmacro{\llocation}{-1}
	\pgfmathsetmacro{\dlocation}{-1.5}
	\pgfmathsetmacro{\xlocation}{-2.0}
	\pgfmathsetmacro{\edgelocation}{-2.0}
	\draw node at (-1,\llocation) {\small $\mathtt{l}=$};
	\draw node at (-1,\dlocation) {\small $\variable=$};
	\draw node at (0,\llocation) {\small $q_0$};
	\draw node at (0.5,\llocation) {\small $q_0$};
	\draw node at (1.7,\llocation) {\small $q_0$};
	\draw node at (2,\llocation) {\small $q_0$};
	\draw node at (3,\llocation) {\small $q_1$};
	\draw node at (5,\llocation) {\small $q_1$};
	\draw node at (6,\llocation) {\small $q_1$};
	\draw node at (7,\llocation) {\small $q_2$};
	\draw node at (8.5,\llocation) {\small $q_2$};
	\draw node at (9,\llocation) {\small $q_0$};
	\draw node at (0,\dlocation) {\small $0$};
	\draw node at (0.5,\dlocation) {\small $0$};
	\draw node at (1.7,\dlocation) {\small $0$};
	\draw node at (2,\dlocation) {\small $0$};
	\draw node at (3,\dlocation) {\small $2$};
	\draw node at (5,\dlocation) {\small $2$};
	\draw node at (6,\dlocation) {\small $2$};
	\draw node at (7,\dlocation) {\small $1$};
	\draw node at (8.5,\dlocation) {\small $1$};
	\draw node at (9,\dlocation) {\small $0$};
	\draw[dashed] (-1,-2) -- (10,-2);
	\draw node at (-1,-3.5) {\small $\mathtt{edge}=$};	
	\draw node at (0,-3)  {$[$};
	\draw node at (2.98,-3)  {$]$};
	\draw[-] (3,-3) -- (0,-3) node[midway,above]{\footnotesize $c_0$};
	\draw node at (3.02,-3)  {$($};
	\draw node at (6.98,-3)  {$]$};
	\draw[-] (7,-3) -- (3,-3) node[midway,above]{\footnotesize $c_1$};
	\draw node at (7.02,-3)  {$($};
	\draw node at (8.98,-3)  {$]$};	
	\draw[-] (9,-3) -- (7,-3) node[midway,above]{\footnotesize $c_2$};			
	\draw node[rotate=90,text width=1cm,align=center] at (10,-3.3) {\small Right\\ Closed};	
	\draw node at (0,-3.5) {$\cdot$};
	\draw node at (0.5,-3.5) {$\cdot$};
	\draw node at (1.7,-3.5) { $\cdot$};
	\draw node at (2,-3.5) {$\cdot$};
	\draw node at (3,-3.5) {$]($};
	\draw node at (5,-3.5) {$\cdot$};
	\draw node at (6,-3.5) {$\cdot$};
	\draw node at (7,-3.5) { $]($};
	\draw node at (8.5,-3.5) {$\cdot$};
	\draw node at (9,-3.5) { $]($};	
	
	\draw node at (-1,-5.5) {\small $\mathtt{edge}=$};	
	\draw node at (0,-5)  {$[$};
	\draw node at (2.98,-5)  {$)$};
	\draw[-] (3,-5) -- (0,-5) node[midway,above]{\footnotesize $c_0$};
	\draw node at (3.02,-5)  {$[$};
	\draw node at (6.98,-5)  {$)$};
	\draw[-] (7,-5) -- (3,-5) node[midway,above]{\footnotesize $c_1$};
	\draw node at (7.02,-5)  {$[$};
	\draw node at (8.98,-5)  {$)$};	
	\draw[-] (9,-5) -- (7,-5) node[midway,above]{\footnotesize $c_2$};			
	\draw node[rotate=90,text width=1cm,align=center] at (10,-5.3) {\small Left\\ Closed};	
	\draw node at (0,-5.5) {$\cdot$};
	\draw node at (0.5,-5.5) {$\cdot$};
	\draw node at (1.7,-5.5) {$\cdot$};
	\draw node at (2,-5.5) {$\cdot$};
	\draw node at (3,-5.5) {$)[$};
	\draw node at (5,-5.5) {$\cdot$};
	\draw node at (6,-5.5) {$\cdot$};
	\draw node at (7,-5.5) { $)[$};
	\draw node at (8.5,-5.5) {$\cdot$};
	\draw node at (9,-5.5) { $)[$};

	\draw node at (-1,-7.5) {\small $\mathtt{edge}=$};	
	\draw node at (0,-7)  {$[$};
	\draw node at (2.98,-7)  {$]$};
	\draw[-] (3,-7) -- (0,-7) node[midway,above]{\footnotesize $c_0$};
	\draw node at (3.02,-7)  {$($};
	\draw node at (6.98,-7)  {$)$};
	\draw[-] (7,-7) -- (3,-7) node[midway,above]{\footnotesize $c_1$};
	\draw node at (7.02,-7)  {$[$};
	\draw node at (8.98,-7)  {$]$};	
	\draw[-] (9,-7) -- (7,-7) node[midway,above]{\footnotesize $c_2$};			
	\draw node[rotate=90,text width=1cm,align=center] at (10,-7.3) {\small Unrestricted};	
	\draw node at (0,-7.5) {$\cdot$};
	\draw node at (0.5,-7.5) {$\cdot$};
	\draw node at (1.7,-7.5) {$\cdot$};
	\draw node at (2,-7.5) {$\cdot$};
	\draw node at (3,-7.5) {$]($};
	\draw node at (5,-7.5) {$\cdot$};
	\draw node at (6,-7.5) {$\cdot$};
	\draw node at (7,-7.5) { $)[$};
	\draw node at (8.5,-7.5) {$\cdot$};
	\draw node at (9,-7.5) { $]($};		
	\end{tikzpicture}
	\caption{
Illustration of the right-closed, left-open and unrestricted semantics on the trace presented in Fig.~\ref{fig:phi-N} generated by the TA
of Fig.~\ref{fig:TaWithVariableExample}.}
	\label{fig:differentsemantics}
\end{figure}

\subsection{Verification problem of networks of TA with respect to MITL formulae}
The properties of networks of TA are encoded by formulae that predicate over the values of the variables of set $Int$ and
over the atomic propositions which are labeling locations.
This section defines when a network $\netid$ of TA satisfies a MITL property $\psi$ and states the verification problem of networks of TA against a MITL formula.
Both definitions are based on the idea of selecting a (possibly proper) subset of traces of $\netid$, with respect to some selection criterion $T$.

Given a network $\netid$ of TA, a selection criterion $T$ for the traces of $\netid$ identifies a subset of traces of $\netid$.
An example of selection criterion $T$ could be ``the set of traces that correspond to right-closed signals''.
A trivial selection criterion simply identifies the set of all traces of $\netid$.
With a slight abuse of notation, in the following, $T$ indicates both the selection criterion and the set of traces that it identifies.

\begin{definition}[Satisfiability of MITL formulae over networks of TA]\label{def:MITLsatTA}
Let $\mathcal{N}$ be a network of TA, $T$ a selection criterion, and $\psi$ a MITL formula.
$\mathcal{N}$ satisfies $\psi$ restricted to  $T$ (written $\netid \models_T \psi$)
if every signal $M_\eta \in T$ is such that $M_\eta,0 \models \psi$ holds.
\end{definition}

\begin{definition}[Verification problem]\label{def:mc-problem}
Let $\mathcal{N}$ be a network of TA, $T$ be a selection criterion, and $\psi$ be a MITL formula.
The verification problem for the network of TA $\netid$ restricted to $T$ against a MITL formula $\psi$ consists in determining whether $\netid \models_T \psi$ holds.
\end{definition}

In the rest of this paper, the adopted selection criteria restrict the traces of interest to those that satisfy some liveness conditions $l$, the semantics of the synchronization primitives $s$ appearing in $\netid$, and some restriction $\mathit{ef}$ on the edges of the transitions taken.
Such a selection criterion is denoted as $T = \langle l, s, \mathit{ef} \rangle$.
In this case, the set of selected traces is $\mathcal{T}(\netid, l, s, \mathit{ef})$, and the corresponding signals are $\mathcal{S}(\netid, l, s, \mathit{ef})$.

In the following, the verification problem of Def.~\ref{def:mc-problem} is reduced to the problem of checking the satisfiability of \logic{} formula 
$\finalformula \land \Phi_{\neg\psi}$,
where $ \finalformula$ and $\Phi_{\neg\psi}$ are computed as specified in Sections~\ref{modelsignals} and ~\ref{MITLsignals}, respectively.
Section~\ref{signalcorrectness} shows the correctness of the proposed procedure.

\begin{figure}
\begin{tikzpicture}
	\draw node at (0,0) (a) {\small $c_0$};
	\draw node at (3,0) (b) {\small $c'_0$};
	\draw node at (3,0.8) (c) {\small $c_1$};
	\draw node at (7,0.8) (d) {\small $c'_1$};
	\draw node at (7,0) (e) {\small $c_2$};
	\draw node at (9,0) (f) {\small $c'_2$};
	\draw node at (9,0.8) (g) {\small $c_3$};
	\draw[->] (a) -- (b) node[midway,above]{};
	\draw[->] (b) -- (c) node[midway,right]{\footnotesize $e_1$};;
	\draw[->] (c) -- (d) node[midway,above]{};
	\draw[->] (d) -- (e) node[midway,left]{\footnotesize $e_2$};;
	\draw[->] (e) -- (f) node[midway,above]{};
	\draw[->] (f) -- (g) node[midway,right]{\footnotesize $e_3$};
	\draw node at (0.5,0)  {\tiny $|$};
	\draw node at (0.5,0)  {\tiny $|$};
	\draw node at (1.7,0)  {\tiny $|$};
	\draw node at (2,0)  {\tiny $|$};
	\draw node at (5,0.776)  {\tiny $|$};
	\draw node at (6,0.776)  {\tiny $|$};
	\draw node at (8.5,0)  {\tiny $|$};
	
	\pgfmathsetmacro{\ilocation}{-0.75}
	\draw[dashed] (-1,-0.5) -- (10,-0.5);
	\pgfmathsetmacro{\llocation}{-1}
	\pgfmathsetmacro{\dlocation}{-1.5}
	\pgfmathsetmacro{\xlocation}{-2.0}
	\pgfmathsetmacro{\edgelocation}{-2.0}
	\draw node at (-1,\llocation) {\small $\mathtt{l}=$};
	\draw node at (-1,\dlocation) {\small $\variable=$};
	\draw node at (0,\llocation) {\small $q_0$};
	\draw node at (0.5,\llocation) {\small $q_0$};
	\draw node at (1.7,\llocation) {\small $q_0$};
	\draw node at (2,\llocation) {\small $q_0$};
	\draw node at (3,\llocation) {\small $q_1$};
	\draw node at (5,\llocation) {\small $q_1$};
	\draw node at (6,\llocation) {\small $q_1$};
	\draw node at (7,\llocation) {\small $q_2$};
	\draw node at (8.5,\llocation) {\small $q_2$};
	\draw node at (9,\llocation) {\small $q_0$};
	\draw node at (0,\dlocation) {\small $0$};
	\draw node at (0.5,\dlocation) {\small $0$};
	\draw node at (1.7,\dlocation) {\small $0$};
	\draw node at (2,\dlocation) {\small $0$};
	\draw node at (3,\dlocation) {\small $2$};
	\draw node at (5,\dlocation) {\small $2$};
	\draw node at (6,\dlocation) {\small $2$};
	\draw node at (7,\dlocation) {\small $1$};
	\draw node at (8.5,\dlocation) {\small $1$};
	\draw node at (9,\dlocation) {\small $0$};
\draw node at (0,-2.5) {\small $\first{a}$};
\draw node at (0.5,-2.5) {\small $\first{a}$};
\draw node at (1.7,-2.5) {\small $\first{a}$};
\draw node at (2,-2.5) {\small $\first{a}$};
\draw node at (3,-2.5) {\small $\first{a}$};
\draw node at (0,-3.0) {\small $\rest{a}$};
\draw node at (0.5,-3.0) {\small $\rest{a}$};
\draw node at (1.7,-3.0) {\small $\rest{a}$};
\draw node at (2,-3.0) {\small $\rest{a}$};
\draw node at (9,-3.0) {\small $\rest{a}$};
\draw node at (8.5,-3.5) {\small $\first{c}$};
\draw node at (9,-3.5) {\small $\first{c}$};
\draw node at (7,-4) {\small $\rest{c}$};
\draw node at (8.5,-4) {\small $\rest{c}$};
\draw node at (5,-4.5) {\small $\first{\variable=2}$};
\draw node at (6,-4.5) {\small $\first{\variable=2}$};
\draw node at (3,-5) {\small $\rest{\variable=2}$};
\draw node at (5,-5) {\small $\rest{\variable=2}$};
\draw node at (6,-5) {\small $\rest{\variable=2}$};
\draw[thick]  (0,-6) -- (3,-6);
\draw[thick]  (3,-6.5) -- (9,-6.5);
\draw[thick]  (9,-6) -- (9.5,-6);
\draw[thick,dashed]  (9.5,-6) -- (10,-6);
\draw[fill] (0,-6) circle (0.05);
\draw[fill] (0.5,-6) circle (0.05);
\draw[fill] (1.7,-6) circle (0.05);
\draw[fill] (2,-6) circle (0.05);
\draw[fill] (3,-6) circle (0.05);
\draw[] (3,-6.5) circle (0.05);
\draw[fill] (5,-6.5) circle (0.05);
\draw[fill] (6,-6.5) circle (0.05);
\draw[fill] (7,-6.5) circle (0.05);
\draw[fill] (8.5,-6.5) circle (0.05);
\draw[fill] (9,-6.5) circle (0.05);
\draw[] (9,-6) circle (0.05);
\draw[thick]  (0,-7.5) -- (7,-7.5);
\draw[thick]  (7,-7) -- (9,-7);
\draw[thick]  (9,-7.5) -- (9.5,-7.5);
\draw[thick,dashed]  (9.5,-7.5) -- (10,-7.5);
\draw[fill] (0,-7.5) circle (0.05);
\draw[fill] (0.5,-7.5) circle (0.05);
\draw[fill] (1.7,-7.5) circle (0.05);
\draw[fill] (2,-7.5) circle (0.05);
\draw[fill] (3,-7.5) circle (0.05);
\draw[fill] (5,-7.5) circle (0.05);
\draw[fill] (6,-7.5) circle (0.05);
\draw[] (7,-7) circle (0.05);
\draw[fill] (7,-7.5) circle (0.05);
\draw[fill] (8.5,-7) circle (0.05);
\draw[fill] (9,-7) circle (0.05);
\draw[] (9,-7.5) circle (0.05);
\draw[thick]  (0,-8.5) -- (3,-8.5);
\draw[thick]  (3,-8) -- (7,-8);
\draw[thick]  (7,-8.5) -- (9.5,-8.5);
\draw[thick,dashed]  (9.5,-8.5) -- (10,-8.5);
\draw[fill] (0,-8.5) circle (0.05);
\draw[fill] (0.5,-8.5) circle (0.05);
\draw[fill] (1.7,-8.5) circle (0.05);
\draw[fill] (2,-8.5) circle (0.05);
\draw[] (3,-8) circle (0.05);
\draw[fill] (3,-8.5) circle (0.05);
\draw[fill] (5,-8) circle (0.05);
\draw[fill] (6,-8) circle (0.05);
\draw[fill] (7,-8) circle (0.05);
\draw[] (7,-8.5) circle (0.05);
\draw[fill] (8.5,-8.5) circle (0.05);
\draw[fill] (9,-8.5) circle (0.05);
\draw[dashed] (-1,-2.0) -- (10,-2.0);
\draw[dashed] (-1,-5.5) -- (10,-5.5);
\draw node at (0,-6.25) {\small $a$};
\draw node at (0.5,-6.25) {\small $a$};
\draw node at (1.7,-6.25) {\small $a$};
\draw node at (2,-6.25) {\small $a$};
\draw node at (3,-6.25) {\small $a$};
\draw node at (5,-6.25) {\small $\neg a$};
\draw node at (6,-6.25) {\small $\neg a$};
\draw node at (7,-6.25) {\small $\neg a$};
\draw node at (8.5,-6.25) {\small $\neg a$};
\draw node at (9,-6.25) {\small $\neg a$};
\draw node at (0,-7.25) {\small $\neg c$};
\draw node at (0.5,-7.25) {\small $\neg c$};
\draw node at (1.7,-7.25) {\small $\neg c$};
\draw node at (2,-7.25) {\small $\neg c$};
\draw node at (3,-7.25) {\small $\neg c$};
\draw node at (5,-7.25) {\small $\neg c$};
\draw node at (6,-7.25) {\small $\neg c$};
\draw node at (7,-7.25) {\small $\neg c$};
\draw node at (8.5,-7.25) {\small $c$};
\draw node at (9,-7.25) {\small $c$};
\draw node at (1.5,-8.25) {\footnotesize $\neg \variable=2$};
\draw node at (5,-8.25) {\footnotesize $\variable=2$};
\draw node at (8.5,-8.25) {\footnotesize $\neg \variable=2$};

\draw node at (-1,-9.5) {\small $\LTLf_{(0,1)}(c)$ };
\draw[thick]  (0,-9.5) -- (6,-9.5);
\draw[thick]  (6,-9) -- (9,-9);
\draw[thick]  (9,-9.5) -- (9.5,-9.5);
\draw[thick,dashed]  (9.5,-9.5) -- (10,-9.5);
\draw[fill] (0,-9.5) circle (0.05);
\draw[fill] (0.5,-9.5) circle (0.05);
\draw[fill] (1.7,-9.5) circle (0.05);
\draw[fill] (2,-9.5) circle (0.05);
\draw[fill] (3,-9.5) circle (0.05);
\draw[fill] (5,-9.5) circle (0.05);
\draw[fill] (6,-9.5) circle (0.05);
\draw[] (6,-9) circle (0.05);
\draw[fill] (7,-9) circle (0.05);
\draw[fill] (8.5,-9) circle (0.05);
\draw[fill] (9,-9.5) circle (0.05);
\draw[] (9,-9) circle (0.05);

\end{tikzpicture}
\caption{Relationship between a trace and the MITL signal derived by Formula $\varphi_\mathit{sig}$. }
\label{fig:traceandsignal}
\end{figure}

\subsection{\CLTLoc\ encoding of MITL signals}
\label{MITLsignals}
Bersani et al.~\cite{BRS15b} showed how to build a \CLTLoc\ formula $\Phi_{\psi}$ from a MITL formula $\psi$ such that the set $M_\psi$ of signals that are models of $\psi$ (i.e., $M_\psi = \{M | M, 0 \models \psi\}$)
is represented by the set of models of $\Phi_{\psi}$---hence, the satisfiability of $\psi$ is reduced to the satisfiability of $\Phi_{\psi}$.
Mapping a continuous-time signal $M$ to a denumerable sequence of elements is done by partitioning $\realNum_{\geq 0}$ into infinitely many bounded intervals, each one representing a portion of $M$ in which the values of propositions and integer variables do not change (except possibly in the endpoints).
In particular, let $I$ be an interval of the form $(a,b)$, with $a<b$, and $I_0, I_1, \dots$ be a denumerable set of adjacent intervals (i.e., $a_{i+1}=b_i$ holds for all $i>0$) covering $\realNum_{\geq 0}$---i.e., such that $\bigcup_{i\geq 0}(I_i \cup \{a_i\}) = \realNum_{\geq 0}$ holds, with $a_0=0$.
Every position $i$ in a \logic~model of $\Phi_\psi$ represents the ``configuration of $\psi$''---i.e., the value of all its subformulae---in interval $I_i$ and at instant $a_i$, according to the semantics of MITL.

\begin{remark}
As already remarked {in} Sec.~\ref{sec:tasemantics}, every sequence of consecutive time transitions in a trace can be replaced by an equivalent sequence of alternating time and discrete transitions, such that the total amount of elapsed time is the same as the original time transition and in all introduced discrete transitions every automaton does not perform any configuration change.
Every position of the models of $ \finalformula \land \Phi_{\neg\psi}$ represents a time instant where either the configuration of $\netid$, or the configuration of $\neg\psi$, changes, or both possibly change at the same time.
Therefore, in case $\neg\psi$ changes configuration at position $h$---i.e., one of its subformulae changes value---but $\netid$ does not, then $h$ in the trace of $\netid$ corresponds to a discrete transition $\Lambda_h$ such that $\Lambda_i[\autindex]=\_$, for all $1\leq \autindex\leq \numberOfTA$.
Figure~\ref{fig:traceandsignal},
which will be discussed more in depth in Example~\ref{ex:traceandsignal}, exemplifies this situation by showing a formula that changes value while the automaton does not take any transition.
\end{remark}

\subsection{\CLTLoc\ encoding of network signals}
\label{modelsignals}
Let $\netid$ be a network of TA,
$l$ be a set of liveness conditions  selected from Table~\ref{tab:livenesssemantic},
$s$ be the semantics of the synchronization primitives appearing in $\netid$ (selected from Table~\ref{tab:networksemantics}), and $\mathit{ef}$ be a restriction on the edges of the transitions taken by $\netid$;
this section defines the \logic\ formula $\finalformula$ representing the set of signals in $\mathcal{S}(\netid, l, s, \mathit{ef})$.
By Def.~\ref{def:signal}, every trace $\eta$ is associated with a signal $M_\eta$ that can be decomposed into the initial value $M_\eta(0)$, an infinite set of intervals $(\eventtime(\delta_0),\eventtime(\delta_1)),(\eventtime(\delta_1), \eventtime(\delta_2)),\ldots $,
defined by the time transitions $\delta_h$, and a set of time instants $\eventtime(\Lambda_h)$ corresponding to discrete transitions with symbol $\Lambda_h$, where $h\geq 0$.

According to \cite{BRS15b}, suitable \CLTLoc\ atoms can be used to represent the signal defined by the atomic propositions labeling the locations of automata and the arithmetical formulae occurring in a MITL formula.
In the following, $AF$ indicates the universe of propositions  of the form $n\sim d$, where $n$ is an integer variable and $d$ is a constant.
For every $\beta \in AP\cup AF$, the value of $\beta$ in the intervals $(\eventtime(\delta_h), \eventtime(\delta_{h+1}))$ is represented by proposition $\rest{\beta}$, called \emph{rest} of $\beta$; 
similarly, the value of $\beta$ in time instants $\eventtime(\Lambda_h)$ is represented by a proposition $\first{\beta}$, called \emph{first} of $\beta$.

Formula $\finalformula$ is built by combining formula $\Phi_\netid$ defined in Theorem~\ref{theorem:mapping2},
representing the traces $\eta$ of $\netid$, and a formula that constrains the propositions $\first{\beta}$ and $\rest{\beta}$, so that the signal $M_\eta$ is correctly defined and all the conditions in Def.~\ref{def:signal} are satisfied.
Even though a signal $M_\eta$ specifies a valuation $v_\mathit{var}$ in every time instant, {and it defines} the exact assignment for every variable $n \in Int$, formula $\finalformula$ 
only represents the signal of the formulae $n\sim d$ that appear in the MITL formula $\psi$, because the {truth of $\psi$} is determined only by the value of its subformulae.
The value of $\first{n\sim d}$ and $\rest{n\sim d}$, however, is defined in every time position $\timeposition$ of the model of $\finalformula$ by the value $\iota(i,n)$.

Formula $\finalformula$ is defined in~\eqref{eq:phi-N-signals}.
It is composed by two parts, 
formula $\Phi_{\mathcal{N}}$
and formula $\varphi_\mathit{sig}$, which maps traces to signals as defined in~\eqref{eq:phisign}.

\begin{table*}[t]
\centering
\caption{Formulae encoding the relation between the network of TA and the signal at the initial time instant and within the intervals.}
\label{tab:bindingCLTLoc}
\begin{tabular}{c  c  c  c }
\toprule
\multicolumn{2}{ c |}{ $\chi_1 \coloneqq \underset{
		\begin{subarray}{l} 
		\autindex \in [1,\numberOfTA],\\
		\proposition \in AP
		\end{subarray}}{\bigwedge}  \left( \first{\proposition} \leftrightarrow   
	\underset{
		\proposition \in L(q_{0,k})
	}{\bigvee} \loc[k]= q_{0,k} \right)$}  &
\multicolumn{2}{| c }{$\chi_2 \coloneqq \underset{(n \sim d) \in AF}{\bigwedge} \left( \first{n \sim d} \leftrightarrow n \sim d \right)$} \\
\midrule
\multicolumn{2}{c |}{$\chi_3 \coloneqq \LTLg \underset{\proposition \in AP}{\bigwedge} \left( \rest{\proposition} \leftrightarrow  \underset{
		\begin{subarray}{l}
		\autindex \in [1,\numberOfTA], 
		q \in Q_k,
		\proposition \in L(q)
		\end{subarray}
	}{\bigvee} \loc[\autindex]=q \right)$} & 
\multicolumn{2}{| c }{$\chi_4   \coloneqq  \LTLg \underset{(n \sim d) \in AF}{\bigwedge} (\rest{n \sim d} \leftrightarrow n \sim d )$}
\\
\midrule
\multicolumn{4}{c }{$\chi_5 \coloneqq 
	\LTLg \underset{\proposition \in AP}{\bigwedge} 
	\left( \LTLx (\first{p}) \leftrightarrow
\left(
\begin{gathered}
\underset{
		\begin{subarray}{l}
		\autindex \in [1,\numberOfTA], 
		q \in Q_k,
		\proposition \in L(q) 
		\end{subarray}
	}{\bigvee} \loc[\autindex]=q \land \left( \tr[\autindex] = \notr \vee (\tr[\autindex] \neq \notr \wedge  \rclosed{k}) \right)
\\ \lor \\
\underset{
		\begin{subarray}{l}
		\autindex \in [1,\numberOfTA], 
		q \in Q_k,
		\proposition \in L(q) 
		\end{subarray}
	}{\bigvee} 		\LTLx (\loc[\autindex]=q	)   \wedge  \tr[\autindex] \neq \notr \wedge  \neg \rclosed{k}\\
\end{gathered}	
	\right) \right)
	$
}\\		
\midrule
\multicolumn{4}{ c }{$\chi_6 \coloneqq 
	\LTLg \underset{(n\sim d) \in AF}{\bigwedge} \left( \LTLx (\first{n\sim d})  \leftrightarrow
 \left( 
 \begin{gathered}
 {n \sim d} \land
 \left(
 \begin{gathered}
	\neg \underset{
				\begin{subarray}{l}
         		\autindex \in [1,\numberOfTA],
				t \in T_k, n  \in U(t)
				\end{subarray}
			}{\bigvee} 	  
			 \tr[\autindex] = t
\\ \lor \\
 \underset{
		\begin{subarray}{l}
		\autindex \in [1,\numberOfTA], 
		t \in T_k, n  \in U(t)
		\end{subarray}
	}{\bigvee} 	  \tr[\autindex] = t \wedge  \rclosed{k}
  \end{gathered}
  \right)
\\ \lor \\
 \LTLx ( {n \sim d} )
 \land
 \underset{
		\begin{subarray}{l}
		\autindex \in [1,\numberOfTA], 
		t \in T_k, n  \in U(t)
		\end{subarray}
	}{\bigvee} 	  \tr[\autindex] = t \wedge \neg \rclosed{k} \\ 
\end{gathered}	
	\right)
	\right)
	$
}\\		
\bottomrule
\end{tabular}
\end{table*}

\begin{align}\label{eq:phi-N-signals}
\finalformula := 
 \mathrlap{\overbrace{\phantom{( \varphi_\mathcal{N}   \land \varphi_\mathit{l}  \land \varphi_\mathit{s}\land \varphi_{\mathit{ef}} )}}^{\Phi_{\mathcal{N}}}}
\varphi_\mathcal{N}  \land \varphi_\mathit{l} \land  \varphi_\mathit{s}  \land \varphi_{\mathit{ef}} \land  \varphi_\mathit{sig}  
\end{align}
Recall that, as defined in Formula~\eqref{f:network}, $\varphi_\mathcal{N}$ encodes the network of TA in \logic{}; in addition, $\varphi_\mathit{l}$, $\varphi_\mathit{s}$  and $\varphi_{\mathit{ef}}$ impose  the liveness conditions, synchronization mechanisms and restrictions on edges
 by means of the formulae in Tables~\ref{tab:livenessCLTLOoc},~\ref{tab:synchronizationCLTLocFormulae} and~\ref{tab:edges}.
Finally, formula $\varphi_\mathit{sig}$ is defined as follows, through the formulae of Table~\ref{tab:bindingCLTLoc}.
\begin{align}
\label{eq:phisign}
\varphi_\mathit{sig} :=  \bigwedge_{i \in [1,6]} \chi_i
\end{align}
Formulae $\chi_1$--$\chi_6$ create a mapping between the values of the atomic propositions and of the variables and the corresponding signals.
More precisely, formulae $\chi_3$ and $\chi_4$ (resp., $\chi_1$ and $\chi_2$) bind the values of the atomic propositions and of the variables to the corresponding signal within each time interval $(\eventtime(\delta_h), \eventtime(\delta_{h+1}))$ (resp., in the origin).
Formulae $\chi_5$ and $\chi_6$, instead, bind the  values of the atomic propositions and of the variables to the corresponding signals at the boundaries of the intervals---i.e., at time instants $\eventtime(\Lambda_h)$.

\begin{lemma}
\label{lemma:signalmapping}
Let $\netid$ be a network of TA,
$l$ be a set of liveness conditions  selected from Table~\ref{tab:livenesssemantic},
$s$ be the semantics of the synchronization primitives appearing in $\netid$ (selected from Table~\ref{tab:networksemantics}),
$\mathit{ef}$ be a restriction on the edges of the transitions taken by $\netid$ (from Table~\ref{tab:signalsemantics}),
and $\finalformula$ be the corresponding CLTLoc formula~\eqref{eq:phi-N-signals}.

For every edge-consistent trace $\tracesymbol$ of $\netid$ that also belongs to $\mathcal{T}(\netid, l, s, \mathit{ef})$,
and whose associated signal is \signalsymbol, there exists a model $(\pi,\sigma,\iota)$ of $\finalformula $ such that: 
\begin{enumerate}
\item
\label{cond:rest}
for every time instant $r \in \realNum_{\geq 0}$ such that $\eventtime(\delta_h) < r < \eventtime(\delta_{h+1})$, for some $h \in \naturalNum$, if $M_{\tracesymbol}(r)=(P,\variablevaluationfunction)$,
the following conditions hold:
\begin{align}
p \in P \ \ &\text{ iff } \ \ (\pi,\sigma,\iota),h \models \rest{p}\label{eq:cltloctosignal-ap} \\
v_{var}(n)\sim d \ \ &\text{ iff } \ \ (\pi,\sigma,\iota),h \models \rest{n\sim d}\label{eq:cltloctosignal-af}
\end{align}
\item
\label{cond:first}
for every time instant $r \in \realNum_{\geq 0}$ such that $r = \eventtime(\Lambda_h)$, for some $h \in \naturalNum$, if $M_{\tracesymbol}(r)=(P,\variablevaluationfunction)$,
the following conditions hold:
\begin{align}
p \in P \ \ &\text{ iff } \ \ (\pi,\sigma,\iota),h+1 \models \first{p}\label{eq:cltloctosignal-ap-first} \\
\variablevaluationfunction(n)\sim d \ \ &\text{ iff } \ \ (\pi,\sigma,\iota),h+1 \models \first{n\sim d}\label{eq:cltloctosignal-af-first}
\end{align}
\end{enumerate}
Conversely, for every model $(\pi,\sigma,\iota)$ of $\finalformula$, there exists an edge-consistent trace $\tracesymbol$ of $\netid$ that belongs to set $\mathcal{T}(\netid, l, s, \mathit{ef})$, with associated signal \signalsymbol, for which conditions \ref{cond:rest} and \ref{cond:first} hold.
\end{lemma}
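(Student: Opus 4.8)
\begin{proof-sketch}
The plan is to build on Theorem~\ref{theorem:mapping2}, which already establishes the correspondence (through $\map$ and $\map^{-1}$) between traces of $\netid$ in $\mathcal{T}(\netid,l,s,\mathit{ef})$ and models of $\Phi_\netid=\varphi_\mathcal{N}\land\varphi_\mathit{l}\land\varphi_\mathit{s}\land\varphi_{\mathit{ef}}$, and to show that the extra conjunct $\varphi_\mathit{sig}=\bigwedge_{i\in[1,6]}\chi_i$ of~\eqref{eq:phi-N-signals}--\eqref{eq:phisign} is precisely what fixes the auxiliary propositions $\first{\beta}$ and $\rest{\beta}$, for $\beta\in AP\cup AF$, so that they describe the signal $\signalsymbol$ of Definition~\ref{def:signal}. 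The pivotal observation is that every $\chi_i$ of Table~\ref{tab:bindingCLTLoc} is a (possibly globally quantified) biconditional whose right-hand side mentions only atoms already interpreted in a model of $\Phi_\netid$ --- namely the $\loc[\autindex]$, the $\tr[\autindex]$, the edge predicates $\edge[\autindex]$, and the integer variables $\variable$. Hence, given any model of $\Phi_\netid$ there is a \emph{unique} extension of $\pi$ to the fresh propositions $\first{\beta},\rest{\beta}$ that satisfies $\varphi_\mathit{sig}$ (and this extension does not affect the truth of $\Phi_\netid$, which never mentions those propositions); conversely, any model of $\finalformula$ is \emph{a fortiori} a model of $\Phi_\netid$.

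For the first half, let $\tracesymbol\in\mathcal{T}(\netid,l,s,\mathit{ef})$ with associated signal $\signalsymbol$, and take $(\pi,\sigma,\iota)\in\map(\tracesymbol)$, which by Theorem~\ref{theorem:mapping2} satisfies $\Phi_\netid$; extend $\pi$ on the $\first{\beta},\rest{\beta}$ by the values forced by $\chi_1$--$\chi_6$, so that $(\pi,\sigma,\iota)\models\finalformula$. Conditions~\ref{cond:rest}--\ref{cond:first} are then verified by unfolding the $\eventtime$-labelling: position $h$ of the model corresponds to the open interval $(\eventtime(\delta_h),\eventtime(\delta_{h+1}))=(\eventtime(\delta_h),\eventtime(\Lambda_h))$ and, via the ``$\LTLx$'' in $\chi_5,\chi_6$, position $h+1$ carries the value of the signal at the instant $\eventtime(\Lambda_h)$. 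For~\ref{cond:rest}: by condition~\ref{signal-time-t} of Definition~\ref{def:signal}, on that interval $\signalsymbol$ equals $c(\loc_h,\variablevaluationfunctionpar{h},\clockvaluationfunctionpar{h})$, whose $AP$-component is $\bigcup_{k}L(\loc_h[k])$ and whose variable component is $\variablevaluationfunctionpar{h}$; since $\map$ sets $\iota(h,\loc[k])=\loc_h[k]$ and $\iota(h,\variable)=\variablevaluationfunctionpar{h}(\variable)$, evaluating $\chi_3$ and $\chi_4$ at $h$ yields precisely~\eqref{eq:cltloctosignal-ap} and~\eqref{eq:cltloctosignal-af}. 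For~\ref{cond:first}: $\signalsymbol(\eventtime(\Lambda_h))$ is governed by condition~\ref{signal-discrete-t} of Definition~\ref{def:signal}, and one checks that the two disjuncts of $\chi_5$ (read at $h$, defining $\first{p}$ at $h+1$) mirror the two bullets of condition~\ref{signal-discrete-t-atoms}, and the case structure of $\chi_6$ mirrors conditions~\ref{signal-discrete-t-var-I}--\ref{signal-discrete-t-var-II}; here one uses that, by the definition of $\map$, $\tr[\autindex]=\notr$ iff $\Lambda_h[\autindex]=\_$, and $\edge[\autindex]\in\pi(h+1)$ iff $\Lambda_h[\autindex]$ is a closed-open edge $\action^{](}$ (marker $\ie$), the other case being the open-closed edge $\action^{)[}$ (marker $\ei$). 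Edge-consistency (Definition~\ref{rem:tracerestriction}) enters exactly here: it guarantees that when two automata update the same variable they do so with the same edge, so the two cases of $\chi_6$ cannot prescribe conflicting truth values for $\first{n\sim d}$; this gives~\eqref{eq:cltloctosignal-ap-first} and~\eqref{eq:cltloctosignal-af-first}. Formulae $\chi_1,\chi_2$ only constrain the origin, which does not appear in~\ref{cond:rest}--\ref{cond:first}, and they are trivially consistent with $\signalsymbol(0)$.

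For the converse, let $(\pi,\sigma,\iota)\models\finalformula$; it is a model of $\Phi_\netid$, so by Theorem~\ref{theorem:mapping2} the trace $\tracesymbol=\map^{-1}((\pi,\sigma,\iota))$ --- uniquely reconstructed, the delays $\delta_h$ being read off the clock copy that is not reset at step $h+1$ --- satisfies the liveness conditions $l$, the synchronization semantics $s$, and the edge restriction $\mathit{ef}$. It remains to see that $\tracesymbol$ is edge-consistent, hence lies in $\mathcal{T}(\netid,l,s,\mathit{ef})$: if at step $h$ two distinct automata $k,\bar k$ took transitions both updating $\variable$, then $\varphi_5$ of Table~\ref{tab:automaton} (through its $\phi_\varassignement$-conjuncts) forces them to assign $\variable$ the same new value $v$; were their edges different and $v$ different from the previous value $u$, the biconditional $\chi_6$ --- which must hold for every $n\sim d\in AF$ --- would force $\first{n\sim d}$ at $h+1$ to agree with both $u\sim d$ and $v\sim d$, impossible as soon as some atom of $AF$ separates $u$ from $v$ (and when none does the value is immaterial to $\signalsymbol$). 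Building $\signalsymbol$ from $\tracesymbol$ via Definition~\ref{def:signal} and reading $\chi_3$--$\chi_6$ backwards, exactly as above, yields~\ref{cond:rest}--\ref{cond:first}.

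The main obstacle is the bookkeeping of condition~\ref{cond:first}: one must line up, case by case on the value of $\edge[\autindex]$, the three-way split of $\chi_5$/$\chi_6$ (no transition; closed-open transition updating the atom/variable; open-closed transition updating it) with the $\action^{](}$/$\action^{)[}$ split in conditions~\ref{signal-discrete-t-atoms}--\ref{signal-discrete-t-var-II} of Definition~\ref{def:signal}, and make sure that edge-consistency is both \emph{used} (first half) and \emph{recovered} (second half). Everything else --- uniqueness of the extension on $\first{\beta},\rest{\beta}$, the interval-to-position alignment, and the triviality of $\chi_1,\chi_2$ at the origin --- is routine once the shorthand notation for traces and the function $\eventtime$ are unfolded.
\end{proof-sketch}
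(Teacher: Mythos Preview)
Your proposal is correct and follows essentially the same approach as the paper's proof sketch: reduce to Theorem~\ref{theorem:mapping2} for the $\Phi_\netid$-part, then match each $\chi_i$ of Table~\ref{tab:bindingCLTLoc} against the corresponding clause of Definition~\ref{def:signal} ($\chi_1,\chi_2$ with condition~\ref{signal-time-0}; $\chi_3,\chi_4$ with condition~\ref{signal-time-t}; the disjuncts of $\chi_5,\chi_6$ with conditions~\ref{signal-discrete-t-atoms}--\ref{signal-discrete-t-var-II}). Your organisation is slightly different---you fix $\first{\beta},\rest{\beta}$ by the biconditionals and then verify~\ref{cond:rest}--\ref{cond:first}, whereas the paper fixes them by~\ref{cond:rest}--\ref{cond:first} and then checks the $\chi_i$---but since the $\chi_i$ are biconditionals over atoms already determined by $\map$, these two routes coincide, as you correctly observe. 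Your explicit treatment of edge-consistency in the converse direction (arguing that conflicting edges on a common variable would make $\chi_6$ inconsistent) goes beyond what the paper spells out, which simply defers the converse to ``similar arguments \dots\ omitted for brevity''.
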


\begin{proof-sketch}
Let $\tracesymbol$ be an edge-consistent trace of $\netid$ that also belongs to set $\mathcal{T}(\netid, l, s, \mathit{ef})$.
By Theorem \ref{theorem:mapping2}, every $(\pi,\sigma,\iota)$ such that $(\pi,\sigma,\iota) \in \map(\tracesymbol)$ is a model of $\Phi_\netid$.
Formula $\Phi_\netid$ does not constrain propositions $\first{\beta}$ and $\rest{\beta}$, with $\beta \in AP\cup AF$.
Hence, it has to be proven that, if $(\pi,\sigma,\iota)$, and $M_\tracesymbol$ are also such that conditions \eqref{eq:cltloctosignal-ap}-\eqref{eq:cltloctosignal-af-first} hold, 
then $(\pi,\sigma,\iota)$ is a model of $\finalformula$.
Since $(\pi,\sigma,\iota)$ is a model for $\Phi_\netid$ (Thm.~\ref{theorem:mapping}), it is enough to show that $(\pi,\sigma,\iota)$ is a model also for $\varphi_\mathit{sig}$.
By definition, $\tracesymbol$ meets the conditions of Def.~\ref{def:signal}.

It is straightforward to show that subformulae $\chi_1$ and $\chi_2$ of $\varphi_\mathit{sig}$ hold because of condition \ref{signal-time-0} of Def.~\ref{def:signal}, since they state that in the origin of the signal predicates $\first{\beta}$ and $\rest{\beta}$, for $\beta \in AP\cup AF$, correspond to the initial configuration of $\netid$.
Similarly, subformulae $\chi_3$ and $\chi_4$ hold because of condition \ref{signal-time-t}, since they capture the fact that, in each interval $(\eventtime(\delta_{h}), \eventtime(\delta_{h+1}))$, the predicates that hold are those of position $h$ of $(\pi,\sigma,\iota)$, which derives, by mapping $\map$, from configuration $(\loc_h, \variablevaluationfunctionpar{h}, \clockvaluationfunctionpar{h})$.

Consider now formula $\chi_5$.
The first disjunct of the right-hand side states that the label $\proposition$ holds at the beginning 
of an interval $I_{h+1}$---i.e., at time instant $\eventtime(\Lambda_{h})$, for $h \geq 0$---if it held in the previous interval $I_h$ f
or an automaton $\mathcal{A}_{\autindex}$, and either $\mathcal{A}_{\autindex}$ does not take any transition (i.e., $\Lambda_{h}[\autindex] = \_$) or, if takes one transition, it does so with an $]($ edge (i.e., $\Lambda_{h}[\autindex] = \action^{](}\}$); this corresponds to the first bullet of condition \ref{signal-discrete-t-atoms} of Def.~\ref{def:signal}.
The second disjunct, instead, states that $\proposition$ holds at the beginning of an interval $I_{h+1}$ if there is an automaton $\mathcal{A}_{\autindex}$ that takes a transition, and it does so with an $)[$ edge (i.e., $\Lambda_{h}[\autindex] = \action^{)[}\}$), which corresponds to the second bullet of condition \ref{signal-discrete-t-atoms} of Def.~\ref{def:signal}.
Similarly, the first disjunct of the right-hand side of formula $\chi_6$ captures condition \ref{signal-discrete-t-var-I} of Def.~\ref{def:signal} (each disjunct in the subformula corresponds to one of the bullets of condition \ref{signal-discrete-t-var-I}), while the second disjunct captures condition \ref{signal-discrete-t-var-II}.

The second part of the statement is proven by showing that, given a model $(\pi,\sigma,\iota)$ of $\finalformula$, the corresponding trace $\tracesymbol = \map^{-1}((\pi,\sigma,\iota))$ is such that conditions \eqref{eq:cltloctosignal-ap}-\eqref{eq:cltloctosignal-af-first} hold for signal $M_\tracesymbol$.
This can be done using similar arguments as those presented in the first part of the proof, and is omitted for brevity.
\end{proof-sketch}

\subsection{Model-checking of networks of TA with respect to MITL formulae}
\label{signalcorrectness}

Lemma \ref{lemma:signalmapping} establishes a correspondence between signals derived from traces of network $\netid$ and models of formula $\finalformula$.
The models of formula $\finalformula$ include predicates of the type $\first{\beta}$ and $\rest{\beta}$, which act as a ``bridge'' with the encoding of MITL formulae that predicate over $\beta$.
The next example shows how this allows us to match MITL constraints with signals derived from networks of TA.

\begin{example}
\label{ex:traceandsignal}
Figure~\ref{fig:traceandsignal} shows the relation between the trace of Example \ref{ex:cltloc-model-trace} depicted in Fig.~\ref{fig:phi-N} and the MITL signals referring to the atomic propositions  $a$, $c$ and the subformulae $\variable = 2$ and  $\LTLf_{(0,1)} (c)$. 
It shows the assignments to $\loc$ and $\variable$, and  the atoms representing the signals of $a$, $c$ and $\variable = 2$ in correspondence to the positions where their value is true.
For instance, at position 4, $\first{a}$ and $\rest{\variable=2}$ hold, whereas $\first{c}$, $\rest{c}$, $\rest{a}$ and $\first{\variable=2}$ are false.
The  bottom part of the figure shows the signals that are built according to the value of \logic~atoms $\first{\beta}$ and $\rest{\beta}$.
The signal of each proposition is drawn on two levels: the top one represents the value true and the bottom one represents the value false.
In every position, the value of the proposition is specified by a filled circle that defines the value in the exact time instant corresponding to the position and every line between adjacent positions represents the value of the proposition in the corresponding interval. 
An empty circle at the beginning or at the end of an interval indicates that the value of the formula in the interval does not extend also to the infimum or to the supremum of the interval, respectively.
At position $6$
formula $\LTLf_{(0,1)} (c)$ changes value, because one time unit later (which corresponds to position $7$ in this example) formula $c$ becomes true; the automaton, instead, at position $6$ has the same configuration---if clock assignments are not considered---as at position $5$.

\end{example}

The next proposition shows how, given a network $\netid$ of TA, a selection criterion $T$, and a MITL property $\psi$, the problem of checking whether $\netid \models_T \psi$ holds can be reduced to that of determining the satisfiability of \logic{} formula $\finalformula \land \Phi_{\neg\psi}$.

\begin{txproposition}
\label{prop:mcsolving}
Let $\mathcal{N}$ be a network of TA, $\psi$ be a MITL formula, and $T = \langle l, s, \mathit{ef} \rangle$---where
$l$ is a set of liveness conditions  selected from Table~\ref{tab:livenesssemantic},
$s$ is the semantics of the synchronizations primitives appearing in $\netid$ (selected from Table~\ref{tab:networksemantics}),
and $\mathit{ef}$ is a restriction on the edges of the transitions taken by $\netid$ (from Table~\ref{tab:signalsemantics}).
Also, let $\finalformula$ and $\Phi_{\psi}$ be the \logic\ formulae defined in Section~\ref{modelsignals} and in Section~\ref{MITLsignals}, respectively. Then,
$\netid \models_T \psi$ holds if, and only if, $\finalformula \land \Phi_{\neg\psi}$ does not have any models.
\end{txproposition}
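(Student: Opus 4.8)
The plan is to unfold Definition~\ref{def:MITLsatTA}, pass to the contrapositive, and then glue together Lemma~\ref{lemma:signalmapping} --- which relates the signals of $\netid$ in $\mathcal{S}(\netid, l, s, \mathit{ef})$ to the models of $\finalformula$ --- with the correctness of the MITL-to-\logic\ translation of~\cite{BRS15b} --- which relates the signals satisfying $\neg\psi$ at time $0$ to the models of $\Phi_{\neg\psi}$. The two translations share the same ``bridge'' predicates $\first{\beta}$ and $\rest{\beta}$, for $\beta \in AP\cup AF$, encoding the value of a signal within, and at the boundary of, each interval of a partition of $\realNum_{\geq 0}$; this is precisely what makes the conjunction $\finalformula \land \Phi_{\neg\psi}$ meaningful.

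First I would restate the claim in contrapositive form: $\netid \not\models_T \psi$ if, and only if, $\finalformula \land \Phi_{\neg\psi}$ is satisfiable. By Definition~\ref{def:MITLsatTA}, and since $T = \langle l, s, \mathit{ef}\rangle$ identifies the set of signals $\mathcal{S}(\netid, l, s, \mathit{ef})$, the left-hand side fails exactly when there is a signal $M_\eta \in \mathcal{S}(\netid, l, s, \mathit{ef})$ with $M_\eta, 0 \models \neg\psi$. So it is enough to show that such an $M_\eta$ exists if, and only if, $\finalformula \land \Phi_{\neg\psi}$ has a model.

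For the ``if'' direction, take a model $(\pi,\sigma,\iota)$ of $\finalformula \land \Phi_{\neg\psi}$. By the converse part of Lemma~\ref{lemma:signalmapping} there is an edge-consistent trace $\tracesymbol \in \mathcal{T}(\netid, l, s, \mathit{ef})$, with associated signal $M_\tracesymbol$, such that the atoms $\first{\beta}$ and $\rest{\beta}$ of $(\pi,\sigma,\iota)$ encode $M_\tracesymbol$ (conditions~\ref{cond:rest} and~\ref{cond:first} of the lemma). Since $\Phi_{\neg\psi}$ constrains the interpretation only through those atoms and $(\pi,\sigma,\iota) \models \Phi_{\neg\psi}$, the correctness of~\cite{BRS15b} yields $M_\tracesymbol, 0 \models \neg\psi$; as $M_\tracesymbol \in \mathcal{S}(\netid, l, s, \mathit{ef})$, this is the desired witness. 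For the ``only if'' direction, start from $M_\eta \in \mathcal{S}(\netid, l, s, \mathit{ef})$, coming from a trace $\tracesymbol$, with $M_\eta, 0 \models \neg\psi$. By~\cite{BRS15b} there is a \logic\ interpretation whose $\first{\beta}/\rest{\beta}$ atoms encode $M_\eta$ over some partition fine enough that every subformula of $\neg\psi$ is constant in the interior of each interval, and which extends to a model of $\Phi_{\neg\psi}$. Because a trace of $\netid$ can always be rewritten into an equivalent one with a finer interval partition --- the stuttering observation, recalled in Section~\ref{MITLsignals}, that a time transition splits into alternating time/discrete transitions with $\_$ actions without changing the signal nor leaving $\mathcal{T}(\netid, l, s, \mathit{ef})$ --- that finer partition is also induced by some trace in $\mathcal{T}(\netid, l, s, \mathit{ef})$; applying the first part of Lemma~\ref{lemma:signalmapping} to it gives a model of $\finalformula$ that additionally satisfies $\Phi_{\neg\psi}$. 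Taking the contrapositive of both implications gives the proposition.

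The hard part will be the bookkeeping that keeps the two interval-based representations of signals compatible: $\varphi_\mathit{sig}$ partitions $\realNum_{\geq 0}$ according to the time and discrete transitions of the trace, whereas~\cite{BRS15b} needs a partition refined enough for the subformulae of $\neg\psi$, and one must verify that both $\finalformula$ and $\Phi_{\neg\psi}$ are invariant under the common refinement obtained by inserting extra discrete transitions with $\_$ actions (equivalently, extra \logic\ positions at which $\netid$ does not change configuration). Once this refinement argument is in place, the rest is routine propagation of the equivalences on the shared atoms $\first{\beta}$ and $\rest{\beta}$.
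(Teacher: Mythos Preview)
Your proposal is correct and follows essentially the same approach as the paper: both invoke Lemma~\ref{lemma:signalmapping} together with the correctness of the MITL-to-\logic\ translation of~\cite{BRS15b}, then pass through the set-theoretic equivalence between ``$\finalformula \land \Phi_{\neg\psi}$ unsatisfiable'' and ``$\mathcal{S}(\netid, l, s, \mathit{ef}) \cap M_{\neg\psi} = \emptyset$'', which is Definition~\ref{def:MITLsatTA}. Your treatment is in fact more explicit than the paper's sketch, in particular where you spell out the refinement/stuttering argument (splitting time transitions into alternating time/discrete steps with $\_$ actions) needed to align the trace-induced partition with the partition required by~\cite{BRS15b}; the paper relegates this point to a remark in Section~\ref{MITLsignals} rather than invoking it inside the proof.
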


\begin{proof-sketch}
By Lemma \ref{lemma:signalmapping} and by the results of \cite{BRS15b}, $\finalformula \land \Phi_{\neg\psi}$ admits a model if, and only if, there is $(\pi, \sigma, \iota)$ that corresponds to a signal $M_\tracesymbol$ that satisfies MITL formula $\neg \psi$, and such that trace $\tracesymbol$ belongs to $\mathcal{T}(\netid, l, s, \mathit{ef})$.
That is, $\finalformula \land \Phi_{\neg\psi}$ does not have any models if, and only if, $\mathcal{S}(\netid, l, s, \mathit{ef}) \cap M_{\neg\psi} = \emptyset$ holds.
This, in turn, is equivalent to saying that $\mathcal{S}(\netid, l, s, \mathit{ef}) \subseteq M_{\psi}$ holds, which corresponds to Def.~\ref{def:MITLsatTA}.
\end{proof-sketch}

Since there are automated tools for checking the satisfiability of \logic{} formulae \cite{BRS15b, BPR16}, Proposition~\ref{prop:mcsolving} establishes an effective technique to solve the verification problem of networks of TA with respect to MITL formulae: given a network $\netid$ of TA, a selection criterion $T = \langle l, s, \mathit{ef} \rangle$, and a MITL formula $\psi$, it is enough to build \logic{} formulae $\finalformula$ and $\Phi_{\neg\psi}$, then check the satisfiability of formula $\finalformula \land \Phi_{\neg\psi}$.

\section{Evaluation}
\label{sec:evaluation}
\newcommand{\livewaitone}{ \mathbf{live\text{-}one} }
\newcommand{\livewaittwo}{ \mathbf{live\text{-}two} }
\newcommand{\livewaitthree}{ \mathbf{live\text{-}three} }
\newcommand{\livewaitfour}{ \mathbf{live\text{-}four} }
\newcommand{\livewaitfive}{ \mathbf{live\text{-}five} }
\newcommand{\livewaitsix}{ \mathbf{live\text{-}six} }

\newcommand{\livetoken}{ \mathbf{live\text{-}token} }

\newcommand{\livecsmacd}{ \mathbf{live\text{-}csma} }

\newcommand{\livewaitall}{ \mathbf{live\text{-}all} }
\newcommand{\mlivewaitone}{ \mathbf{mlive\text{-}one} }
\newcommand{\mlivewaitall}{ \mathbf{mlive\text{-}w\text{-}all} }
\newcommand{\nolivecs}{ \mathbf{nolive\text{-}cs} }

\newcommand{\simplenolivecs}{ \mathbf{s\text{-}nolive\text{-}cs} }
\newcommand{\mutex}{ \mathbf{mutex} }

The procedure proposed in Section \ref{signalcorrectness} has been implemented in \NAME\ (Timed Automata ChecKer)\footnote{The tool is available at~\url{http://github.com/claudiomenghi/TACK}.}.
\NAME\ is a Java 8 application that takes as input a model expressed using the Uppaal input format 
 and a property expressed in MITL.
The model and the property are converted in a \logic\ formula as specified in Sect.~\ref{sec:checkingMITLI}.
The satisfiability of the \logic\ formula is verified using the \Zot\ formal verification tool~\cite{BPR16}.

To evaluate \NAME , a direct comparison with existing tools, i.e., Uppaal and \roland~\cite{kindermann2013bounded}, was not performed as such comparison would not be meaningful, for several  reasons.

(i)  Neither  Uppaal, nor \roland\ fully support MITL.
Uppaal supports a restricted subset of the TCTL logic, which allows the specification only of properties in the form:
$\forall \LTLg(e)$ (``for all executions $e$ globally holds");
$\forall \LTLf(e)$ (``for all executions $e$ eventually holds"); 
$\exists \LTLg(e)$ (``there exists an executions in which $e$ globally holds");
$\exists \LTLf(e)$ (``there exists an executions in which $e$ eventually holds")
and, finally, the so called ``leads-to'' formula which is encoded as $\forall\LTLg (e \Rightarrow \forall\LTLf (e'))$ (``in every execution it is always true that the occurrence of $e$ always makes $e'$ hold''), where $e$ and $e'$ are state formulae (i.e., expressions over state variables or automata locations).
\roland, instead, considers the  fragment MITL$_{0,\infty}$, but it does not provide any information on how the encoding can be extended to cover full MITL.
In fact, \roland\ adopts a super-dense semantics for time, requiring a suitable change of MITL semantics.
Indeed, some useful properties~\cite{HR04}\cite{MNP06} that hold for the standard MITL semantics, e.g., proving that the fragment MITL$_{0,\infty}$ has the same expressive power as full MITL, are not valid anymore over super-dense time.
Therefore, extending the logical language used in \roland\ to full MITL appears to be far from straightforward.

(ii) As mentioned above, both  Uppaal and \roland~\cite{kindermann2013bounded} adopt the super dense semantics of time.
This allows a TA to fire consecutive transitions without requiring time to progress.
Uppaal introduces the syntactic notion of ``committed locations'' to prevent time from progressing---i.e., when an automaton is in a committed location, only action transitions can be fired and time cannot advance.
The underlying notion of time adopted in this work is based on the CLTLoc semantics, which relies on the strict progress of time between adjacent positions and does not enable such modeling facility.

(iii) 
\roland\ is mainly a proof-of-concept tool that has not been further supported since 2013 and
does not support a direct implementation of synchronization events ($\alpha!$ and $\alpha?$).
The lack of suitable documentation (such as a user manual) does not allow a clear understanding of the potential offered by the tool.

To summarize, since the capabilities of Uppaal and \roland\  are significantly different from those provided by \NAME\ (especially in terms of the logic used to express the property), a direct comparison of \NAME\ with these tools is not significant.

The following features are here evaluated: 
(i) the efficiency  of \NAME\ in verifying MITL properties of TA;
(ii) how \NAME\ enables the introduction of the synchronization constructs and semantic constraints presented in Section~\ref{sec:tasemantics}. 
The ease of performing verification has been estimated
through a bounded model checking technique that relies on 
two different  solvers
available in the \Zot\ formal verification tool~\cite{BPR16}. 
Both solvers check the satisfiability of \logic\ formulae, but they are based on different techniques.
They rely on SMT (Satisfiability Modulo Theories) solvers (Microsoft Z3~\cite{de2008z3} in our case),
as the satisfiability problem of \logic\ has been tackled so far by reducing it to an SMT instance.
The first solver, \aetwozot\ ~\cite{bersani2016tool}, reduces the satisfiability problem of \logic\ formulae to that of a fragment of the first-order logic over real difference arithmetic; the second, \aetwosbvzot ~\cite{BPR16}, instead uses a Bit-Vector encoding.
The ability of \NAME\ to consider different features of TA is evaluated by selecting benchmarks that exploit different constructs such as, for example, different synchronization primitives.

Three different benchmarks are used in the tests: the Fischer mutual exclusion protocol \cite{abadi1994old}, the CSMA/CD protocol \cite{CSMACD} and the Token Ring protocol \cite{jain1994fddi}. All selected benchmarks have also been classically implemented in the Uppaal model checker~\cite{UPPAALWEB}. 
\NAME\ ran using the  \aetwozot\ and \aetwosbvzot\ solvers, version $4.7.1$ of Z3, on a machine equipped with an 
Intel(R) Core(TM) i7-4770 CPU (3.40GHz) with 8 cores, 16GB of RAM and Debian Linux (version 8.8). 
To test the scalability of the approach,
various configuration of the protocols were tested, each one determined by a different number $n$ of involved agents.
In particular, variable $n$ indicates: for the Fischer benchmark, the number of participants; 
for the CSMA/CD protocol, the number of competing stations;
for the Token Ring protocol, the number of processes.
In the tests, $n$ spans from $2$ to $10$.
Since the \logic{} solvers we used  relied on a bounded model-checking approach, we considered a bound $k$ spanning from $10$ to $30$, with increments of $5$.
For each combination of values of $n$ and $k$ we considered
a timeout of $2$ hours.

\vspace{0.2cm}
\emph{Fischer benchmark}.  
This benchmark describes  a mutual exclusion algorithm in which $n$ participants try to  enter a critical section.
Before trying to enter the critical section, a participant first checks if another one is in the critical section.
If this is not the case, it writes its (unique) identifier in a shared variable.
After waiting a certain amount of time, it checks again the shared variable.
If its identifier is still in the shared variable, it proceeds to the critical section. 
Otherwise,  it goes back to start since another process had simultaneously checked whether the critical section was empty and set the shared variable.
The synchronization among the participants is obtained through shared clocks and no synchronization on the transitions is present.

The following six properties (a subset of them was also considered in~\cite{kindermann2013bounded}) were verified.

	\begin{align}
	&\livewaitone &&:=&& \LTLg_{[0,\infty)} \left(p_1.req \LTLimplication \LTLf_{[0,\infty)} p_1.wait \right) &\nonumber\\
	&\livewaittwo &&:=&&  \LTLg_{[0,\infty)} \left(p_1.req   \LTLimplication \LTLf_{[0,3]} p_1.wait \right) & \nonumber\\
	&\livewaitthree &&:=& &\LTLg_{[0,\infty)} \left(p_1.req   \LTLimplication \LTLf_{(0,3)} p1.cs \right)	& \nonumber\\
	&\livewaitfour &&:=& &\LTLg_{[0,\infty)} \left(p_1.req \LTLimplication \LTLf_{(0,3)}  p_1.wait \right)&\nonumber\\
	&\livewaitfive &&:=& &\LTLg_{[0,\infty)} \left(p_1.req   \LTLimplication \LTLf_{[0,3]} p1.cs \right)	& \nonumber\\
	&\livewaitsix &&:=& &  \LTLg_{[0,\infty)} \left( \neg \left( \bigvee_{i=1:n-1} \left(p_i.cs \wedge \left(\bigvee_{j=i+1:n}\right) p_j.cs \right) \right)		\right)	& \nonumber
\end{align}

Properties $\livewaitone$ and $\livewaitsix$ are not metric, as $\LTLf_{[0,\infty)}$ and $\LTLg_{[0,\infty)}$ are equivalent to the LTL ``eventually'' and ``globally'' modalities.
Properties $\livewaittwo$ (resp., $\livewaitthree$)  and $\livewaitfour$ (resp., $\livewaitfive$) differ with respect to the interval specified in the  $\LTLf$ operator.

\begin{table}[htbp]
\footnotesize
\setstretch{0.6}
\center
\caption {Time (s) required to check the properties of the Fischer benchmark.
The symbol \cmark\ indicates that the property is satisfied, i.e., the \logic\	formula is unsatisfiable.
The symbol \xmark\ indicates that the property is not satisfied, i.e., the \logic\	formula is satisfiable.
}
\label{fig:scalabilityresultsFisher}
\begin{tabular}{ 
r 
r  r  r  
r  r  r  
r  r  r 
r  r  
}
\toprule
  \multicolumn{2}{c}{} &   \multicolumn{9}{c}{\NAME\ ae2zot}  \\  
  \midrule
  \multicolumn{2}{c}{}  &    \multicolumn{9}{c}{n}  \\
  \midrule
\multicolumn{1}{c}{}  & k &    \multicolumn{1}{c}{\textbf{2}} & \multicolumn{1}{c}{\textbf{3}} & \multicolumn{1}{c}{\textbf{4}} & \multicolumn{1}{c}{\textbf{5}} & \multicolumn{1}{c}{\textbf{6}} & \multicolumn{1}{c}{\textbf{7}} & \multicolumn{1}{c}{\textbf{8}} & \multicolumn{1}{c}{\textbf{9}} & \multicolumn{1}{c}{\textbf{10}}  \\
\toprule
\multirow{5}{*}{\rotatebox[origin=c]{90}{$\livewaitone$}} & 10  & 
 0.9 \cmark  &   1.2 \cmark &  0.9 \cmark  &  1.0 \cmark  &  1.3 \cmark  &  3.3 \cmark &  1.9 \cmark &  2.2 \cmark &  2.6 \cmark     \\
    \multicolumn{1}{c}{} &  15 &  0.8 \cmark &  3.1 \cmark &  4.9 \cmark & 	  1.7 \cmark &  4.8 \cmark &  2.6 \cmark &   4.7 \cmark &   14.1 \cmark &  6.0 \cmark  \\
    \multicolumn{1}{c}{} & 20  &  1.0 \cmark &  1.2 \cmark &  1.5 \cmark & 	  2.5 \cmark &  4.3 \cmark &  5.1 \cmark &   5.8 \cmark &  65.3 \cmark &  12.8 \cmark \\
    \multicolumn{1}{c}{} & 25 &  1.0 \cmark &  1.9 \cmark &  2.7 \cmark & 4.3 \cmark &  5.4 \cmark &  10.1 \cmark &   15.3 \cmark &  18.6 \cmark &  24.0 \cmark \\
     \multicolumn{1}{c}{} &  30 &  1.3 \cmark &  37.5 \cmark &  3.1 \cmark & 	  75.8 \cmark &  11.2 \cmark &  15.9 \cmark  &   20.4 \cmark &  30.1 \cmark &  3331.1  \cmark \\
  \cline{1-11}
   \hline
  \multirow{5}{*}{\rotatebox[origin=c]{90}{$\livewaittwo$}}  & 10 &    1.7  \cmark &   2.4 \cmark &   2.5  \cmark &  2.8 \cmark  &   4.6 \cmark &   4.6 \cmark &   5.5 \cmark  &   5.0 \cmark &   6.2  \cmark    \\
    \multicolumn{1}{c}{} & 15 &  7.9 \cmark &  18.3 \cmark &  28.0 \cmark  &  40.0 \cmark  &  70.0 \cmark &  131.2 \cmark &  211.9 \cmark &  350.6 \cmark &   417.0 \cmark \\
    \multicolumn{1}{c}{} & 20 &  42.0 \cmark &  96.4 \cmark &  139.9 \cmark &  514.7 \cmark &  897.0 \cmark &  1395.4 \cmark &  5837.1 \cmark &  - & -  \\
    \multicolumn{1}{c}{} & 25 &  208.8 \cmark  &  859.9 \cmark  &  813.1 \cmark   &  2026.9 \cmark  &  6770.0 \cmark   & - & - & - & -  \\
     \multicolumn{1}{c}{} & 30 &  392.6 \cmark &  793.5 \cmark &  2678.2 \cmark &  4193.3 \cmark & -  & - & - & - & -  \\
    \cline{1-11}
      \multirow{5}{*}{\rotatebox[origin=c]{90}{$\livewaitthree$}}  & 10 & 
       1.0 \xmark &  1.0 \xmark &  1.4 \xmark &  1.9\xmark  &  3.5 \xmark &   15.7 \cmark &   15.8 \cmark  &   19.2 \cmark  &   21.5 \cmark      \\
    \multicolumn{1}{c}{} &15  &   1.2 \xmark &   1.4 \xmark &   2.3 \xmark  &   5.3 \xmark &   5.9 \xmark &   4.3 \xmark &   13.2 \xmark &   27.6 \xmark &   47.3 \xmark  \\
    \multicolumn{1}{c}{} & 20 &  1.3  \xmark &   1.7 \xmark &   3.7 \xmark &   3.6 \xmark &   8.2 \xmark &   15.3 \xmark &   19.4 \xmark &  72.7  \xmark  &   46.1 \xmark \\
    \multicolumn{1}{c}{} & 25 &   2.0\xmark &   2.9 \xmark  &   4.1 \xmark &   9.6 \xmark &   13.7 \xmark &   41.8 \xmark &   41.8 \xmark &  97.6  \xmark &    97.7 \xmark \\
     \multicolumn{1}{c}{} & 30 &   2.2 \xmark &   3.0\xmark   &   7.4 \xmark &  12.4  \xmark  &   24.4 \xmark  &   41.0 \xmark &  68.3  \xmark &   35.3  \xmark &   342.2 \xmark  \\
    \cline{1-11}
      \multirow{5}{*}{\rotatebox[origin=c]{90}{$\livewaitfour$}} & 10 &   1.3 \cmark &  1.8 \cmark &  2.1 \cmark &   2.5 \cmark &  2.9 \cmark &  3.4 \cmark &  3.9 \cmark &  5.5 \cmark  &  5.8 \cmark      \\
    \multicolumn{1}{c}{} & 15  &  4.7 \cmark &  7.2 \cmark &  14.2 \cmark &  22.6 \cmark &  46.0 \cmark &  93.3 \cmark &  146.8 \cmark & 239.2 \cmark &  462.3 \cmark \\
    \multicolumn{1}{c}{} &  20 &  10.3 \cmark &  24.3 \cmark &  54.4 \cmark &  98.5 \cmark &  199.8 \cmark &  797.2 \cmark &  2526.5 \cmark & - & - \\
    \multicolumn{1}{c}{} & 25 &  48.5 \cmark &   112.9 \cmark &  190.9 \cmark &  779.3 \cmark &  1799.0 \cmark &  5437.316 \cmark & - & - & - \\
     \multicolumn{1}{c}{} & 30 &  152.9 \cmark &  255.1 \cmark &  675.5 \cmark & - & - & - & - & -  & -  \\
    \cline{1-11}
      \multirow{5}{*}{\rotatebox[origin=c]{90}{$\livewaitfive$}}  & 
      10  &  1.1 \xmark &  1.7 \xmark  &  1.9 \xmark &  4.1 \xmark &  7.7 \cmark &  27.5 \cmark &  31.0 \cmark &  39.7 \cmark &   47.6 \cmark   \\
    \multicolumn{1}{c}{} & 15 & 1.5 \xmark  &  1.8 \xmark &  3.8 \xmark &  9.1 \xmark &  30.5 \xmark &  73.2 \xmark &  42.2 \xmark &  211.7 \xmark &  544.6  \xmark \\
    \multicolumn{1}{c}{} & 20 &  2.8 \xmark &  3.2 \xmark &  6.5 \xmark &  16.9 \xmark &  50.7 \xmark  &   37.8 \xmark &  373.3 \xmark &  414.3 \xmark &  194.9 \xmark \\
    \multicolumn{1}{c}{} & 25 &  3.4 \xmark & 8.4  \xmark &  14.6 \xmark &  27.0 \xmark &  28.2 \xmark &  54.5 \xmark & 703.0 \xmark &  661.7 \xmark &  1853.5 \xmark   \\
     \multicolumn{1}{c}{} & 30 & 5.4 \xmark  &  7.5 \xmark & 15.5 \xmark   & 22.5  \xmark &  48.5 \xmark &  88.0 \xmark &  257.3 \xmark &  361.4 \xmark &   2756.2 \xmark \\
    \cline{1-11}
      \multirow{5}{*}{\rotatebox[origin=c]{90}{$\livewaitsix$}}  & 10  &     
       0.9 \cmark
      	& 
      	1.7 \cmark 
      		& 
      		 3.0 \cmark
      	  	& 
      	  	 4.0 \cmark
      	  		 & 
      	  		  8.6 \cmark
      	  		 & 
      	  		 16.7 \cmark 
      	  		 & 
      	  		  23.0 \cmark
      	  		 	& 
      	  		 	31.6 \cmark 
      	  		 	&   54.3 \cmark  \\
      \multicolumn{1}{c}{} & 15  &  
       1.2 \cmark
      	& 
      	 3.6 \cmark &
      		10.9 \cmark
      		& 	 
      		 20.5 \cmark &
      			61.8 \cmark	&
      			149.7 \cmark &
      			303.2 \cmark 		&
      		  646.5 \cmark &
      			1016.7 \cmark \\
    \multicolumn{1}{c}{} & 20 &  
     2.2 \cmark	&  
     11.6 \cmark &
     46.9 \cmark	& 
     140.5 \cmark	 & 
    	598.6 \cmark 	&	
    	963.3 \cmark &	
     2210.2 \cmark  &
     3800.0 \cmark	 &
     5849.1 \cmark  \\
    \multicolumn{1}{c}{} & 25  & 
     2.4 \cmark	&	
    23.7 \cmark 
     & 
     108.7 \cmark  &	
      588.3 \cmark &	
     	1484.2 \cmark & 
      4548.5 \cmark	 &	
     -	 &
     - &
     -  \\ 
     \multicolumn{1}{c}{} & 30 & 
      4.7 \cmark &
     	32.3 \cmark &
      217.5 \cmark	&
     	1451.9 \cmark &
     	3357.3 \cmark &
     - &
     -	 &
     -	 &
     - 	  \\
    \cline{1-11}
  \cline{1-11}
  \toprule
  \multicolumn{2}{c}{}  &   \multicolumn{9}{c}{\NAME\ ae2sbvzot}  \\  
    \midrule
  \multicolumn{2}{c}{}  &    \multicolumn{9}{c}{n}  \\
  \midrule
\multicolumn{1}{c}{}  & k &    \multicolumn{1}{c}{\textbf{2}} & \multicolumn{1}{c}{\textbf{3}} & \multicolumn{1}{c}{\textbf{4}} & \multicolumn{1}{c}{\textbf{5}} & \multicolumn{1}{c}{\textbf{6}} & \multicolumn{1}{c}{\textbf{7}} & \multicolumn{1}{c}{\textbf{8}} & \multicolumn{1}{c}{\textbf{9}} & \multicolumn{1}{c}{\textbf{10}}  \\
   \toprule
 \multirow{5}{*}{\rotatebox[origin=c]{90}{$\livewaitone$}}  & 10 &  0.7 \cmark   &  0.7 \cmark &  0.7 \cmark &  0.9 \cmark &  0.8 \cmark &  0.9 \cmark &  2.3 \cmark &  1.0 \cmark &  1.3 \cmark \\
    \multicolumn{1}{c}{} & 15 &  0.7 \cmark &  0.8 \cmark &  0.9 \cmark &  1.1 \cmark &  0.9 \cmark &  1.1 \cmark &  11.8 \cmark & 1.8  \cmark &  1.8 \cmark  \\
    \multicolumn{1}{c}{} & 20 &  0.7 \cmark &  0.8 \cmark  &  1.0 \cmark  & 1.4  \cmark  & 1.5  \cmark &  2.1 \cmark  & 2.0  \cmark  & 2.1  \cmark &  3.4 \cmark \\
    \multicolumn{1}{c}{} & 25 &  0.7 \cmark &  1.2 \cmark  &  11.2 \cmark &  79.9 \cmark &  1.9 \cmark &  2.6 \cmark & 4.5 \cmark  & 3.2  \cmark &  787.9 \cmark   \\
     \multicolumn{1}{c}{} & 30 &  0.9 \cmark &  1.2 \cmark &  33.8 \cmark  &  1.6 \cmark &  2.0 \cmark &  4.3 \cmark &  2.1 \cmark & 4.7 \cmark  & 4.8  \cmark   \\
  \cline{1-11}
   \hline
  \multirow{5}{*}{\rotatebox[origin=c]{90}{$\livewaittwo$}}  & 10 & 1.1  \cmark &  1.3  \cmark &  1.7  \cmark &   1.4 \cmark &  1.6  \cmark  &  1.5  \cmark  &   1.7 \cmark  &  1.9 \cmark  &   2.2  \cmark   \\
    \multicolumn{1}{c}{} & 15 &   3.3 \cmark &  3.7  \cmark  &  4.6  \cmark &  6.4 \cmark &   16.5 \cmark &   18.7 \cmark  &  28.6  \cmark &   44.8 \cmark &   87.2 \cmark  \\
    \multicolumn{1}{c}{} &  20 &   5.5 \cmark &  13.7  \cmark &  18.9  \cmark &  80.6  \cmark &   54.2 \cmark &   197.5 \cmark &   127.5 \cmark &   60.2 \cmark & 
    1417.3   \cmark \\
    \multicolumn{1}{c}{} & 25 &  160.5  \cmark &   28.8 \cmark &  51.5  \cmark  &   93.5 \cmark  &   33.1 \cmark &   963.4 \cmark &  2135.7  \cmark &   7200.0 \cmark &    7200.0 \cmark \\
     \multicolumn{1}{c}{} & 30 &  69.4  \cmark &  24.1  \cmark  &  136.9  \cmark  &   50.7 \cmark &  449.7  \cmark &  659.1  \cmark & 1750.6$^\ast$\cmark &  1927.6$^\ast$\cmark  & -   \\
    \cline{1-11}
      \multirow{5}{*}{\rotatebox[origin=c]{90}{$\livewaitthree$}}  & 10 &  0.8 \xmark &  1.0 \xmark  &  1.1 \xmark  &   2.5 \xmark  &   4.3 \xmark  & 7.6 \cmark  & 15.0 \cmark &  14.9 \cmark  &   16.9 \cmark    \\
    \multicolumn{1}{c}{} & 15 & 1.3 \xmark &  1.1 \xmark &  1.8 \xmark &  2.0 \xmark & 13.4$^\ast$\xmark &  20.9 \xmark &  29.4 \xmark  &  17.3 \xmark  &  23.4 \xmark  \\
    \multicolumn{1}{c}{} & 20 & 1.0 \xmark &   3.3 \xmark  &  1.8 \xmark  &  3.5 \xmark &  6.1 \xmark  &  4.4 \xmark &  6.7 \xmark &  51.5 \xmark &  87.1 \xmark \\
    \multicolumn{1}{c}{} & 25 & 1.1 \xmark &  1.7 \xmark &  3.8 \xmark &  2.7  \xmark &  11.7 \xmark &  23.7 \xmark  &   520.6 \xmark  &   388.6 \xmark  &  241.5$^\ast$ \xmark \\
     \multicolumn{1}{c}{} & 30 & 1.3 \xmark &  2.1 \xmark &   3.7 \xmark &  16.8 \xmark  & 10.2  \xmark  &  49.1 \xmark &   126.6 \xmark &   112.9 \xmark  & 181.2  \xmark   \\
    \cline{1-11}
      \multirow{5}{*}{\rotatebox[origin=c]{90}{$\livewaitfour$}}  & 10 &  0.9 \cmark &  1.0 \cmark &   1.1 \cmark &  1.2 \cmark &  1.3 \cmark &  1.5 \cmark  & 1.5  \cmark &  2.0 \cmark &  1.8 \cmark    \\
    \multicolumn{1}{c}{} & 15 &  2.1 \cmark &  4.0 \cmark  & 3.1  \cmark & 8.3  \cmark & 8.8 \cmark  &  15.8 \cmark  & 35.4  \cmark & 45.2  \cmark &  82.3 \cmark  \\
    \multicolumn{1}{c}{} & 20 & 4.0  \cmark &  6.5 \cmark  & 9.0  \cmark &  30.2 \cmark & 35.8 \cmark  &  69.6 \cmark  & 19.8  \cmark &  410.7 \cmark & 1269.1  \cmark \\
    \multicolumn{1}{c}{} & 25 &  10.9 \cmark &  16.6 \cmark &  22.0 \cmark  & 50.8  \cmark &  357.2 \cmark  &  404.5 \cmark &  2618.5 \cmark  & 329.4  \cmark  &  335.3$^\ast$\cmark   \\
     \multicolumn{1}{c}{} & 30 & 16.1  \cmark &  53.0 \cmark & 322.0 \cmark  & 310.3  \cmark &  137.1 \cmark &  4530.2 \cmark  & 80.3$^\ast$\cmark & 1611.7  \cmark &  598.9$^\ast$\cmark   \\
    \cline{1-11}
      \multirow{5}{*}{\rotatebox[origin=c]{90}{$\livewaitfive$}}  & 10 &  0.8 \xmark &  1.2 \xmark  &  2.9 \xmark & 1.3$^\ast$\cmark   &  3.1$^\ast$\cmark   &  17.9 \cmark  &  16.5 \cmark  & 17.8  \cmark  & 22.8  \cmark   \\
    \multicolumn{1}{c}{} & 15  &  0.9 \xmark &  1.3 \xmark  &  2.6 \xmark  &  2.3 \xmark & 7.5$^\ast$\xmark  &  19.3 \xmark &  25.9 \xmark  &  26.6 \xmark &  26.6$^\ast$\xmark \\
    \multicolumn{1}{c}{} & 20 &  2.6 \xmark &  1.7 \xmark  & 5.2  \xmark  &  6.7 \xmark &   4.7 \xmark & 16.7$^\ast$\xmark  & 36.2 \xmark  & 179.5$^\ast$\xmark  & 36.8  \xmark\\
    \multicolumn{1}{c}{} & 25 &  1.7 \xmark &  4.0 \xmark &  7.0 \xmark &  6.1 \xmark  & 12.1  \xmark &  7.1 \xmark &  493.8 \xmark & 378.7$^\ast$\xmark &  308.5 \xmark   \\
     \multicolumn{1}{c}{} & 30 & 1.5 \xmark  &  3.8 \xmark &  4.7 \xmark &  12.2 \xmark &  24.1 \xmark &  36.3 \xmark &  53.8 \xmark  & 57.3  \xmark  & -   \\
    \cline{1-11}
      \multirow{5}{*}{\rotatebox[origin=c]{90}{$\livewaitsix$}}  & 10 &   0.8 \cmark &  0.9 \cmark &  1.4 \cmark &   2.5 \cmark  & 4.9  \cmark & 3.3  \cmark  & 8.7  \cmark &  12.0 \cmark & 19.0  \cmark      \\
    \multicolumn{1}{c}{} & 15  & 0.8  \cmark &  1.6 \cmark &  3.4 \cmark &  5.2 \cmark &  21.1 \cmark  &  39.8 \cmark &  65.5 \cmark &  112.3 \cmark &  205.6 \cmark \\
    \multicolumn{1}{c}{} & 20 &  1.1 \cmark & 3.5  \cmark &  6.9 \cmark &  17.2\cmark & 88.2  \cmark &  170.5 \cmark & 567.4  \cmark &  1289.4 \cmark &  2521.5 \cmark \\
    \multicolumn{1}{c}{} & 25 &  1.5 \cmark & 6.4  \cmark &  25.4 \cmark &  75.1 \cmark &  22.2 \cmark  &  672.1 \cmark  & 2835.6  \cmark &  
    3416.2$^\ast$\cmark & 5861.4$^\ast$\cmark     \\
     \multicolumn{1}{c}{} & 30 &  1.9 \cmark & 11.7  \cmark &  35.1 \cmark  & 206.6  \cmark & 1161.1  \cmark &  4195.9 \cmark &  
	3945.9$^\ast$\cmark     
      &  $-$ &  $-$   \\
    \cline{1-11}
   \hline
  \cline{1-11}
\end{tabular}
\end{table}

The results are presented in Table~\ref{fig:scalabilityresultsFisher}.
The rows  show the time (in seconds) required by the model-checking procedure with different bounds $k$.
Symbol ``$-$"
indicates a timeout.
The columns contain the results obtained by considering an increasing number $n$ of  participants.
For properties $\livewaitone$, $\livewaittwo$, $\livewaitfour$ and $\livewaitsix$ \NAME\ always returned the correct result.
For properties $\livewaitthree$ and $\livewaitfive$,  when $k$ was equal to $10$,
the value of $n$ is too large, compared to the bound $k$, to allow \NAME\ to find a counterexample (i.e., to allow the underlying satisfiability procedure for \logic\ to detect a contradiction).
However, increasing the bound allows  \NAME\ to detect the counterexample.

When \aetwozot\ is used, the time required by \NAME\ to verify models increases as the values of $n$ and $k$ increase,
whereas the results obtained with \aetwosbvzot\ are less homogeneous.
Indeed, even if \aetwosbvzot\ is in general more efficient than \aetwozot, some tests carried out by \aetwosbvzot\ resulted in a timeout.
For example, this is the case of property $\livewaitthree$ with $k=15$ and $n=6$ which, conversely, has been successfully solved by \aetwozot.
To understand the reason of this result and, in particular, whether it was caused by the adopted Zot plugin, two different versions of Z3 were compared with each other.
A general variation of the performance of \NAME---even when using the same plugin---is evident, and it stems from the tactics that are used by Z3 to solve the satisfiability problem.
In fact, the experiments using version $4.4.1$ of Z3 show that \NAME\ is able to complete the verification for the cases that resulted in timeouts using version $4.4.1$, though it timed out in others.
This evidence proves that the choice of the Zot plugin does not determine the presence or absence of the timeouts.
In Table~\ref{fig:scalabilityresultsFisher}, the results that were obtained with version $4.4.1$ of Z3 instead of version $4.7.1$ are marked with the $^\ast$ symbol.

\vspace{0.2cm}
\emph{CSMA/CD Protocol.} 
The CSMA/CD protocol (Carrier Sense, Multiple-Access with Collision Detection) aims at assigning a  bus to one of $n$ competing stations.
When a station has data to send, it first listens to the bus. 
If no other station is transmitting (the bus is idle), the station begins the transmission. 
If another station is transmitting (the bus is busy), it waits a random amount of time and then repeats the previous steps.
The synchronization among the participants is obtained through a broadcast transition-based synchronization.

The following MITL property was tested. It is inspired by the one considered in the Uppaal benchmark~\cite{larsen1997uppaal}.
\begin{align}
	&\livecsmacd &&:=&&
	  \LTLg_{[0,\infty)} ( 
	  P_1.start\_send
        \rightarrow  (\neg collision\_after\_transm)) & \\    
&P_1.start\_send &&:=&&  (\neg P_1.send) \wedge  
            (P_1.send \LTLu_{(0,inf)}    \top )
        )\label{formula:aaaa}\\ 
& collision\_after\_transm &&:=&& 
               \LTLg_{(0,52]} (
                    P_1.send \wedge
                        ( 
                            P_1.send
						 \LTLu_{[0,inf)}                             
                            (P_1.send \wedge P_2.send)
                        )
                ) \label{formula:bbb}
\end{align} 
The property $\livecsmacd$ predicates on the occurrence of
a collision---i.e., $P_1$ and $P_2$ simultaneously sending a message. 
It specifies that a collision does not occur after
 $P_1$ is transmitting for  $52$ time units or more.
 Let us consider formula $P_1.start\_send$ (\ref{formula:aaaa}).
Formula  $\neg P_1.send$  specifies that $P_1$ is not sending at the current time $t$.
Formula $P_1.send \LTLu_{[(0,\infty)}    \top $ specifies that there exists a $t^{\prime}$
such that, for every $t^{\prime\prime}$ s.t. $t <t^{\prime\prime} < t^{\prime}$ holds, $P_1.send$ holds.
 Thus, formula $P_1.start\_send$ is true when $P_1$ starts sending a message.
 Let us now consider formula $collision\_after\_transm$ (\ref{formula:bbb}), which  specifies that  $P_1$ transmits for  $52$ time units or more, and then a collision is detected.
Operator $\LTLg_{(0,52]}$ forces formula $   (P_1.send) \wedge
                        ( 
                            P_1.send
						 \LTLu_{[0,\infty)}                           
                            (P_1.send \wedge P_2.send)
                        )
                    )$ to hold continuously from the current time instant, until $52$ time units from now, included.
 Since the formula must also hold at time instant $52$ from now, it forces a collision to be detected at a time instant that is after $52$ time units from now.
Furthermore, since the formula must hold in interval $(0,52]$, $P_1$ must keep sending a message within this interval.
The results of the experiments are presented in Table~\ref{fig:csmacd}.

\begin{table}[t]
\small
\center
\caption {Time (s) required to check the property of the CSMA/CD Protocol.
The symbol \cmark\ indicates that the property is satisfied, i.e., the \logic\	formula is unsatisfiable.
The symbol \xmark\ indicates that the property is not satisfied, i.e., the \logic\	formula is satisfiable.}
\label{fig:csmacd}
\begin{tabular}{ 
r  r  r  
r  r  r  
r  r  r 
r  r  
}
\toprule
  \multicolumn{2}{c}{} &   \multicolumn{9}{c}{\NAME\ ae2zot}  \\  
  \midrule
  \multicolumn{2}{c}{}  &    \multicolumn{9}{c}{$n$}  \\
  \midrule
\multicolumn{1}{c}{}  & k &    \multicolumn{1}{c}{\textbf{2}} & \multicolumn{1}{c}{\textbf{3}} & \multicolumn{1}{c}{\textbf{4}} & \multicolumn{1}{c}{\textbf{5}} & \multicolumn{1}{c}{\textbf{6}} & \multicolumn{1}{c}{\textbf{7}} & \multicolumn{1}{c}{\textbf{8}} & \multicolumn{1}{c}{\textbf{9}} & \multicolumn{1}{c}{\textbf{10}}  \\
\toprule
\multirow{5}{*}{\rotatebox[origin=c]{90}{$\livecsmacd$}}  & 10  & 2.6  \cmark &  5.4  \cmark &   5.8  \cmark &  7.1  \cmark & 9.9  \cmark &  7.6  \cmark &  11.3  \cmark &  12.3  \cmark  & 16.0  \cmark  \\
    \multicolumn{1}{c}{} & 15 & 10.5  \cmark & 19.5 \cmark  &  23.1  \cmark  &  45.8  \cmark & 64.0  \cmark &  135.4  \cmark &  123.9  \cmark &   221.5  \cmark & 453.3  \cmark \\
    \multicolumn{1}{c}{} & 20   &  20.8  \cmark & 46.5  \cmark &  97.3  \cmark &  140.8  \cmark & 409.9  \cmark &  663.5  \cmark &  1146.6  \cmark &  1102.9  \cmark & 1299.4  \cmark \\
    \multicolumn{1}{c}{} & 25 &  81.2  \cmark &  125.4  \cmark &  220.3  \cmark &  387.9  \cmark &   1278.6  \cmark &  1959.1  \cmark &  4742.7  \cmark &  2820.2  \cmark &  7184.4  \cmark   \\
     \multicolumn{1}{c}{} & 30  &  98.8  \cmark &  389.4   \cmark &  868.0  \cmark &  1500.9  \cmark &   2195.1  \cmark & - & - & - & -   \\
  \toprule
  \multicolumn{2}{c}{}  &   \multicolumn{9}{c}{\NAME\ ae2sbvzot}  \\  
    \midrule
  \multicolumn{2}{c}{}  &    \multicolumn{9}{c}{$n$}  \\
  \midrule
\multicolumn{1}{c}{}  & k &    \multicolumn{1}{c}{\textbf{2}} & \multicolumn{1}{c}{\textbf{3}} & \multicolumn{1}{c}{\textbf{4}} & \multicolumn{1}{c}{\textbf{5}} & \multicolumn{1}{c}{\textbf{6}} & \multicolumn{1}{c}{\textbf{7}} & \multicolumn{1}{c}{\textbf{8}} & \multicolumn{1}{c}{\textbf{9}} & \multicolumn{1}{c}{\textbf{10}}  \\
   \toprule
      \multirow{5}{*}{\rotatebox[origin=c]{90}{$\livecsmacd$}}  & 10  &  1.6 \cmark &  1.7 \cmark & 2.1  \cmark &  2.2 \cmark &  2.9 \cmark &   2.8 \cmark &  3.3 \cmark &  3.8 \cmark & 3.9   \cmark   \\
    \multicolumn{1}{c}{}  &  15 &   4.4 \cmark & 6.9  \cmark &  8.8 \cmark &  7.6 \cmark &   8.4 \cmark & 24.3  \cmark &  32.9 \cmark &  24.2 \cmark & 21.1  \cmark     \\
    \multicolumn{1}{c}{}  & 20   & 9.0  \cmark & 15.5 \cmark & 12.0  \cmark & 26.2  \cmark & 32.3   \cmark &   35.2 \cmark &  49.4 \cmark &  75.6 \cmark &  65.0 \cmark   \\
    \multicolumn{1}{c}{}  & 25 &   19.2 \cmark &  21.7 \cmark &  45.3 \cmark &   68.9 \cmark &   107.4 \cmark & 143.6  \cmark &  178.5  \cmark & 267.2  \cmark &  245.7 \cmark     \\
     \multicolumn{1}{c}{}  & 30  &   70.3  \cmark & 68.7  \cmark &  151.6  \cmark &  130.0 \cmark &  438.9 \cmark & 328.9   \cmark & 4441.8  \cmark & 854.2  \cmark & 6649.4  \cmark    \\
   \bottomrule
  \cline{1-11}
\end{tabular}
\end{table}

\vspace{0.2cm}
\emph{Token Ring.} The token ring benchmark considers $n$ symmetric stations that are organized in a ring, plus one process that models the ring.
The ring moves the token on a given direction among the $n$ processes.
The processes  may hand back the token in a synchronous (high-speed) or an asynchronous (low priority) fashion.
The synchronization among the participants is obtained through a channel transition-based synchronization.

The following MITL property was tested. It is inspired by---though it is not the same as---the one considered in the Uppaal benchmark~\cite{larsen1997uppaal}.
\begin{align}
&\livetoken &&:=&& 
 \LTLg_{(0,\infty)} 
 \left(
        \neg \left( \left( ST_1.zsync \lor ST_1.zasync \lor ST_1.ysync \lor  ST_1.yasync \right)
            \wedge  \right. \right. &\nonumber\\
            && & & &
     \left. \left.   \left( ST_2.zsync \lor ST_2.zasync \lor ST_2.ysync  \lor ST_2.yasync \right) \right)
    \right)
	 &\nonumber
\end{align}
Property $livetoken$ specifies that two stations $ST_1$ and $ST_2$ can not simultaneously sync---i.e., while one of them is in a synchronize state the other must be idle.
The results are presented in Table~\ref{fig:token}.

The results presented in Tables~\ref{fig:scalabilityresultsFisher},~\ref{fig:csmacd} and~\ref{fig:token} 
show that in all the cases the verification time is reasonable for practical adoptions of the proposed verification technique.
Furthermore,  the proposed technique easily allows   considering different semantics---e.g., different synchronization mechanisms---without directly changing  the verification algorithm.

\begin{table*}[t]
\small
\center
\caption {Time (s) required to check the property of the Token Ring.
The symbol \cmark\ indicates that the property is satisfied, i.e., the \logic\	formula is unsatisfiable.
The symbol \xmark\ indicates that the property is not satisfied, i.e., the \logic\	formula is satisfiable.}
\label{fig:token}
\begin{tabular}{ 
r 
r  r  r  
r  r  r  
r  r  r 
r  r  
}

\toprule
  \multicolumn{2}{c}{} &   \multicolumn{9}{c}{\NAME\ ae2zot}  \\  
  \midrule
  \multicolumn{2}{c}{}  &    \multicolumn{9}{c}{$n$}  \\
  \midrule
\multicolumn{1}{c}{}  & k &    \multicolumn{1}{c}{\textbf{2}} & \multicolumn{1}{c}{\textbf{3}} & \multicolumn{1}{c}{\textbf{4}} & \multicolumn{1}{c}{\textbf{5}} & \multicolumn{1}{c}{\textbf{6}} & \multicolumn{1}{c}{\textbf{7}} & \multicolumn{1}{c}{\textbf{8}} & \multicolumn{1}{c}{\textbf{9}} & \multicolumn{1}{c}{\textbf{10}}  \\
\toprule
\multirow{5}{*}{\rotatebox[origin=c]{90}{$\livetoken$}}  & 10 & 0.9 \cmark & 1.1 \cmark & 1.3 \cmark & 2.1 \cmark & 1.9 \cmark & 2.1 \cmark & 2.1 \cmark  & 2.3 \cmark & 2.2 \cmark    \\
    \multicolumn{1}{c}{} & 15  & 1.5 \cmark & 1.5  \cmark & 1.8 \cmark & 2.2 \cmark & 3.9 \cmark & 4.8 \cmark & 3.7 \cmark & 3.2 \cmark & 9.0 \cmark  \\
    \multicolumn{1}{c}{} & 20 & 2.2 \cmark & 2.2 \cmark & 4.8 \cmark & 3.1 \cmark & 5.0  \cmark & 10.6 \cmark & 7.1 \cmark & 18.9 \cmark & 10.4 \cmark \\
    \multicolumn{1}{c}{} & 25 & 2.7 \cmark & 5.0  \cmark & 3.7 \cmark & 5.8 \cmark  & 5.7 \cmark & 24.3 \cmark & 25.6 \cmark  & 19.6 \cmark & 58.2 \cmark   \\
     \multicolumn{1}{c}{} & 30 & 6.0 \cmark & 9.9  \cmark & 6.9 \cmark  & 17.6 \cmark  & 27.3 \cmark & 36.3 \cmark & 43.8 \cmark  & 21.3 \cmark  & 36.0 \cmark   \\
  \toprule
  \multicolumn{2}{c}{}  &   \multicolumn{9}{c}{\NAME\ ae2sbvzot}  \\  
    \midrule
  \multicolumn{2}{c}{}  &    \multicolumn{9}{c}{$n$}  \\
  \midrule
\multicolumn{1}{c}{}  & k &    \multicolumn{1}{c}{\textbf{2}} & \multicolumn{1}{c}{\textbf{3}} & \multicolumn{1}{c}{\textbf{4}} & \multicolumn{1}{c}{\textbf{5}} & \multicolumn{1}{c}{\textbf{6}} & \multicolumn{1}{c}{\textbf{7}} & \multicolumn{1}{c}{\textbf{8}} & \multicolumn{1}{c}{\textbf{9}} & \multicolumn{1}{c}{\textbf{10}}  \\
   \toprule
      \multirow{5}{*}{\rotatebox[origin=c]{90}{$\livetoken$}}  & 10 & 0.9 \cmark & 0.9 \cmark & 0.9 \cmark  & 1.0 \cmark  & 1.1 \cmark  & 1.2 \cmark  & 1.3 \cmark  & 1.5 \cmark  &  1.6 \cmark    \\
    \multicolumn{1}{c}{} & 15 & 1.2 \cmark  & 1.1  \cmark & 1.1 \cmark  & 1.1 \cmark  & 1.2 \cmark  & 1.4 \cmark  & 1.5 \cmark  & 1.7 \cmark  & 1.7  \cmark  \\
    \multicolumn{1}{c}{} & 20 & 2.1 \cmark & 1.9 \cmark & 1.8 \cmark  & 1.6 \cmark  & 1.8 \cmark  & 1.8 \cmark  & 1.8 \cmark  & 2.1 \cmark  & 2.2 \cmark \\
    \multicolumn{1}{c}{} & 25 & 2.5 \cmark & 3.7 \cmark & 3.5 \cmark  & 3.1 \cmark  & 2.3 \cmark  & 2.4 \cmark  & 2.5 \cmark  & 2.9 \cmark  & 2.9  \cmark  \\
     \multicolumn{1}{c}{} & 30 &  3.6 \cmark & 5.6 \cmark & 4.8 \cmark  & 5.3 \cmark  & 4.1 \cmark  & 3.5 \cmark  & 3.0 \cmark  & 3.2 \cmark  & 3.9 \cmark   \\
    \cline{1-11}
   \hline
  \cline{1-11}
\end{tabular}
\end{table*}

\section{Conclusion}
\label{sec:conclusion}
This paper presented a flexible approach for checking networks of TA against properties expressed in MITL.
The technique relies on an intermediate  artifact---i.e., a \logic\  formula---in which both the model and the property are encoded.
The intermediate artifact is then evaluated using suitable satisfiability checkers.
The proposed technique addresses three main challenges: 
(i) it allows considering a signal-based semantics;
(ii) it allows verifying properties expressed using the \emph{full} MITL;
(iii) it allows easily adding new TA constructs and changing their semantics (e.g., synchronization mechanisms, liveness conditions and  edge constraints).

The technique has been implemented in an open source tool called \NAME\ (Timed Automata ChecKer), which is publicly available at \url{http://github.com/claudiomenghi/TACK}.
Evaluation is performed by assessing:
(i) the efficiency  of \NAME\ in verifying MITL properties of TA;
(ii) the possibility of considering different synchronization constructs and semantic constraints. 
The intermediate artifact is evaluated 
through a bounded model checking technique that relies on 
two different  solvers
available in the \Zot\ formal verification tool~\cite{BPR16}. 
Evaluation relies on three different benchmarks that have been used to evaluate similar artifacts (e.g.,~\cite{UPPAALWEB}), namely the Fischer mutual exclusion protocol \cite{abadi1994old}, the CSMA/CD protocol \cite{CSMACD} and the Token Ring protocol \cite{jain1994fddi}.
The results 
show that the verification time is reasonable for practical adoptions of the proposed verification technique and prove that the proposed technique easily allows   considering different semantics---e.g., different synchronization mechanisms---by simply adding and removing formulae in the intermediate  \logic\ encoding.

\bibliographystyle{splncs03}
\bibliography{submission} 

\newpage
\newpage
\section{Appendix}
\label{sec:appendix}

The encoding used in the experiments is simpler than the general one presented in Fig.~\ref{tab:automaton} of Sec.~\ref{sec:TA2CLTLoc} because it is tailored only to signals whose intervals are all left-open and right-closed.
In such a case, the encoding can be simplified as the distinction between the kind of transitions is no longer needed.

\subsection{Encoding of traces for left-open right-closed signals.}
The following Fig.~\ref{tab:automaton-lorc} shows the simplified encoding of the network traces which, however, still retains the structure of the general one.
In particular, the atom $\edge$ becomes irrelevant, and then it can be removed, and the formulae $\varphi_4$ and $\varphi_5$ are modified.

\begin{figure*}[h!]
	\centering
	\small
	\begin{tabular}{| l | l | l | l | l | l |}
		\hline
		\multicolumn{2}{| c |}{
			$\varphi_{1}\coloneqq \underset{ \autindex \in [1,K]}{\bigwedge} (\loc[\autindex]=0) $
		} 
		&
		\multicolumn{2}{| c |}{
			$\varphi_{2}  \coloneqq  \underset{\variable \in \variables}{\bigwedge} \var=\variablevaluationfunction^0(\variable)$ 
		}
		&
		\multicolumn{2}{| c |}{
			$\varphi_{3} \coloneqq 
			\underset{ 
				\begin{subarray}{l}
				\autindex \in [1,K]
				\end{subarray}
			}{\bigwedge}  Inv(\loc[\autindex])  $} \\
		\hline
		\multicolumn{6}{|c|}{
			$\varphi_{4} \coloneqq 
			\underset{ 
				\begin{subarray}{l}
				\autindex \in [1,K]\\
				q \in Q_\autindex
				\end{subarray}
			}{\bigwedge} \left(  (\loc[\autindex]=q \land \tr[\autindex] = \notr) \rightarrow \LTLx( r_1(Inv(q)) \right)  $}
		\\
		\hline
		\multicolumn{6}{| c |}{
			\begin{tabular}{c}
				$
				\varphi_{5} \coloneqq  
				\underset{ 
					\begin{subarray}{l}
					\autindex \in [1,K],t \in T_k
					\end{subarray}
				}{\bigwedge}
				\tr[k] = {t}
				\LTLimplication
				\left( \loc[k] = t^- \land  \phi_{\variableconstraint}  \wedge \LTLx (\loc[k]={t^+}   \wedge   \phi_{\clockconstraint} \wedge \phi_{\varassignement} \wedge   \phi_{\resettedclocks} \land \phi_{\action^{](}}(t^-,t^+,k) \right)$ \\ 
				$\phi_{\action^{](}}(a,b,i) \coloneqq Inv(a) \land r_2(Inv_w(b))$\\ \\
			\end{tabular}
		}
		\\
		\hline
		\multicolumn{6}{|c|}{
			$\varphi_{6} \coloneqq 
			\underset{ 
				\autindex \in [1,K], 
				q,q' \in Q_\autindex \mid
				q \neq q'
			}{\bigwedge}  \left( ((\loc[\autindex]=q) \wedge \LTLx (\loc[\autindex]=q')) \rightarrow 
			\underset{
				t \in T_\autindex,
				t^- = q,
				t^+ = q'}{\bigvee} (\tr[k] = t) \right)$}\\
		\hline
		\multicolumn{3}{| c |}{$\varphi_{7} \coloneqq \underset{\clock \in \clocks}{\bigwedge} \left( \LTLx(\clock_0 = 0 \vee \clock_1 = 0) \rightarrow 
			\underset{
				\begin{subarray}{c}
				\autindex \in [1,\numberOfTA]\\
				t \in T_\autindex \mid \clock \in t_\resettedclocks
				\end{subarray}
			}{\bigvee} \tr[k] = t \right)$}
		&
		\multicolumn{3}{| c |}{$\varphi_8 \coloneqq \underset{\variable \in \variables}{\bigwedge} \left( (\neg (\variable =\LTLx (\variable))) \rightarrow \underset{
				\begin{subarray}{c}
				\autindex \in [1,\numberOfTA]\\
				t \in T_\autindex \mid \variable \in U(t)
				\end{subarray}}{\bigvee} \tr[k] = t\right)$} \\
		\hline
	\end{tabular}
	\caption{Encoding of the automaton.}
	\label{tab:automaton-lorc}
\end{figure*}

\subsection{Encoding of left-open right-closed signals.}
The following Fig.~\ref{fig:bindingCLTLoc-rightclosed} shows the simplified encoding of the signals.
The formulae $\varphi_1$ and $\varphi_2$ are the same as those in the general encoding in Fig.~\ref{tab:bindingCLTLoc}.
The definition of the signal in every interval determined by the trace is simpler because it does not depend anymore on the kind of transition performed by the automata.
For instance, if automaton $\autindex$ is in $\loc[\autindex]$ at position $h$ then the over the interval $I_h$ and in its right end-point the atomic propositions  in the signal include those associated with $\loc[\autindex]$.
A similar argument holds for the value of integer values. 

\begin{figure*}[h!]
	\centering
	\begin{tabular}{| c | c | c | c |}
		\hline
		\multicolumn{2}{| c |}{$\mu_1 \coloneqq \LTLg \underset{a \in AP}{\bigwedge} \rest{a} \leftrightarrow  \underset{
				\begin{subarray}{l}
				\autindex \in (0,K], 
				q \in Q_k,
				a \in L(q)
				\end{subarray}
			}{\bigvee} \left( \loc[\autindex]=q \right)$} & 
		\multicolumn{2}{| c |}{$\mu_2   \coloneqq  \LTLg \underset{(n \sim d) \in AF}{\bigwedge} (\rest{n \sim d} \leftrightarrow n \sim d )$}
		\\
		\hline
		\multicolumn{2}{| c |}{ $\mu_3 \coloneqq \underset{
				\begin{subarray}{l} 
				\autindex \in (0,K],\\
				a \in AP
				\end{subarray}}{\bigwedge}  \first{a} \leftrightarrow   
			\underset{
				a \in L(q_{0,k})
			}{\bigvee} l[k]= q_{0,k}$}  &
		\multicolumn{2}{| c |}{$\mu_4 \coloneqq \underset{(n \sim d) \in AF}{\bigwedge} \left( \first{n \sim d} \leftrightarrow n \sim d \right)$} \\
		\hline
		\multicolumn{2}{| c |}{ $\mu_5 \coloneqq \underset{
				\begin{subarray}{l} 
				\autindex \in (0,K],\\
				a \in AP
				\end{subarray}}{\bigwedge}  \rest{a} \rightarrow  \LTLx(\first{a})$}  &	
		\multicolumn{2}{| c |}{ $\mu_6 \coloneqq \underset{
				\begin{subarray}{l} 
				\autindex \in (0,K],\\
				(n\sim d \in AF)
				\end{subarray}}{\bigwedge}  \rest{n\sim d} \rightarrow  \LTLx(\first{n \sim d})$}\\
		\hline
	\end{tabular}
	\caption{Encoding of left-open right-closed signals.}
	\label{fig:bindingCLTLoc-rightclosed}
\end{figure*}

\end{document}